\def\red{\color{red}}
\def\blue{\color{blue}}
\def\cyan{\color{cyan}}
\theoremstyle{plain}
\newtheorem{thm}{Theorem}[section] 
\newtheorem{prop}{Proposition}[section] 
\theoremstyle{definition}
\newtheorem{defn}[thm]{Definition} 
\newcommand{\eq}[1]{(\ref{#1})}
\begin{document}


\title[title]{Central charge from adiabatic transport of cusp singularities in the quantum Hall effect}

\author{Tankut Can} 
\address{Simons Center for Geometry and Physics, Stony Brook University, Stony Brook, NY 11794, USA}
\ead{tcan@scgp.stonybrook.edu}



\begin{abstract}

We study quantum Hall (QH) states on a punctured Riemann sphere. We compute the Berry curvature under adiabatic motion in the moduli space in the large $N$ limit. The Berry curvature is shown to be finite in the large $N$ limit and controlled by the conformal dimension of the cusp singularity, a local property of the mean density. Utilizing exact sum rules obtained from a Ward identity, we show that for the Laughlin wave function, the dimension of a cusp singularity is given by the central charge, a robust geometric response coefficient in the QHE. Thus, adiabatic transport of curvature singularities can be used to determine the central charge of QH states. 
We also consider the effects of threaded fluxes and spin-deformed wave functions. Finally, we give a closed expression for all moments of the mean density in the integer QH state on a punctured disk. 


\end{abstract}

\pacs{73.43.-f, 02.40.-k, 11.25.Hf, 03.65.Vf}
\vspace{2pc}
\noindent{\it Keywords}: Adiabatic transport, fractional quantum Hall effect, moduli space, K\"ahler metrics\\ 
\submitto{\JPA}

\tableofcontents

\section{Introduction}
Since its discovery, the fractional quantum Hall effect (FQHE) \cite{Tsui1982} has served as a lamppost for exotic quantum phase of matter. From the fractional charge \cite{Laughlin1983} and statistics \cite{Arovas1984,Halperin1984} of its quasi-particle excitations, to the topological nature of its transport coefficients \cite{Niu1985}, the QHE has inspired a broad search for condensed matter systems exhibiting anyonic excitations and topological characterization. 

The QHE continues to be a wellspring of novel insights into the nature of topological states. In particular, it has recently emerged that a complete description of the QHE, and likely topological states in general, requires understanding how these states couple to the background space-time geometry \cite{Gromov2014b,Gromov2014,Abanov2014,Klevtsov2013,Klevtsov2015,Ferrari2014,Can2014,Can2015,Bradlyn2015,Read2011,Son2013,Cho2014,Klevtsov2016a}. This paper is a drop in this stream.

\subsection{Geometric response in the quantum Hall effect: a brief history}

Geometric response is not an obvious thing to study, and does not reveal itself readily in QH systems. 
First of all, the setting of nontrivial spacetime geometry is rather arcane in the context of condensed matter physics, which deals with earthbound materials for which spacetime is strictly flat. Secondly, any effect due to the geometry is generically overwhelmed by the electromagnetic properties. Heuristically, this is due to the competition of length scales, with the magnetic length $l_{B}$ typically setting the ultraviolet cut-off, and geometric effects appearing in powers of the local radius of curvature $r_{c}$, and suppressed by the small parameter $l_{B}/r_{c}$. Correlation functions in the ground state will generically have a long-wavelength expansion in terms of this small parameter. Thus, the effects of spatial curvature will be suppressed in the long-wavelength limit, which is the relevant regime to consider if we are interested in physics below the gap. Therefore, to learn about geometric response, we must pay attention to subleading effects.

The DC electromagnetic response of a QH state is determined by the filling fraction $\nu$. It was shown in Refs. \cite{Wen1992b, Frohlich1992} that on a curved surface $\Sigma$ the local electronic density is (setting $e = \hbar = 1$)

\begin{equation}\label{rhobar}
	\bar{\rho} = \frac{\nu B}{2\pi } + \frac{\mu_{H}}{4\pi} R,
\end{equation}
which introduces an additional coefficient $\mu_{H}$ that is robust on a QH plateau. Here, $B$ is the magnetic field, $R$ is the scalar curvature. The Hall conductance follows from the St\v{r}eda formula $\sigma_{H} =  \partial \bar{\rho} / \partial B = \nu / 2\pi$. 

The geometric term is subleading for large magnetic field, and introduces a kinetic coefficient for geometric transport $\mu_{H}$. In \cite{Read2011}, it was shown that the odd or Hall viscosity $\eta_{A} =  \bar{\rho} \mu_{H}/2\nu $. In \cite{Can2015}, we argued that the odd viscosity is related to the St\v{r}eda formula for the curvature response $\eta_{A} =(2\pi  \bar{\rho}/\nu)\,  \left(\partial \bar{\rho}/ \partial R\right)_{R=0} $. In the literature, $\mu_{H}$ is often referred to as the orbital spin density and denoted by $\bar{s}$. It is also clearly related to the ``shift" $\mathcal{S} = 2 \mu_{H}/\nu$ introduced in \cite{Wen1992b}, since integrating the expression \eq{rhobar} yields $N = \nu N_{\phi} + \mu_{H} \chi(\Sigma)$, where $N_{\phi}$ is the total flux in units of the flux quantum, and $\chi(\Sigma)$ is the Euler characteristic of $\Sigma$. 

The density \eq{rhobar} receives gradient corrections if the curvature is not constant. These were found for the IQH state at $\nu = 1$ and $s = 0$ in \cite{Douglas2010}.  In \cite{Can2015}, the leading correction to the density of the spin-s Laughlin state was found to be 
\begin{equation}\label{rho_grads}
	\langle \rho \rangle = \bar{\rho} + \left( \frac{c_{H}}{12} + \mu_{H}\right) \frac{1}{8\pi}l_{B}^{2} \Delta R + O(l^{4}), \,\,\, c_{H} = 1 - 12  \mu_{H}^{2}/\nu,\,\,
\end{equation}
where for the Laughlin state $\mu_{H} = (1 - 2\nu s)/2$. This correction introduces yet another geometric transport coefficient, which we refer to simply as the central charge $c_{H}$  \footnote{Our central charge $c_{H}$ is equivalent to the``apparent central charge" $c_{app}$ in \cite{Bradlyn2015}}. In \cite{Gromov2014}, local Galilean invariance was shown to imply that $c = c_{H} + 12  \mu_{H}^{2}/\nu$ is the gravitational anomaly or ``chiral central charge" which governs the thermal Hall response\footnote{In \cite{Gromov2014b,Bradlyn2015} it is pointed out that this relation between $c_{H}$ and $c$ can only true when the ``orbital spin variance" vanishes. For instance, it will fail for Jain states. A natural conjecture for $c_H$ (which is a property of a wave function) is that it is generalized to read $c_{H} = c - 12 \nu^{s}$, where $\nu^{s}$ is the orbital spin filling fraction introduced in \cite{Wen1992b}. $\nu^{s} = \nu^{-1} \mu_{H}^{2}$ only for vanishing orbital spin variance.}. The connection is indirect. Local Galilean invariance implies a Ward identity which connects the coefficient $c_{H}$ appearing in the mean density to the coefficient of the gravitational Chern-Simons term in the induced action. The latter is believed to control the thermal Hall coefficient, which is proportional to the chiral central charge \cite{Read2000, Stone2012, Gromov2016a, Gromov2014b}. For example, $c=1$ for the Laughlin and $\nu = 1$ integer QH state, $c = \nu$ for integer $\nu$-filled state. Furthermore, the odd viscosity was shown to have a finite size correction on a sphere $\eta_{A}   =  \mu_{H} B / 4\pi  - (c_{H}/24) R  / 4\pi $ \cite{Gromov2014b}. 

\paragraph{ Adiabatic Transport} Adiabatic transport and Berry curvature provide another window into the geometric response of QH states. The seminal work of Avron, Seiler and Zograf \cite{Avron1995} introduced the odd (Hall) viscosity and demonstrated that it was given by the Berry curvature in the moduli space of a torus. For the $\nu = 1$ IQH state, they computed the Berry curvature

\begin{equation}
 \Omega = -\frac{N_{\phi}}{	4} \frac{i\,  d\tau \wedge d \bar{\tau}}{2 {\rm Im}(\tau)^{2}} ,
\end{equation}
where $\tau$ is the complex structure moduli with ${\rm Im}(\tau) >0$. This result was also obtained by L\'evay in \cite{Levay1995}, who generalized it to $\nu$ filled Landau levels. Importantly, it was observed in \cite{Levay1995} that for the lowest Landau level (LLL) the Berry curvature was a K\"ahler form, and that the normalization integral of the many-electron wave function was the K\"ahler potential. In Ref. \cite{Levay1997a}, this result was further generalized to higher genus surfaces. There, a connection was made between the normalization integral and the Quillen metric, and the K\"ahler property was used to compute the Berry curvature 

\begin{equation}\label{BerryLevay}
	\Omega = - \frac{1}{12\pi} \left( 6B^{2} -6B + 1\right) \omega_{WP} + i\partial \bar{\partial} \log Z (B),
\end{equation}
where $Z(B)$ is the Selberg zeta function, and $\omega_{WP}$ is the Weil-Petersson metric on the moduli space (see Sec. \ref{modulispace}). 

The crucial observation that the Berry curvature is K\"ahler was used in \cite{Klevtsov2015} to extend L\'evay's result to the Laughlin series of fractional  quantum Hall states with integer inverse filling $\nu^{-1}$. Their result implies that for FQH states on surfaces of constant curvature $R_{0}$, \eq{BerryLevay} is generalized to read 

\begin{eqnarray}\label{BerryLevayFQH}
	\Omega
	&= \frac{1 }{\pi } \left( \nu \frac{B^{2}}{R_{0}}  + \mu_{H} B -  \frac{c_{H}}{48} R_{0} \right) \omega_{WP} + \Omega_{E}
\end{eqnarray}
where the $\Omega_{E}$ is an exact 2-form which is $O(N^{-1})$, and thus vanishes upon integration over the moduli space $\int \Omega_{E} = 0$. . The result of L\'evay \eq{BerryLevay} obtains for $\nu = 1$, $ \mu_{H} = 1/2$, $c_{H} = -2$, and for curvature $R_{0} = -2$. Furthermore, if a variation is taken over the metric while leaving the electromagnetic gauge potential fixed, the result for constant curvature Riemann surfaces is

\begin{equation}\label{KWBerry}
\Omega  = -  \left( \mu_{H}N_{\phi} - \frac{c_{H}}{6}\chi(\Sigma)\right)V \omega_{WP}
\end{equation}
This result follows by relaxing the requirement of constant magnetic field, and treating the electromagnetic gauge potential and spin connection as independent fields.  The subleading correction proportional to the central charge vanishes on the torus. On a sphere, the moduli space is a point and there is no interesting structure arising from adiabatic transport. 

In Ref. \cite{Bradlyn2015}, the Berry curvature on the space of Beltrami differentials was computed for conformal block FQH states (which includes the Laughlin state). They similarly found that the central charge appears subleading to the odd viscosity. 

In this paper, we consider the Berry curvature on the space of punctured spheres. Our main result is that the Berry curvature on this space is controlled by the central charge, which is the leading order contribution. 

\subsection{Motivation}
Motivated by the desire to pull geometric response out of the shadows, we previously \cite{Laskin2016} showed that the subleading effects of geometric response become dominant in the presence of curvature singularities. This fact should be intuitively obvious; at a curvature singularity, $r_{c} = 0$ locally, and the asymptotic expansion in the small parameter $l_{B}/r_{c}$ must necessarily be reorganized. In \cite{Laskin2015}, we argued further that curvature singularities with $R = 4\pi \alpha \delta^{(2)}(z)$ behave as coherent states which possess a localized charge, spin, and exchange statistics, given by 
\begin{equation}\label{cone_anyon}
{\rm charge}\,=\mu_{H} \alpha, \quad {\rm spin} = \frac{c_{H}}{24} \frac{\alpha (2-\alpha )}{1-\alpha}, \quad {\rm statistics} =  \frac{c_{H}}{24} \alpha_{i}\alpha_{j}
\end{equation}
Note that, unlike \eq{KWBerry},\eq{BerryLevayFQH}, the gravitational anomaly is not subleading here, but is indeed the entire effect. 

The quantum Hall effect was studied in a very similar context in \cite{Gromov2016}, where the curvature singularities served to parameterize the moduli on higher genus surfaces, thus giving rise to so-called ``genons" with non-Abelian braiding statistics. For these particular geometries, the spin-statistics theorem holds for genons, and is similarly controlled by the central charge. 

However, these works dealt exclusively with conical singularities. A significant omission was the behavior near a parabolic, or cusp singularity, which are in fact a more natural object mathematically. These are in a sense the most severe singularity you can make, since a cusp corresponds to $\alpha = 1$, which is a cone with zero opening angle. Cusp singularities do have a history in the QH literature, but have not shown up since early work in the 90's. In \cite{Pnueli1994b, Avron1992}, 
adiabatic transport was studied on punctured Riemann surfaces. These surfaces were called ``leaky" since they have finite volume, but include puncture points located at infinity. For this reason, cusps were argued to be idealized leads, thus providing a setting to study quantum scattering. In these works, the punctures were static and not allowed to move. In \cite{Levay1999b}, motivated by questions to do with hard chaos, L\'evay computed Berry curvature for a $\nu = 1$ IQH state on the moduli space of punctured Riemann surfaces. His setting is slightly different from ours, and we discuss the relation in Sec. (\ref{LargeSpin}). 

In this paper, we study the Laughlin state on a punctured Riemann sphere equipped with a complete metric with cusp singularities. We are interested in both the local properties of the state near singularities, as well as the behavior of the QH state under adiabatic motion of the singularities. We find that understanding the local properties is necessary to deduce the adiabatic transport. Our main tool for obtaining analytical results for the Laughlin state is a Ward identity. We also verify some of our results with the exactly solvable integer QH state of free fermions on a punctured disk. 

Although the study of cusp singularities is not as directly relevant experimentally as conical singularities \cite{Schine2015}, we have found that this geometry is both convenient to work with and elucidates the rich mathematical structure of quantum Hall states.


\subsection{Statement of the problem}


We place our QH state on a punctured Riemann sphere $\Sigma = \hat{\mathbb{C}}/X$, where $\hat{\mathbb{C}} = \mathbb{C} \cup \{\infty\}$ is the Riemann sphere and the divisor $X = \{p_{1},...,p_{n-3},0,1,\infty\}$ is a set of points removed from $\hat{\mathbb{C}}$. We consider a finite volume metric $ds^{2} = e^{\phi} dz d\bar{z}$  on $\Sigma$ with constant negative curvature $R_{0}$. 
In the neighborhood of a puncture, the metric looks like the Poincar\'e metric on a punctured disk, namely
\begin{eqnarray}
e^{\phi} &\sim \frac{1}{|z-p_{i}|^{2} \log^{2}|z-p_{i}|^{2}}, \quad z \to p_i, \, \, i = 1, .., n-1\\
& \sim \frac{1}{|z|^{2} \log^{2}|z|^{2}}, \quad \quad \quad \quad z \to p_n = \infty	
\end{eqnarray}
This metric has cusp or parabolic singularities on $X$, which appear as delta-function curvature singularities according to the Liouville equation for the scalar curvature 
\begin{equation}
R(z, \bar{z}) = -4 \partial_{z}\partial_{\bar{z}} \phi  =  R_{0} e^{ \phi} +4\pi  \sum_{i = 1}^{n} \delta^{(2)}(z - p_i)
\end{equation}
For constant $R_{0}$, the Gauss-Bonnet theorem implies the volume $V = - (4\pi/R_{0})(n-2)$. This metric is geodesically complete, and successfully deals with the punctures by sending them to infinity.

 The uniformization theorem guarantees the existence of a conformal map $$w(z): \Sigma \to \mathbb{H}/\Gamma$$ from the punctured sphere to a quotient of the upper half plane $\mathbb{H}$ by a discrete subgroup $\Gamma \subset PSL(2, \mathbb{R})$ of automorphisms of $\mathbb{H}$.

  The metric on $\Sigma$ is the pullback of the Poincar\'e metric $ds^{2} = y^{-2} dx dy$ on $\mathbb{H}$. The map $w(z)$ will take $\Sigma$ to a finite strip on $\mathbb{H}$, sending one puncture to infinity and the rest to the real line. The punctures will appear as cusps in the fundamental domain $\mathbb{H}/\Gamma$, and are the fixed points under the action of $\Gamma$. 

In this paper, we refer to the points $X \in \hat{\mathbb{C}}$ as punctures, and the singularities in the metric as cusps, so that for instance we would describe the metric as having cusp (or parabolic) singularities at the punctures or marked points $X$ (see Sec. \ref{Geometry}).

In this setting, we study the {\it generating functional} for the spin-$s$ Laughlin wave function \cite{Laughlin1983} on a hyperbolic sphere with fluxes $a_{i}$ piercing the punctures at $p_i$,

\begin{equation}\label{Z}
\mathcal{Z}[Q, \phi] = \int_{\Sigma}  \prod_{i<j} | z_{i} - z_{j}|^{2\beta} \prod_{i = 1}^{N} e^{ Q(z_{i}, \bar{z}_{i}) + Q_{a}(z_i, \bar{z}_i)- s \phi(z_i, \bar{z}_i)}  dV_{i},
\end{equation}
where $dV_{i} = (i/2) e^{\phi(z_i, \bar{z}_i)} dz_i \wedge d\bar{z}_{i}$ is the volume form. 
The potential $Q$ is a real-valued function whose behavior at infinity ensures convergence of the integral. In the QH setting, the potential satisfies $-\Delta Q = 2B$, where $B$ is the external magnetic field which we assume to be constant, and the Laplacian is $\Delta = 4e^{-\phi}\partial_{z} \partial_{\bar{z}} $. The potential $Q_{a}(z, \bar{z}) = \sum_{i = 1}^{n} a_{i} \log |z - p_{i}|^{2}$
represents magnetic fluxes $a_{i}$ (in units of the flux quantum $h/e = 2\pi$) threading the punctures at $p_{i}$, and can be understood as sources of singular magnetic field since $B_{a} = -(1/2)\Delta Q_{a} = -2\pi \sum_{i} a_{i} \delta(z - p_{i})$. For integer $a_{i}$ they can also be interpretted as Laughlin quasi-holes. For odd(even)-integer $\beta$, the generating functional $\mathcal{Z} = \int | \Psi_{L} |^{2} $ is the normalization integral for the fermionic (bosonic) Laughlin wave function $\Psi_{L}$. For $\beta = 1$, the wave function is a Slater determinant and the generating functional becomes the determinant of $L^{2}$ norms of single-particle wave functions, computable exactly for certain $Q$.

Finally, the deformation parameter $s$ is referred to often as spin \cite{Can2015, Ferrari2014} (also called ``$j$'' in \cite{Laskin2016,Klevtsov2015}). It is an allowed deformation of holomorphic states on a curved surface, essentially having no effect on the flat plane. However, as we see below, it affects adiabatic transport, and obviously will have a strong influence on the physics near curvature singularities. In the FQH literature, $s = 0$ is often implicitly assumed. Electrons in graphene are described by taking $s = 1/2$. In this paper, we assume real-valued $0<s \le 1$. 

The generating functional is a valuable tool in studying the QH effect for two important reasons. First, variations with respect to $Q$ generate multi-point density correlations, and allow us to study the local structure of the states. Second, $\log \mathcal{Z}$ serves as a K\"ahler potential for the Berry curvature two-form  describing the evolution of the state under adiabatic transport of the punctures $p_{i}$. These two perspectives come together when studying curvature singularities, such as cusps, to reveal the local structure they are imbued with as well as their braiding properties.

There is an important normalizability condition for FQH ground state wave functions. Fixing the total magnetic flux and the surface topology, the total number of particles must satisfy

\begin{equation}\label{Nmax}
N = N_{max} = \nu N_{\phi} + \mu_{H} (2 - 2g) - \sum_{i=1}^{n} \nu a_{i}
\end{equation}
where $N_{\phi} = \frac{1}{2\pi} \int B dV$ is the total magnetic flux and $g$ is the genus of the surface \cite{Wen1992b}. This is the integrated form of the local density in \eq{rhobar}.  For $N < N_{max}$, the mean density forms a droplet with a boundary (or many boundaries). If $N = N_{max}$, the droplet covers the entire surface and is said to be ``incompressible", i.e. lacking a boundary (see Sec. (\ref{holopsi})). The presence or absence of a droplet boundary has very important implications for the large $N$ limit of $\mathcal{Z}$ as well as all correlation functions - we avoid such complications and deal with incompressible droplets.  



Keeping the normalizability condition \eq{Nmax} in mind, we will study the large $N_{\phi}$ (equivalently large $N$) limit of the generating functional, as initiated in \cite{Zabrodin2006} and further elaborated in \cite{Can2014,Can2015, Ferrari2014} (see also the very interesting discussion of the generating functional in the IQH setting \cite{Klevtsov2013}). In particular, we consider the limit $N,N_{\phi} \to \infty$, keeping the volume fixed and the difference $N - \nu N_{\phi} = O(1)$ according to \eq{Nmax}. It turns out to be most natural to develop the asymptotic expansion in the large dimensionless parameter given by the ratio of magnetic field to scalar curvature

\begin{equation}\label{defk1}k = -\frac{4 B}{R_{0}},\end{equation}
(see \eq{defk}).  

As originally noted by Laughlin \cite{Laughlin1983}, the generating functional is equivalent to the partition function for a two-dimensional one-component (Coulomb) plasma (2DOCP)\cite{Forrester2010b}, also known as the Dyson gas \cite{Zabrodin2006}. Under this interpretation, $Q$ is the Coulomb potential created by a background charge distribution which neutralizes the charged particles at $z_{i}$, and $Q_{a}$ is the potential energy of fixed impurity charges at $p_i$. Furthermore, the filling fraction is equal to the temperature of the ensemble (in dimensionless units). The existence of a large $N$ limit is intimately tied to the fact that for high temperatures, the plasma is in a liquid screening phase. The prominent Laughlin filling fractions apparently fall in this range of temperature. Throughout we take for granted that the large $N$ limit exists. For other quantum Hall wave function which do not enjoy such a direct mapping to a 2D plasma, a generalized screening argument has been proposed \cite{Dubail2012}.

\subsection{Main Results}
The generating functional $\mathcal{Z}[p_1,...,p_{n-3}]$ is a real-valued function on the moduli space $\mathcal{M}_{0,n}$ of $n$-punctured genus-0 surfaces of complex dimension ${\rm dim}_{\mathbb{C}} \mathcal{M}_{0,n} = n - 3$. A configuration $(p_{1}, ..., p_{n-3},0,1,\infty)$, modulo the symmetric group on $n$ elements ${\rm Symm}(n)$, describes a point in $\mathcal{M}_{0,n}$. In Sec. (\ref{BerryCurvatureProof}), we prove the following:

\begin{thm}[Berry Curvature]\label{Result1} The generating functional for the spin $s> 0$ Laughlin state is a K\"ahler potential for the Berry curvature two-form on $\mathcal{M}_{0,n}$, which for constant magnetic field and zero threaded flux $a_{i}$ has the large $k$ expansion
\begin{equation}\label{Bcurvature}
 \Omega = i \partial_{\bar{p}} \partial_{p}  \log \mathcal{Z}[\{p_{i}\}] = -\frac{c_{H}}{12\pi } \left( \omega_{WP} - \frac{4\pi^{2}}{3} \omega_{TZ}\right)  + O(k^{-1}),
\end{equation}
%
where $c_{H} = 1 - 12 \mu_{H}^{2}/\nu$ is the central charge \eq{rho_grads}. Here, $\partial_{p} = \sum_{i = 1}^{n-3}dp_{i} \partial_{p_{i}}$ is the Dolbealt operator, $\omega_{WP}$ is the Weil-Petersson (W-P) metric, and $\omega_{TZ}$ is the Takhtajan-Zograf (T-Z) metric.
\end{thm}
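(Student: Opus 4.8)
The plan is to establish the result in two movements: first to show that the Berry curvature is a K\"ahler form whose potential is $\log \mathcal{Z}$, and then to extract the large-$k$ asymptotics of $\log \mathcal{Z}$, reducing its moduli dependence to the classical Liouville action and invoking the Takhtajan--Zograf theorem. For the K\"ahler step I would write the adiabatic Berry connection $\mathcal{A} = i \langle \Psi | d \Psi \rangle$ on $\mathcal{M}_{0,n}$, with $d = \partial_{p} + \partial_{\bar{p}}$. The integrand in \eq{Z} is holomorphic in the particle coordinates, so the (unnormalized) Laughlin state can be exhibited as a holomorphic section of a line bundle over moduli space, up to the real moduli-dependent normalization $\mathcal{Z}^{1/2}$. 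After stripping this normalization and choosing a gauge in which the section varies holomorphically in the $p_{i}$, one has $\partial_{\bar{p}}$ annihilating the section, so the curvature $\Omega = d\mathcal{A}$ collapses to $i \partial_{\bar{p}} \partial_{p} \log \langle \Psi | \Psi \rangle = i \partial_{\bar{p}} \partial_{p} \log \mathcal{Z}$, as asserted. The delicate point here is that the uniformizing metric $\phi$ depends real-analytically, not holomorphically, on the punctures, so the holomorphic factorization must be checked carefully by tracking the combined dependence through $\phi$, through $Q_{a}$, and through the volume form; the hypothesis $s > 0$ is precisely what renders the relevant prefactors integrable near the cusps and makes this factorization well defined.

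With the K\"ahler property in hand, I would turn to the large-$N$ (equivalently large-$k$) expansion of $\log \mathcal{Z}$ afforded by the Dyson-gas/plasma representation. The expansion in the parameter $k = -4B/R_{0}$ of \eq{defk1} organizes into extensive geometric terms: a leading self-energy piece of order $k^{2}$ and an odd-viscosity piece of order $k$ with coefficient set by $\mu_{H}$. The central observation is that these two leading pieces are integrals of $B^{2}$ and $BR$ over $\Sigma$, and since the metric has constant curvature $R_{0}$ and fixed volume (by Gauss--Bonnet), these integrals are topological functions of $N$, $N_{\phi}$, $V$, and $\chi$, hence constant on $\mathcal{M}_{0,n}$ and annihilated by $\partial_{\bar{p}} \partial_{p}$. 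The first genuinely moduli-dependent contribution is therefore the gravitational-anomaly term, whose coefficient is the central charge $c_{H}$ of \eq{rho_grads}; I would show that, up to moduli-independent pieces and corrections of order $k^{-1}$, this term is proportional to the regularized classical Liouville action $S_{L}[\phi]$ for the cusped hyperbolic metric, with prefactor fixed to be $c_{H}/12$. Here the Ward identity enters as the essential tool: it yields exact sum rules for the mean density near each puncture, fixing the conformal dimension of the cusp and thereby pinning both the coefficient of $S_{L}$ and the claim that the $\mu_{H}$ term is indeed topological.

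The final step is to differentiate. I would invoke the theorem that on $\mathcal{M}_{0,n}$ the classical Liouville action is a K\"ahler potential for the combination $\omega_{WP} - \tfrac{4\pi^{2}}{3}\,\omega_{TZ}$, so that $i \partial_{\bar{p}} \partial_{p} S_{L}$ is proportional to exactly this combination of the Weil--Petersson and Takhtajan--Zograf metrics. Substituting the coefficient extracted in the previous step, and fixing the overall normalization against the known constant-curvature result \eq{BerryLevayFQH} as a consistency check, produces $\Omega = -\tfrac{c_{H}}{12\pi}\bigl(\omega_{WP} - \tfrac{4\pi^{2}}{3}\omega_{TZ}\bigr) + O(k^{-1})$.

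I expect the main obstacle to be the middle step, namely controlling the large-$N$ expansion in the presence of cusp singularities and isolating the moduli-dependent anomaly term with its exact coefficient. The cusps render the Liouville action divergent and demand a careful regularization, and it is precisely the regularized cusp contributions that generate the $\omega_{TZ}$ piece; establishing that the dominant odd-viscosity term drops out is itself subtle, since on a smooth surface that term is the leading effect, and only the concentration of curvature and cusp ``charge'' at fixed punctures with fixed total renders it a topological quantity here. Pinning the coefficient to $c_{H}/12$ rather than merely showing proportionality is where the local analysis of the density via the Ward identity does the decisive work.
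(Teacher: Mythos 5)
Your overall architecture (K\"ahler property of the Berry curvature, large-$k$ expansion, reduction to a Liouville-type action, differentiation on moduli space) matches the paper's, and your first and last steps are essentially sound in outline. The middle step, however, contains two genuine gaps.

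First, your mechanism for discarding the $O(k^{2})$ and $O(k)$ terms is wrong. You argue that these terms are integrals of $B^{2}$ and $BR$ over $\Sigma$ and hence topological because $B$ and $R_{0}$ are constant. But the leading terms of $\log\mathcal{Z}$ are not $\int B^{2}dV$ and $\int BR\,dV$; they are non-topological functionals of the metric (essentially the Liouville action weighted by $\nu B^{2}/R_{0}$ and $\mu_{H}B$), and your argument would apply verbatim to smooth constant-curvature surfaces, where \eq{BerryLevayFQH} shows that these orders demonstrably survive and dominate the Berry curvature. The actual reason they drop out here is cusp-specific: in the variational formula the term $\bar{\rho}\int_{\Sigma}\partial_{p_{i}}W\,dV$ reduces, via identities such as $\partial_{\bar z}\left(\partial_z K\,\partial_{p_i}K\right) = 2(\partial_{p_i}K)e^{\phi}$, to contour integrals around the punctures and at infinity, and these vanish because $\partial_{p_{i}}K$ and $\partial_{z}K\,\partial_{p_{i}}K$ are suppressed by $\log^{-1}|z-p_{i}|$ near each cusp. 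Without this (or an equivalent) argument, the claim that $\Omega$ starts at $O(1)$ in $k$ is unsupported, and it is precisely the ``remarkable'' part of the theorem.

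Second, the theorem you invoke in the final step is misstated. The regularized Liouville action $S_{L}$ is a K\"ahler potential for $\omega_{WP}$ alone (via $\partial_{p_i}S_L = -2\pi\gamma_i$); the combination $\omega_{WP} - \frac{4\pi^{2}}{3}\omega_{TZ}$ is generated by the modified action $\mathscr{S}_{L} = S_{L} - \pi\log H$ of \eq{SLmod}, where $H$ is built from the first Fourier coefficients of the Klein map at each cusp. If the anomaly term of $\log\mathcal{Z}$ were proportional to $S_{L}$ you would obtain no $\omega_{TZ}$ at all. The $-\pi\log H$ piece is produced by the local analysis at the cusps: in the paper it arises from the dilatation sum rule $\mathcal{L}_{0}$ together with the terms $M_{1}(p_{j})\,\partial_{p_{i}}\log H_{j}$ generated by the Ahlfors expansion of $\partial_{p_i}W$. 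Your remark that the regularized cusp contributions generate the $\omega_{TZ}$ piece gestures at this, but a proof must show that the Ward identity delivers both the coefficient $-\frac{c_{H}}{12}\gamma_{i}$ of the Polyakov variation and the coefficient $\frac{c_{H}}{24}$ of $\partial_{p_{i}}\log H_{j}$, with the same $c_{H}$, before the two can be assembled into $-\frac{c_{H}}{24\pi}\,\partial_{p_{i}}\mathscr{S}_{L}$. Note also that the paper computes $\partial_{p_{i}}\log\mathcal{Z}$ directly from the density one-point function rather than first constructing the expansion of $\log\mathcal{Z}$ itself; your route can be made to work, but only once the two points above are repaired.
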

We recall the definitions of the W-P and T-Z metrics in Sec. (\ref{modulispace}). A remarkable property of this formula is that there are no contributions of order $k^{2}$ or $k$, as one might naively expect from the large $k$ expansion of the generating functional in \cite{Zabrodin2006, Can2015}, as well as the corresponding result on smooth higher genus surfaces in \cite{Levay1997a}. We discuss the discrepancy in more detail in Sec. (\ref{LargeSpin}). The formula for the generating functional $\mathcal{Z}$ is given in Eq. \eq{gen_fun_full}. The special combination of W-P and T-Z metrics appearing in \eq{Bcurvature} is related to Quillen's determinant of the Laplacian on one-forms \cite{Takhtajan1991}. The connection to Quillen's metric was also pointed out in \cite{Levay1997a,Levay1999b,Klevtsov2015a}.

In Sec.(\ref{BerryCurvatureProof}), we show that this result is modified for non-zero fluxes $a_{i}$ by the addition of the curvature form
\begin{equation}
	\Omega_{a} = - \sum_{i<j} 4 \nu a_{i}a_{j} \left( \delta(p_i - p_j) \frac{i}{2} dp_{i} \wedge d\bar{p}_{j}\right)  - \frac{8\pi}{3} \sum_{i} h_{a_{i}} \omega_{TZ, i},
\end{equation}
where $\omega_{TZ,i}$ is the T-Z metric of cusp $p_i$, and $h_{a_i} = a_i (2\mu_{H} - \nu a_{i})/2$ is the conformal dimension of the Laughlin quasi-hole \cite{Can2015}.

For $n = 4$, the total integrated Berry curvature, or Chern number, becomes a combination of the volumes of the W-P \eq{volumeWP} and T-Z metrics \eq{volumeTZ}, and we find

\begin{equation}\label{cohom}
	\frac{1}{2\pi} \int \Omega  = \frac{c_{H} }{8} .
\end{equation}
Thus the central charge is a topological invariant \cite{Klevtsov2015,Bradlyn2015} and the Berry curvature belongs to a rational cohomology class \cite{Klevtsov2015a,Wolpert2005}.

The first variation of the generating functional with respect to the potential $Q$ gives the mean particle density

\begin{equation}
	\langle \rho(z, \bar{z}) \rangle = e^{ - \phi} \frac{\delta \log \mathcal{Z}}{\delta Q(z, \bar{z})},
\end{equation}
where $\rho(z) = \sum_{i = 1}^{N} \delta(z - z_{i}) $ is the microscopic particle density. In the large $N$ limit, the density approaches the mean value away from the punctures

\begin{equation}
	\langle \rho(z, \bar{z}) \rangle = \frac{\nu}{2\pi l_{B}^{2}} + \frac{\mu_{H}}{4\pi} R + O(l_{B}^{2}), \quad |z - p_{i} | >> l_{B} = B^{-1/2}.
\end{equation}
Near the punctures, the large $N$ expansion becomes highly singular. This occurs even though the density is perfectly smooth for all finite $N$. If we insist on keeping the large $N$ limit, we must then treat the density as a distribution. To this end, we look at {\it moments} and {\it sum rules} of the density around a singularity. 

\begin{defn}[Moments and Sum Rules]
Let $D_{\epsilon}(p_{i})$ be an open disk of radius $\epsilon$ centered on the puncture $p_{i}$. The moments of the mean density around a puncture are defined

\begin{equation}\label{moments}
	M_{n}(p_i) = \lim_{\epsilon \to 0} \lim_{l_{B} \to 0}\frac{\Gamma(k+2s-1)}{\Gamma(k+2s - 1-n)} \int_{D_{\epsilon}(p_i)}  \frac{(-1)^{n}}{\log^{n}|z-p_i|^{2}} \left( \langle \rho \rangle - \bar{\rho}\right) dV_{z}.
\end{equation}
The sum rules involve the function $W = Q + (1 - s) \phi$, and are defined 

\begin{equation}\label{sumrules}
	\mathcal{L}_{n}(p_i) = \lim_{\epsilon \to 0} \lim_{l_{B} \to 0}\int_{D_{\epsilon}(p_i)}  (z - p_i)^{n+1} \partial_{z} W(z) \left( \langle \rho \rangle - \bar{\rho}\right) dV_{z}.
\end{equation}
In particular, the {\bf charge} of the cusp singularity at $p_{i}$ is given by the zeroth moment $M_{0}$, while the {\bf conformal dimension}, or simply dimension, of the singularity is given by the first moment $M_{1}$. 

The definition of the sum rules is motivated by their connection to the Ward identity discussed in Sec.(\ref{WardIdentity}), and bears a resemblance to Virasoro constraints. In particular, we will single out the {\bf translation} $\mathcal{L}_{-1}$ and the {\bf dilatation} $\mathcal{L}_{0}$ sum rules. 
\end{defn}

The order of the limits in these definitions requires comment. Taking magnetic length to zero (equivalently $N \to \infty$) ensures that we remove finite size effects and extract the intrinsic features of the QH state near the singularity. The second limit ensures that we recover only the {\it intensive} piece which does not depend on the integration domain. 

The meaning of the charge is clear - it is simply the total number of electrons which accumulates at the cusp. The conformal dimension requires further motivation which we provide below and in later sections. For now, we briefly comment that it involves the expectation of the generator of dilatations $\ell_{0} = \sum_{i} z_{i} \partial_{z_{i}}$ in coordinates for which the cusp is at the origin. The subtraction by $\bar{\rho}$ and the $\epsilon \to 0$ limit essentially extract the scaling behavior of the singularity as in \cite{Laskin2016}.

We now state the main results, proved in Sec. (\ref{WardIdentity}):

\begin{prop}\label{momentsprop}
For the the Laughlin state, the charge and dimension of a cusp singularity threaded by a flux $a$ are given by  
\begin{eqnarray}\label{momentsFQH}
 {\rm Charge:} \quad &M_{0} = \mu_{H}   - \nu a,\label{charge} \\
  {\rm Dimension:}\quad  &M_{1} = \frac{c_{H}}{24}+ \frac{a}{2} (2 \mu_{H} - \nu a).\label{dimension}
\end{eqnarray}

	
\end{prop}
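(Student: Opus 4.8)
The plan is to derive both results from a single Ward identity expressing the reparametrization invariance of the generating functional \eq{Z}. Concretely, I would perform the change of integration variables $z_i \to z_i + \varepsilon\, v(z_i)$ for a meromorphic vector field $v$ and demand that $\mathcal{Z}$ be unchanged to first order in $\varepsilon$. Collecting the holomorphic variations of the Vandermonde factor, of the single-particle weight $e^{Q+Q_a-s\phi}$, and of the volume form $e^{\phi}\,dV$ (whose Jacobian supplies the term $\partial_z v$), one obtains the exact identity
\begin{equation}
\Big\langle \sum_i \partial_z v(z_i) + \sum_i v(z_i)\,\partial_z\big(W + Q_a\big)(z_i) + \beta \sum_{i\neq j}\frac{v(z_i)}{z_i-z_j}\Big\rangle = 0 ,
\end{equation}
where $W = Q + (1-s)\phi$ is precisely the function appearing in the sum rules \eq{sumrules}. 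In the large $N$ (large $k$) limit the pair term factorizes into the mean density plus a subleading connected piece, so that every term may be re-expressed through $\langle\rho\rangle$; the screening property of the plasma is what guarantees this factorization is controlled.

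I would then specialize to $v(z) = (z-p_i)^{n+1}$ and localize the identity to a disk $D_\epsilon(p_i)$, subtracting the bulk value $\bar\rho$ exactly as in the definitions \eq{moments}, \eq{sumrules}. The force term $\sum_i v\,\partial_z W$ produces the sum rule $\mathcal{L}_n(p_i)$; the anomaly term $\sum_i \partial_z v = (n+1)\sum_i (z_i-p_i)^n$ together with the pair term produces moment-type integrals; and the flux term $\sum_i v\,\partial_z Q_a$, using $\partial_z Q_a = \sum_k a_k/(z-p_k)$, contributes near the cusp only through the self-flux $a=a_i$ (the $k\neq i$ poles are regular there), which is the origin of the $a$-dependence in \eq{dimension}.

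The charge \eq{charge} then follows from the $n=-1$ (translation) identity together with neutrality, equivalently from applying the normalizability condition \eq{Nmax} locally: the curvature delta function at the cusp contributes $\mu_H$ and each unit of threaded flux expels $\nu a$ electrons, giving $M_0 = \mu_H - \nu a$. For the dimension I would use the $n=0$ (dilatation) identity. The crucial intermediate step is to convert the sum rule $\mathcal{L}_0$ into the moments $M_0,M_1$ via the near-cusp asymptotics of $W$: from $-\Delta Q = 2B$ and the Liouville equation one finds $Q \simeq -\tfrac{k}{2}\phi$ up to harmonic terms, while the Poincar\'e form of $\phi$ gives $(z-p_i)\,\partial_z\phi \simeq -\big(1 + 2/\log|z-p_i|^2\big)$. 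Combining these yields $(z-p_i)\,\partial_z W \simeq \tfrac{k+2s-2}{2}\big(1 + 2/\log|z-p_i|^2\big)$, which, once the Gamma-function normalization of \eq{moments} is accounted for, collapses $\mathcal{L}_0$ into exactly $\tfrac{k+2s-2}{2}M_0 - M_1$. Solving for $M_1$ and extracting the \emph{intensive} part isolates the central-charge term $c_H/24$, while the self-flux cross-term supplies $\tfrac{a}{2}(2\mu_H - \nu a)$.

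The main obstacle is analytic control of the two nested limits. Because the large-$N$ density expansion is singular at the cusp, one cannot simply insert the bulk formula \eq{rho_grads}; instead one must keep $l_B$ finite, integrate over $D_\epsilon$, extract the piece independent of $\epsilon$, and only then send $l_B\to 0$ and $\epsilon\to 0$. The two delicate points are (i) showing that the connected part of the pair term contributes nothing to the intensive piece — this is where the screening cloud and the subleading density correlations that define $c_H$ genuinely enter — and (ii) matching the natural weight $1/\log^n|z-p_i|^2$ generated by $\partial_z W$ to the $\Gamma(k+2s-1)/\Gamma(k+2s-1-n)$ normalization, which requires the precise subleading asymptotics of $Q$ near the cusp rather than merely its leading behavior.
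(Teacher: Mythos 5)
Your overall strategy --- a reparametrization Ward identity localized near the cusp with $v(z)=(z-p_i)^{n+1}$, followed by a conversion of the sum rules $\mathcal{L}_{-1},\mathcal{L}_0$ into the moments $M_0,M_1$ through the near-cusp asymptotics of $\partial_z W$ --- is exactly the paper's route (your polynomial vector fields are just the residues of the paper's resolvent kernel $1/(z-\xi)$, and your derivation of the charge from local neutrality matches Sec.~4.2). However, two of your explicit steps are wrong as written and would not yield \eq{dimension}.

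First, the asymptotics of $W$. You set $Q\simeq-\frac{k}{2}\phi$ ``up to harmonic terms'' and then drop them, obtaining $(z-p_i)\partial_z W\simeq\frac{k+2s-2}{2}\left(1+2/\log|z-p_i|^2\right)$. The harmonic discrepancy is not negligible: since $Q=-\frac{B}{2}K$ and $K=\frac{\lambda}{2}\left(\phi-\log|w'(z)|^2\right)$, one has $Q=-\frac{k}{2}\phi+\frac{k}{2}\log|w'(z)|^2$, and near the cusp $\log|w'(z)|^2\sim-\log|\zeta_i|^2$ contributes $-\frac{k}{2}$ to the constant part of $(z-p_i)\partial_z W$, cancelling the $+\frac{k}{2}$ coming from $-\frac{k}{2}\phi$. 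The correct expansion (Sec.~4.6) is $(z-p_i)\partial_z W=\frac{k+2s-2}{\log|\zeta_i/a_i(1)|^2}+(s-1)+O(\zeta_i)$, giving $\mathcal{L}_0=-M_1+(s-1)M_0$. With your relation $\mathcal{L}_0=\frac{k+2s-2}{2}M_0-M_1$, and since the Ward identity forces $\mathcal{L}_0=O(1)$ while $M_0=\mu_H-\nu a=O(1)$, you would be driven to $M_1=O(k)$, contradicting the proposition.

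Second, the connected pair term cannot ``contribute nothing to the intensive piece.'' The coincident-point limit of $\langle\varphi\varphi\rangle_c$, regularized by the geodesic distance as in \eq{UVreg}, equals $\frac{\beta}{6}\mathcal{S}[\phi]$; the Schwarzian carries a double pole $\frac{1/2}{(z-p_i)^2}$ at each cusp, and this is precisely the source of the ``$1$'' in $c_H=1-12\nu^{-1}\mu_H^2$. Discarding it replaces $c_H$ by $-12\nu^{-1}\mu_H^2$ in $M_1$. The remaining $-12\nu^{-1}\mu_H^2$ piece and the flux contribution $h_a$ do arise, as you anticipate, from the disconnected square $(\langle\partial_z\varphi\rangle)^2$ and from $\partial_z^2\langle\varphi\rangle$ via the leading asymptotics $\langle\varphi\rangle=-\frac{B}{2}K+Q_a+\beta\mu_H\phi+\tilde\varphi$, but the regularized connected part must be kept on the same footing.
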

Despite the fact that the wave function vanishes at $p_i$, according to \eq{momentsFQH}, for $0<s < 1/2\nu$ there can be {\it positive} charge at the cusp. Clearly, this is only possible if the charge is somehow distributed in a region surrounding the puncture. The details of this distribution, while very interesting, will not bother us in this paper, since we focus on integrated quantities. We are content to say that this accumulated charge belongs to the cusp if it is concentrated in a region whose area shrinks to zero as the $l_{B} \to 0$. 

The conformal dimension $M_{1}$ is the cusp analog of the angular momentum of conical singularities defined in \cite{Laskin2016}. The first clear connection appears in the flux part of \eq{dimension} $h_{a}$, which was argued in \cite{Can2015} to be the conformal dimension of a quasi-hole (it is also called spin; see \cite{Gromov2016} for a careful review of the literature on this delicate issue). Since the central charge contribution $c_{H}/24$ enters on the same footing, we are tempted to call this the conformal dimension of the cusp singularity. It differs slightly from the angular momentum defined in \cite{Laskin2016}, which if applied naively to cusp singularities appears divergent (e.g. take \eq{cone_anyon} for $\alpha = 1$) \footnote{This difference can be traced to the fact that in the present paper, we connect dilatations to sum rules involving $z \partial_{z} W$, whereas in \cite{Laskin2016}, the sum rule involves only $W$. Using $W$ instead of $z \partial_{z} W$ makes sense only in the flat cone setting where $z \partial_{z} W \propto W$, and would be the incorrect object to look at for more general surfaces.}. The dimension as we define it here, which parameterizes the effect of dilatations (Prop. \eq{sumrulesprop}) and is thus related to the scaling dimension of a cusp, is a more natural object for the present setting and is the cusp analog of angular momentum or spin in \eq{cone_anyon}.

In Sec. (\ref{sum_from_res2}), we prove the following sum rules used in the derivation of the Berry curvature, 

\begin{prop}\label{sumrulesprop}

The local density around a puncture $p_{i}$ with threaded flux $a_{i}$ satisfies the sum rules for dilatation and translation, given to leading non-vanishing order in large $k$ by
\begin{eqnarray}
{\rm Dilatation: } \quad &\mathcal{L}_{0} = - M_{1} + (s-1) M_{0} 	\\
{\rm Translation:} \quad &\mathcal{L}_{-1} = -2 M_{1} \gamma_{i} 
\end{eqnarray}
where $M_{0}$ is the charge \eq{charge}, $M_{1}$ is the dimension \eq{dimension}, and  $\gamma_{i}$ are the {\rm accessory parameters} of the Fuchsian uniformization of $\Sigma$, appearing in the Schwarzian derivative of the developing map $w(z)$ \eq{stress}. 
	
\end{prop}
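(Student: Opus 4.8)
The plan is to derive both sum rules from the Ward identity of Sec.~\ref{WardIdentity}, which follows from the invariance of the generating functional \eq{Z} under the holomorphic reparameterization $z_{i} \to z_{i} + \varepsilon\, v(z_{i})$ and expresses the regularized expectation of the stress tensor $T(z)$ in terms of the Schwarzian derivative of the developing map (cf.~\eq{stress}). Choosing the vector field $v(z) = (z-p_{i})^{n+1}$ localizes the identity on the cusp $p_{i}$ and, upon inserting the microscopic density, turns it into precisely the integral \eq{sumrules} defining $\mathcal{L}_{n}$. First I would establish the near-cusp asymptotics of the two building blocks of $W = Q + (1-s)\phi$. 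Writing $u = z - p_{i}$, the Liouville field has the universal Poincar\'e-cusp form $\partial_{z}\phi \sim -u^{-1}\left(1 + 2/\log|u|^{2}\right)$, while the magnetic potential is genuinely singular at the cusp, $\partial_{z}Q \sim k\, u^{-1}/\log|u|^{2}$, because the constraint $-\Delta Q = 2B$ forces $\partial_{z}\partial_{\bar z}Q = -\frac{B}{2}e^{\phi}$ and the volume factor $e^{\phi}$ diverges there. The coefficient $k$ arises once the cusp normalization of $e^{\phi}$ is tied to the curvature $R_{0}$ through $k = -4B/R_{0}$.

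For the dilatation ($n = 0$, $v = u$) I would expand $u\,\partial_{z}W$ into a constant piece and a $1/\log|u|^{2}$ piece. The constant piece $-(1-s)$ integrates against $\langle\rho\rangle - \bar\rho$ to give $(s-1)M_{0}$ by the definition of the charge \eq{charge}. The logarithmic piece, whose coefficient is enhanced by the factor $k$ coming from $\partial_{z}Q$, combines with the $k$-dependent normalization $\Gamma(k+2s-1)/\Gamma(k+2s-2) = k+2s-2$ built into the first moment \eq{moments} to produce a finite contribution $-M_{1}$ in the large-$k$ limit; the analogous $1/\log|u|^{2}$ term carried by $\partial_{z}\phi$ is not $k$-enhanced and is therefore subleading. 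This reproduces $\mathcal{L}_{0} = -M_{1} + (s-1)M_{0}$.

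For the translation ($n = -1$, $v = 1$) the leading $u^{-1}$ singularities of $\partial_{z}W$ are radially symmetric about $p_{i}$, so their angular average against the (leading, radial) density profile vanishes and the answer is controlled by the subleading, non-universal data of the developing map. Here I would use the stress-tensor form of the Ward identity: near $p_{i}$ the Schwarzian $\{w,z\}$ has a double pole with the universal cusp coefficient and a simple pole whose residue is the accessory parameter $\gamma_{i}$ of the Fuchsian uniformization. Matching the simple-pole residue of $\langle T(z)\rangle$—which in the conformal-block language is the moduli derivative $\partial_{p_{i}}\log\mathcal{Z}$, related to $\gamma_{i}$ by a Polyakov-type identity with the coefficient fixed by the cusp dimension $M_{1}$ of \eq{dimension}—against the integral $\mathcal{L}_{-1}$ yields $\mathcal{L}_{-1} = -2M_{1}\gamma_{i}$, the factor of two reflecting the normalization of the holomorphic stress tensor relative to the Schwarzian.

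The \emph{main obstacle} I anticipate is the translation sum rule, and specifically the clean identification of the surviving contribution with the accessory parameter. Unlike the charge and dimension, which are purely local data extracted by the $\epsilon \to 0$ limit, the $\gamma_{i}$ are global objects fixed by the entire Fuchsian uniformization, so the argument must show that the local integral $\mathcal{L}_{-1}$ nevertheless isolates this single residue and nothing else. A second delicate point shared by both sum rules is the regularization: one must take $l_{B} \to 0$ before $\epsilon \to 0$, verify that the smooth background $\bar\rho$ together with the bulk part of the Ward identity cancels, and confirm that the $k$-dependent moment normalizations conspire with the $k$-enhanced singularity of the magnetic potential to leave a finite, scheme-independent answer at leading order in large $k$.
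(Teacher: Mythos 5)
Your dilatation argument is essentially the paper's own (Sec.~\ref{mom_from_sum}): expand $(z-p_i)\partial_z W$ near the cusp into a constant piece $-(1-s)$ and a $(k+2s-2)/\log|\zeta_i|^2$ piece, integrate against the radially symmetric cusp density $\rho_i$, and recognize $M_0$ and $M_1$ from the normalization built into \eq{moments}. One small correction there: the $1/\log$ term carried by $(1-s)\partial_z\phi$ is not simply ``subleading and discarded'' --- it is precisely what turns the coefficient $k$ coming from the magnetic potential into the exact combination $k+2s-2$ that matches the $\Gamma$-function normalization of $M_1$; at leading order in large $k$ this is immaterial, but the bookkeeping matters if one wants the stated relations rather than only their $k\to\infty$ limits.

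The translation sum rule is where your proposal has a genuine gap. You correctly worry that $\gamma_i$ is global data, but your proposed resolution --- matching the simple-pole residue of the stress tensor to $\partial_{p_i}\log\mathcal{Z}$ via a Polyakov-type formula whose coefficient is ``fixed by the cusp dimension $M_1$'' --- is circular in the paper's logical structure: the statement that $\partial_{p_i}\log\mathcal{Z}\propto\gamma_i$ with coefficient controlled by $M_1$ (equivalently $c_H$) is the main theorem, and it is \emph{proved using} Proposition~\ref{sumrulesprop}, not available as an input to it. Likewise, the residue of the Ward-identity generating function gives $\mathcal{L}_{-1}=-\frac{c_H}{12}\gamma_i+\dots$ \eq{sum_rulesT}; identifying $c_H/12$ with $2M_1$ is exactly what must be established by an independent computation. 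The actual mechanism is purely local and more elementary than you anticipate: the accessory parameter enters the \emph{subleading} term of the cusp expansion of the developing map, $w(z)=\frac{1}{2\pi i}\left[\log(\zeta_i/a_i(1))+\gamma_i\zeta_i+O(\zeta_i^2)\right]$ with $\gamma_i=-a_i(2)/(a_i(1))^2$ \eq{dev_map}, and hence appears in the next-to-leading asymptotics of both $\partial_z W$ (Eqs.~(\ref{dcusprefined})--(\ref{dcusprefined2})) and the volume form $e^\phi$ \eq{cusprefined}. After angular integration against the radial $\rho_i$, the leading $1/\zeta_i$ singularities drop out, the $(1-s)$ contributions cancel, and two equal surviving terms --- the constant $\gamma_i$ in the expansion of $\partial_z W$, and the $2\,{\rm Re}(\gamma_i\zeta_i)$ correction to $e^\phi$ beating against the $1/\zeta_i$ pole --- each contribute $-M_1\gamma_i$. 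That, and not a normalization of the holomorphic stress tensor relative to the Schwarzian, is the origin of the factor of $2$ in $\mathcal{L}_{-1}=-2M_1\gamma_i$.
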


The logical order of these propositions is actually reversed. We first compute the sum rules in Sec. (\ref{sum_from_res2}), and from these deduce the moments using the formulas in Prop. (\ref{sumrulesprop}) derived in Sec. (\ref{mom_from_sum})

We refer to the next section (\ref{Geometry}) for the complete definition of the accessory parameters, as well as other mathematical background. Section (\ref{physics}) contains the physics background, with a discussion of holomorphic wave functions in the lowest Landau level. Section (\ref{WardIdentity}) is devoted to developing the Ward identity that is used to prove Props. \eq{momentsprop} and \eq{sumrulesprop}. The main theorem \eq{Result1} is proved in Sec.(\ref{BerryCurvatureProof}), where we apply the sum rules to find a variational formula for the generating functional. This section contains a more detailed discussion of the physical meaning of our results. Finally, we check our results for the Laughlin state against exact results for the completely filled LLL on a punctured disk in Sec. (\ref{IQHexact}), where we compute all moments of the density in closed form.

\section{Geometry of singular Riemann surfaces}\label{Geometry}

In this section we provide the mathematical background necessary to obtain our main results. As such, this section does not contain any new material, and we have preferred a terse, pragmatic collection of useful facts with appropriate references to more pedagogical treatments of the theory of Riemann surfaces.

\subsection{The Punctured Sphere}

Let $\hat{\mathbb{C}} = \mathbb{C} \cup \{ \infty\}$ be the Riemann sphere, and let $X = \{p_{1}, ..., p_{n}\}$ be a set of points in $\hat{\mathbb{C}}$, called a divisor. The n-punctured Riemann sphere is denoted $\Sigma = \hat\mathbb{C}/ X$, consists of the surface with the points in $X$ removed. An element in the automorphism group $SL(2, \mathbb{C})$ is a M\"obius transformation that fixes three points.  A standard normalization is to choose the last three points as $p_{n-2} = 0$, $p_{n-1} = 1$ and $p_{n} = \infty$. It is for this reason that the configuration space of punctured sphere has complex dimension $n -3$.  

A healthy metric on a punctured Riemann surface will send the points in $X$ infinitely far from any other point in $\Sigma$, thus leaving the geometry geodesically complete but noncompact. This is achieved quite nicely using hyperbolic metrics, which have constant negative curvature $R_{0}$. A theorem of Poincar\'e states that there exists a unique hyperbolic metric on a punctured Riemann surface whose conformal factor satisfies the Liouville equation away from the punctures

\begin{equation}
	4\partial_{z} \partial_{\bar{z}} \phi =  -R_{0} e^{ \phi},
\end{equation}
and which has a {\it parabolic} or {\it cusp singularity} at the punctures, where the metric behaves asymptotically like
\begin{eqnarray}\label{cusp}
	\phi &= - \log |\zeta_i|^{2} - 2 \log \left | \log | \zeta_i / a_{i}(1)|^{2} \right| + o(1), \quad z \to p_{i}, \quad \zeta_{i} \equiv z - p_{i}. \\
	\phi &  =- \log |z|^{2} - 2 \log \log |z/a_{n}(-1)|^{2}, \quad z \to p_n = \infty
\end{eqnarray}
The leading behavior implies the geodesic distance $d(z,  p_{i}) = \infty$ for all $z \in \Sigma$. The complex numbers $a_{i}(1)$ and $a_{n}(-1)$ are Fourier coefficients defined below in Eq. (\ref{KleinH}). They are functions of the coordinates $p_{i}$ and set a local scale near each puncture. On $\hat\mathbb{C}$, the metric has a curvature singularity at the puncture, with a scalar curvature locally given by

\begin{equation}
	R(z, \bar{z}) = - 4e^{-\phi} \partial_{z} \partial_{\bar{z}} \phi = R_{0} +  4\pi \delta^{(2)}(z - p_{i})
\end{equation}
Integrating the LHS and applying the Gauss-Bonnet theorem for the sphere $\hat{\mathbb{C}}$ implies the volume of $\Sigma$

\begin{equation}\label{Volume}
V = - \frac{4\pi}{R_{0}} ( n - 2)
\end{equation}
Clearly, for genus $g$, this generalizes to $V = - (4\pi/R_{0})(n + 2g - 2)$.

The uniformization theorem \cite{Donaldson2011, FarkasKra} states that the upper half plane $\mathbb{H}$ is the universal (i.e. simply-connected) cover of the punctured hyperbolic Riemann sphere $\Sigma$. The {\it developing map} \cite{Yoshida1987, Kuga1993}
\begin{equation}\label{developing_map}
	w(z): \Sigma \to \mathbb{H}/\Gamma
\end{equation}
 is a multi-valued conformal map between $\Sigma$ and a fundamental domain on the upper half plane, denoted by the quotient space $\mathbb{H}/\Gamma$. Here, $\Gamma$ is a discrete subgroup of the full automorphism group $SL(2, \mathbb{R})$ of the covering space $\mathbb{H}$. These are also known as Fuchsian groups. On a hyperbolic sphere, the Fuchsian group $\Gamma$ consists of a collection of $n$ {\it parabolic} M\"obius transformations $\Gamma = \{ P_{i}, \,\, i  = 1, ... , n\}$. A M\"obius transfromation 
 \begin{equation}\label{mobiusP}
 	P(w) = \frac{a w + b}{c w + d}, \quad ad - bc = 1
 \end{equation}
 is said to be parabolic if $|a +d| = 2$. The fixed points of $P_{i}(z)$ for $i \ne n$ lie on the real axis and correspond to cusp or parabolic singularities. The map $P_{n}(w) = w+1$ has $w_n =  \infty$ as a fixed point \cite{FarkasKra}. For the punctured sphere, these maps can be explicitly parameterized \cite{Zograf1988h}. 
  
The uniformization theorem enables us to write a metric on $\Sigma$ exclusively in terms of the developing map $w(z)$. This is accomplished by taking the simple Poincar\'e metric on the universal cover
 
 \begin{equation}
ds^{2} = \frac{\lambda }{4({\rm Im}\, w )^{2}} |dw|^{2}, \quad \lambda \in \mathbb{R}, \quad w \in \mathbb{H}/\Gamma
 \end{equation}
which has constant negative curvature $R = -8/\lambda$. The pullback under $w(z)$ will induce a constant negative curvature metric on the punctured sphere

\begin{equation}\label{uniform_metric}
	ds^{2} \equiv e^{\phi(\Sigma)} |dz|^{2} = \frac{\lambda |w'(z)|^{2}}{4\left( {\rm Im}\,  w(z) \right)^{2}} |dz|^{2}, \quad z \in \Sigma.
\end{equation}
Consequently, the asymptotes \eq{cusp} must follow from the behavior of $w(z)$ at the punctures. To determine this behavior, define the {\it Schwarzian of the metric} as

\begin{equation}\label{schwarz1}
\mathcal{S}[\phi] \equiv \phi_{zz} - \frac{1}{2} ( \phi_{z})^{2}	 = q(z).
\end{equation}
For a metric with cusp singularities \eq{cusp}, a classic result of Schwarz states that
\begin{equation}\label{stress}
	q(z) = \sum_{i = 1}^{n-1} \left(\frac{1/2}{(z - p_{i})^{2}} + \frac{\gamma_{i}}{(z - p_{i})} \right), \quad q(z) = \frac{1}{2 z^{2}} + \frac{\gamma_{n}}{z^{3}}, \quad z \to \infty.
\end{equation}
The residues $\gamma_{i}$ of the first order poles are known as {\it accessory parameters}, and are not fully determined by the asymptotics near the cusps. The asymptotic form at infinity implies the following three constraints on the accessory parameters \cite{Zograf1988h}

\begin{eqnarray}\label{WIschwarz}
	\sum_{i=1}^{n-1} \gamma_{i} = 0, \quad  \sum_{i = 1}^{n-1} \gamma_{i} p_i = \frac{2 - n}{2}\quad \sum_{i = 1}^{n-1} \left( p_i + \gamma_{i} p_{i}^{2}\right) = \gamma_{n}.
\end{eqnarray}
This is reminiscent of the conformal Ward identity in conformal field theory, with $q(z)$ playing the role of the stress-energy tensor. Eq. \eq{WIschwarz} fixes three of the accessory parameters, say $\gamma_{n-2}$, $\gamma_{n-1}$ and $\gamma_{n}$,  in terms of the $n-3$ other accessory parameters. For $n=3$, Eq. \eq{WIschwarz} fixes the accessory parameters completely. Fixing only $p_{3} = \infty$, for $n = 3$ we find $\gamma_{1} = - \gamma_{2}= -1/2(p_{1} - p_{2})$, and $\gamma_{3} = (p_{1} + p_{2})/2$. For $n>3$, the leading asymptotic behavior of the accessory parameters as punctures merge (i.e. near the boundary of moduli space) is known \cite{Zograf1990a}
\begin{eqnarray} \label{accessory_asymp}
	\gamma_{i}(p_1, ..., p_{n-3})& \nonumber \\
	& = - \frac{1}{2(p_{i} - p_{j})}, \quad p_i \to p_{j}\\
	 &= - \frac{1}{2 p_{i}}, \quad p_{i} \to p_{n} = \infty\\
	&= \gamma_{i}(p_{1}, ..., p_{j-1}, p_{j+1}, ..., p_{n-3}), \quad p_{j} \to p_{k}, \quad k, j \ne i
\end{eqnarray}
Plugging (\ref{uniform_metric}) into \eq{schwarz1} implies that

$$\mathcal{S}[\phi(z)] = \frac{w'''(z)}{w'(z)} - \frac{3}{2} \left( \frac{w''(z)}{w'(z)}\right)^{2} \equiv \{ w, z\},$$
which is the Schwarzian derivative of $w(z)$. The developing map must then satisfy the nonlinear differential equation $\{ w(z), z\} = q(z)$. The solution is found by writing $w(z) = f_{1}(z)/f_{2}(z)$, where $f_{i}$ are linearly independent solutions of the Fuchsian ODE $f'' + \frac{1}{2}q(z) f = 0$ with monodromy group $\Gamma$ (see e.g. \cite{Yoshida1987, Kuga1993}). Finding the conformal map $w(z)$ by employing its connection to this Fuchsian ODE is known as Fuchsian uniformization. For $n = 3$, the accessory parameters are fully determined, and the map is given in terms of hypergeometric functions. An example of the fundamental domain for the thrice-punctured sphere is shown in Fig. (\ref{fundomain}).

 \begin{figure}[htbp]
 \begin{center}
   \includegraphics[scale=0.75]{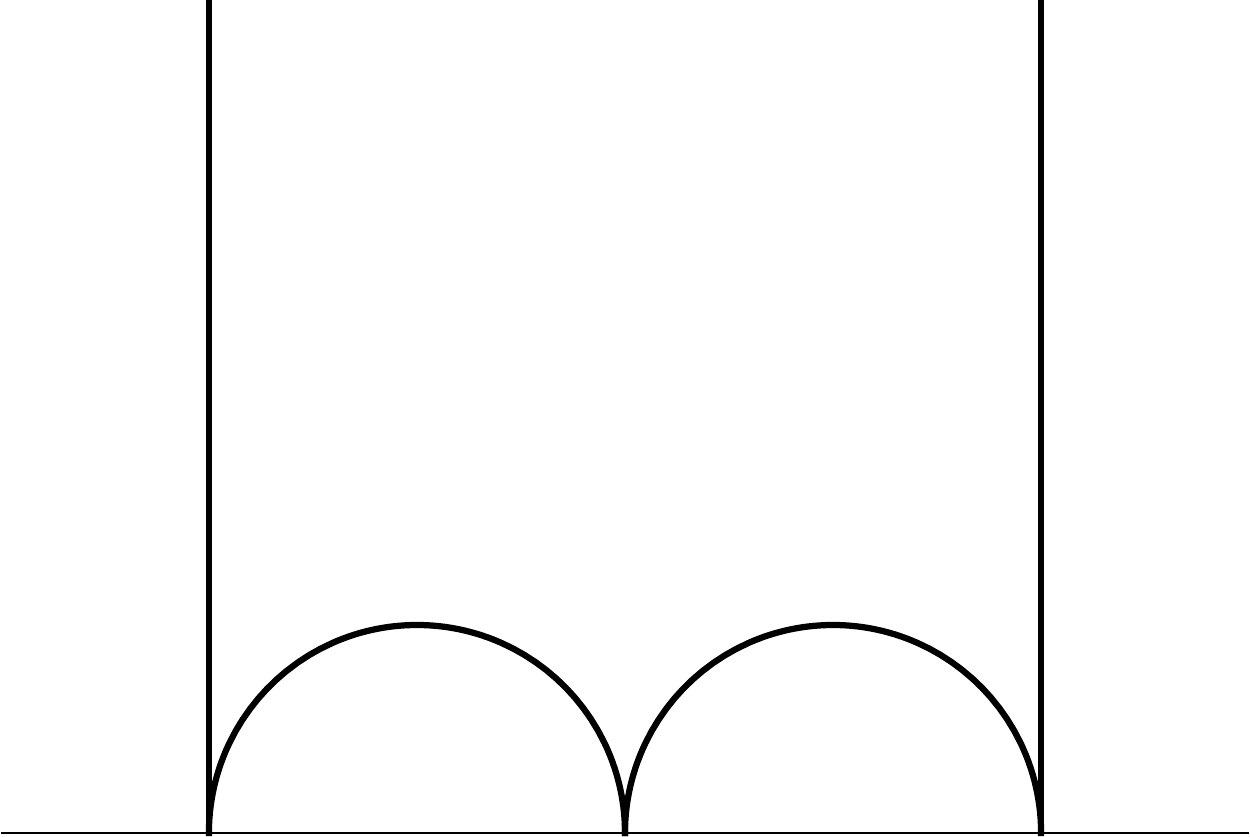}
 %
   \caption{ The thrice punctured sphere is equivalent to the strip on the upper-half plane, with cusps that touch the real axis. }
      \label{fundomain}
      \end{center}
   \end{figure}  

The inverse of developing map, known as the Klein Hauptmodul \cite{Zograf1988h},
\begin{equation}\label{Klein_map}
	J(w): \mathbb{H}/\Gamma \to \Sigma,
\end{equation}
is a single-valued function on the universal cover invariant under $\Gamma$, i.e. $J(\gamma w ) = J(w)$, where $\gamma \in \Gamma$ acts by $\gamma w = (a w + b)/(c w + d)$. The map $J(w)$ enjoys a Fourier expansion near the singularities \cite{Park2015}. Let us assume that $J^{-1}(z_{i}) = \infty$ takes the cusp $z_{i}$ to infinity. Then the Fourier expansion reads \cite{Zograf1988h}
 
 \begin{eqnarray}\label{KleinH}
 J(w) &= z_{i} + \sum_{k = 1}^{\infty} a_{i}(k) e^{ 2\pi i k w} , \quad {\rm for}\,\, P_{i},\\
 J(w) &=  \sum_{k = -1}^{\infty} a_{n}(k) e^{ -2\pi i k w} , \quad {\rm for} \,\, P_{n}.
 \end{eqnarray}
The first coefficients $a_{i}(1)$ and $a_{n}(-1)$ appear in the leading asymptotics of the cusp metric (\ref{cusp}). The cusp points $x_{i} \in \mathbb{H}$ are sent to $z_{i} \in \Sigma$, and lie on the real line. 
From the expansion \eq{KleinH}, the developing map $w = J^{-1}$ can be seen to have the following expansion near a cusp as $z \to p_i$ \cite{Zograf1988h}
\begin{eqnarray}\label{dev_map}
	 w(z) = \frac{1}{2\pi i} \left[ \log (\zeta_{i}/a_{i}(1)) - \frac{ a_{i}(2)}{(a_{i}(1))^{2}} \zeta_{i} + O(\zeta_{i}^{2}) \right] , \quad i \ne n,\\
	  w(z) =  - \frac{1}{2\pi i} \left[ \log (z/ a_{n}(-1)) - \frac{a_{n}(0)}{z}  + O(z^{-2})\right], i = n. \nonumber
%
\end{eqnarray}
Taking the Schwarzian derivative and comparing to \eq{stress}, one finds that the accessory parameters can be expressed in terms of the Fourier coefficients

\begin{equation}
	\gamma_{i} = -\frac{a_{i}(2)}{(a_{i}(1))^{2}} ,\, \, i \ne n, \quad \gamma_{n} = a_{n}(0).
\end{equation}

Another useful object will be the local K\"ahler potential $K$ for the metric, defined (up to a constant) by the relation $\partial_{z} \partial_{\bar{z}} K  = e^{\phi}$. For the metric (\ref{uniform_metric}), we may use

\begin{equation}\label{uniform_kahler}
	K(z, \bar{z}) = -  \lambda \log {\rm Im}\left[ w(z)\right].\end{equation}
Clearly, we have $\phi =  2K/\lambda  + \log |w'(z)|^{2}$. Using the expansion for \eq{dev_map},  we find that near a cusp, as $z \to p_i$, 
\begin{eqnarray}\label{kahler}
K  = - \lambda \log \Big| \log \left| \zeta_{i} / a_{i}(1)\right| \Big| + \frac{2 \lambda {\rm Re}(\gamma_{i} \zeta_i )}{\log |\zeta_{i}/a_{i}(1)|^{2}} , \quad {\rm for}\, i \ne n,\,    \label{kahlerpi}\\
K =  \frac{ V}{\pi}\log |z|^{2} -  \lambda\log \Big| \log |z/a_{n}(-1)| \Big| + \frac{2\lambda {\rm Re}(\gamma_{n}/z)}{\log|z/a_{n}(-1)|^{2}} , \, \, \, i = n  ,\label{kahlerinf}
\end{eqnarray}
we keep the subleading correction (the last term in \eq{kahlerpi} and \eq{kahlerinf}) for future reference, but point out that it vanishes in the limit $\zeta_i = 0$ (or $z = \infty$). The asymptote at infinity can also be written $K \to   \frac{\lambda}{2}( n-2)\log|z|^{2}$. It follows by requiring $\int_{\mathbb{C}} \partial_{z} \partial_{\bar{z}} K d^{2} z =  V$.

\subsubsection{Variational formulas} We will need to make extensive use of the following variations of the K\"ahler potential and the metric with respect to the moduli $p_i$ of the metric. Most of these formulas are taken either verbatim or with minor modifications from the reference \cite{Zograf1988h}. 
In the neighborhood of a puncture, we will need
\begin{eqnarray}\label{cuspexplicit}
z \to p_j , \nonumber\\
\partial_{p_i} \phi  = - \delta_{ij} \partial_{z} \phi  +   \frac{2}{\log |\zeta_{j}/a_{j}(1)|^{2}}\partial_{p_{i}} \log |a_{j}(1)|^{2} ,\\
\partial_{p_{i}} K = -\, \delta_{ij} \partial_{z} K +   \frac{\lambda}{\log |z/a_{j}(1)|^{2}} \partial_{p_{i}} \log |a_{j}(1)|^{2} ,\label{kahlermod}
\end{eqnarray}
and at infinity, 
\begin{eqnarray}\label{cuspinf}
z \to p_n = \infty,  \nonumber\\
\partial_{p_i} \phi  = \frac{2}{\log|z/a_{n}(-1)|^{2}}  \partial_{p_i} \log |a_{n}(-1)|^{2} + O\left(z^{-1}\log^{-2}|z|\right) ,\\
\partial_{p_i} K  =  \frac{\lambda }{\log|z/a_{n}(-1)|^{2}} \partial_{p_i} \log |a_{n}(-1)|^{2} + O(z^{-1} \log^{2}|z|).\label{kahlermodinf}
\end{eqnarray}


These can be seen in a non-rigorous way directly from differentiating the asymptotic formulas \eq{cusp} and \eq{kahler}. Their proof requires an application of the Ahlfors lemma as stated in \cite{Zograf1988h},

\begin{equation}\label{ahlfors1}
\partial_{p_i} \phi+ \dot{F}^{i}  \partial_{z}\phi+ \dot{F}_{z}^{i} = 0. 
\end{equation}
This formula allows one to trade variations with respect to $p_i$ with variations with respect to the complex structure moduli through the quasiconformal map $F$ and its variations $\dot{F}$, defined in \eq{quasiF} in Sec. (\ref{modulispace}). Eqs. \eq{cuspexplicit} and \eq{cuspinf} follow from the asymptotic expansion of the quasiconformal map \cite{Park2015}
\begin{eqnarray}\label{Fasymp}
\dot{F}^{i}(z) &= \delta_{ij} + (z - p_j) F_{z}^{i}(p_j) + O( \zeta_{i}^{2}) , \quad z \to p_j,\\
\dot{F}_{z}^{i}(p_j) & = \partial_{p_i} \log |a_{j}(1)|^{2}, \quad \dot{F}_{z}^{i}(p_n) = \partial_{p_i} \log | a_{n}(-1)|^{2}.
\end{eqnarray}
We find that the local K\"ahler potential satisfies 

\begin{equation}\label{ahlfors2}
\partial_{p_i} K + \dot{F}^{i} \partial_{z} K = 0.
\end{equation} 
We list some more refined asymptotics in the \ref{App_asymp}, since they will be used later on. 

\subsection{Symplectic forms on Moduli space}\label{modulispace}
Using the metric to probe geometric response means we must be allowed to vary it. An illuminating class of deformations involve motion in the moduli space of a Riemann surface. We briefly review this concept for a punctured sphere, again with the purpose of introducing notation, and discuss two important symplectic forms on this space: the Weil-Petersson (W-P) metric and the Takhtajan-Zograf (T-Z) metric. 

There are different constructions of the moduli space of a punctured Riemann sphere. In this paper, we prefer to describe moduli space in terms of the puncture coordinates. However, we need to make use of concepts in Teichm\"uller theory to connect the motion of punctures to the changes of complex structure. This allows a projection of symplectic forms on the Teichm\"uller space to the configurational moduli space. This is the perspective taken in \cite{Zograf1988h,Park2015}, which we review below.

\subsubsection{Moduli Space}

The automorphism group on the punctured Riemann sphere consists of global conformal transformations $SL(2, \mathbb{C})$ combined with the symmetric group ${\rm Symm}(n)$ on $n$ elements acting by interchanging the punctures.

The configuration space $M_{n}$ of punctures consists of divisors $X$ which are inequivalent under $SL(2, \mathbb{C})$. Since a M\"obius transformation can be used to fix the position of any three punctures, an element of $M_{n}$ is determined by the remaining $n-3$ complex coordinates. We use the standard normalization which takes $p_{n-2}= 0$, $p_{n=1} = 1$, and $p_{n} = \infty$, so that the $n$-punctured sphere can be viewed as the $n-1$-punctured complex plane. The configuration space is then  


\begin{equation}
M_{n} = \{ (p_1, .., p_{n-3}) \in \mathbb{C}^{n-3}| p_{i} \ne 0, 1, \quad p_{j} \ne p_{k}\, {\rm for} \, j \ne k\}.
\end{equation}

The moduli space of a Riemann surface with genus-g and $n$ punctures is denoted $\mathcal{M}_{g,n}$. For the punctured sphere, it is constructed by taking the quotient of $M_{n}$ with the symmetric group ${\rm Symm}(n)$, identifying configurations in $M_{n}$ which are related by interchanging punctures $p_{i} \to p_{j}$, for all $i, j$. Thus, the moduli space of an n-punctured Riemann sphere is 

\begin{equation}\label{moduli_p}
	\mathcal{M}_{0,n} = M_{n}/{\rm Symm}(n).
\end{equation}

\subsubsection{Teichm\"uller Space} 

 The moduli space \eq{moduli_p} can be identified with the space of complex structures on $\Sigma$ \cite{Zograf1988h,Zograf1990a,Park2015}. This is accomplished using Teichm\"uller theory \cite{Teichmuller,Farb}. The Teichm\"uller space is the quotient of the space of constant curvature hyperbolic metrics with the group of orientation preserving diffeomorphisms isotopic to the identity, denoted

\begin{equation}
	\mathcal{T}(\Sigma) = {\rm Metrics}(\Sigma)/{\rm Diff}_{0}(\Sigma).
\end{equation}
What this leaves is a finite dimensional complex manifold, locally isomorphic to $\mathbb{C}^{n-3}$, which describes metrics not related by conformal coordinate transformations. The tangent space to $ \mathcal{T}$ consists of the Beltrami differentials  $\mu(z, \bar{z})$, which are $(-1,1)$ tensors $\mu = \mu_{\bar{z}}^{z} d\bar{z} (d z)^{-1}$ satisfying $\mu(\gamma z) \overline{\gamma'(z)} /\gamma'(z) = \mu(z)$ for $\gamma \in \Gamma$. Harmonic Beltrami differentials satisfy $\partial_{z} (e^{\phi} \mu ) = 0$, and thus take the form $\mu = e^{ - \phi} \overline{q(z)}$, where $q(z)$ is a quadratic differential. By the Riemann-Roch theorem, the space of quadratic differentials on $\Sigma$ has dimension $n-3$. Let us denote then the basis of harmonic Beltrami differentials $\mu_{1}, \mu_{2}, .... , \mu_{n-3}$. Using local coordinates $(\tau_{1}, ..., \tau_{n-3}) \in \mathbb{C}^{n-3}$, known as Bers coordinates, an element of Teichm\"uller space is a metric 
\begin{equation}
	ds^{2} = e^{\phi_{\mu}} | dz + \mu d\bar{z}|^{2},
\end{equation}
where $\mu = \sum_{i = 1}^{n-3} \tau^{i} \mu_{i}$, and $\phi_{\mu}$ is an unspecified conformal factor. This describes the holomorphic mapping between metrics in $\mathcal{T}$ and complex Bers coordinates in $\mathbb{C}^{n-3}$ \cite{Nelson1987}.

A quasiconformal map $f^{\mu}$ which satisfies the Beltrami equation
\begin{eqnarray}
	\partial_{\bar{z}} f^{\mu} &= \mu \partial_{z} f^{\mu}, \quad z \in \mathbb{H},
\end{eqnarray}
describes a homeomorphism $f^{\mu}: \mathbb{H} \to \mathbb{H}$ and brings the metric to conformal form $|df|^{2} 
	= |\partial_{z} f|^{2} \left| dz  + \mu d\bar{z}\right|^{2}$. A quasiconformal map changes the complex structure, and thus metrics related by a quasiconformal map represent different points in $\mathcal{T}(\Sigma)$.  
The Klein map $J$ \eq{Klein_map} 
induces a quasiconformal map $F^{\mu}: \Sigma \to \Sigma$ between punctured spheres.



The moduli space is obtained by identifying points in $\mathcal{T}(\Sigma)$ conjugate under the mapping class group of $\Sigma$, denoted ${\rm Mod}(\Sigma)$, and is given by the quotient $\mathcal{T}(\Sigma)/{\rm Mod}(\Sigma)$ \cite{Farb}. This construction of the moduli space is equivalent to that in configurational space \eq{moduli_p} under the covering map

\begin{equation}
\Psi: \mathcal{T} \to \mathcal{M}_{0,n}	.
\end{equation}
Choosing a base point $\tau_{0} \in \mathcal{T}$, we have that the Bers coordinates are sent to the puncture coordinates $\Psi(\tau_{0}^{i}) = p_{i}$. Changing the complex structure moves $\tau_{0}^{i} \to \tau_{\mu}^{i}$ and sends $p_i \to F^{\mu}(p_i)$, giving $\Psi(\tau_{\mu}^{i}) = F^{\mu}(p_i)$. This connects the motion of punctures  $ (p_1, ..., p_{n-3}) \in \mathcal{M}_{0,n}$ with deformations of metrics in $\mathcal{T}(\Sigma)$ via the quasiconformal map $F^{\mu}$. The differential $d\Psi(\mu_{i}) = \partial_{p_i}$ maps the tangent space of $\mathcal{T}$ to that of $\mathcal{M}_{0,n}$ \cite{Zograf1988h}. Finally, along the basis vectors $\mu_{i}$ we denote $F^{\mu_{i}} \equiv F^{i}$, and variations in the direction of $\mu_{i}$ are

\begin{equation}\label{quasiF}
	\dot{F}^{i} = \partial_{\epsilon} F^{\epsilon \mu_{i}}\Big|_{\epsilon = 0}.
\end{equation}
This variation is used in the Alfhors lemma (\ref{ahlfors1}) to connect motion in the configurational moduli space $\mathcal{M}_{0,n}$ with deformations of the complex structure moduli $\tau$. 

\subsubsection{ Weil-Petersson metric} A natural inner product on the space of metrics is given by \cite{DHoker1988}

\begin{equation}
	|| \delta g||^{2} = \int_{\Sigma}  g_{ac}g_{bd} \delta g^{ab} \delta g^{cd} dV.
	\end{equation}
	For deformations of the metric which lie within $\mathcal{T}(\Sigma)$ and thus only change the complex structure, this inner product is projected to the Petersson inner product on Beltrami differentials
	
	\begin{equation}
		\langle \mu_{i}, \mu_{j} \rangle_{WP} = \int_{\Sigma} \mu_{i} \bar{\mu}_{j} \, dV.
	\end{equation}
A symplectic two-form on $\mathcal{T}(\Sigma)$ can be constructed using this inner product 

\begin{equation}
	\omega_{WP} = \sum_{i,j} \langle \mu_{i}, \mu_{j} \rangle_{WP}\, \frac{i}{2} d\tau_{i} \wedge d\bar{\tau}_{j} =  \frac{i}{2} \int_{\Sigma} \mu(z,\bar{z})\wedge \overline{\mu(z,\bar{z})} \, dV_{z}. 
\end{equation} 
This is the Weil-Petersson metric \cite{Teichmuller}. It is a measure of distance on the moduli space, indicating roughly how ``different" two complex structures are.

The W-P metric is well-known to be K\"ahler. When projected onto the moduli space $\mathcal{M}_{0,n}$ \eq{moduli_p}, it can be expressed explicitly as 

\begin{equation}
	\omega_{WP} =  \frac{i}{2} \partial_{\bar{p}} \partial_{p} S_{L} =  -\sum_{i,j = 1}^{n-3}  \frac{\partial^{2}S_{L}}{\partial \bar{p}_{j} \partial p_{i}}  \, \frac{i}{2} dp_{i} \wedge d\bar{p}_{j} ,
\end{equation}
with the K\"ahler potential $S_{L}$ given by the regularized Liouville action

\begin{equation}
	S_{L} =\lim_{\epsilon \to 0} \int_{\Sigma_{\epsilon}} \left(| \partial_{z} \phi |^{2} +  e^{\phi}\right) d^{2} z + 2\pi n ( \log \epsilon + 2 \log | \log \epsilon|),
\end{equation}
where $\Sigma_{\epsilon}$ is the punctured sphere with small $\epsilon$ balls removed around each point in $X$. This fact was proven in a series of beautiful papers \cite{Zograf1988h,Zograf1988c,Takhtajan2002,Zograf1990a}, which also proved the so-called Polyakov formula for the first variation of the Liouville action 

\begin{equation}\label{varaccessory}
	- \frac{1}{2\pi} \partial_{p_i} S_{L} = \gamma_{i},
\end{equation}
where $\gamma_{i}$ are the accessory parameters \eq{stress}.

For the case of $n = 4$, the volume of the W-P metric over the moduli space is \cite{Zograf1998,Wolpert2005} (see \ref{AppVolume} for more details).

\begin{equation}\label{volumeWP}
	\int_{\mathcal{M}_{0,4}} \omega_{WP} = \pi^{2}.
\end{equation}






\subsubsection{Takhtajan-Zograf metric}
On a surface with cusp singularities, it was shown in \cite{Takhtajan1991} that there exists a K\"ahler metric, now known as the Takhtajan-Zograf (T-Z) metric, associated with each cusp singularity


\begin{equation} 
	\omega_{TZ, i} = \frac{i}{2}\int \mu(z,\bar{z}) \wedge \overline{\mu(z, \bar{z})} \, E_{i}(z, 2) dV_{z} ,
\end{equation}
where 
\begin{equation}
	E_{i}(z, s) = \sum_{\gamma \in \Gamma_{i}/\Gamma} {\rm Im}( \sigma_{i}^{-1} \gamma z)^{s},
\end{equation}
is the Eisenstein-Maass series associated with the cusp $x_{i} = w(p_i)$. Here, $\Gamma_{i} = \{ P_{i}, P_{i}^{2}, ... \}$ is the cyclic subgroup generated by the parabolic element of $\Gamma$ which has the cusp at $x_i$ as a fixed point. The fractional linear transformation $\sigma_{i} \in PSL(2, \mathbb{R})$ sends $x_{i}$ to infinity via $\sigma_{i}^{-1} x_{i} = \infty$, and similarly modifies the corresponding generator $\left(\sigma_{i}^{-1} P_{i} \sigma_{i}\right)(w)= w \pm 1$. In \cite{Park2015}, the K\"ahler potential for the T-Z metric was found to be related to the Fourier coefficients in \eq{KleinH}. Specifically, writing (note slightly different convention from \cite{Park2015})
\begin{eqnarray}\label{Hdef}
	H_{i} = |a_{i}(1)|^{2}, \quad i = 1, ..., n-1, \quad H_{n}  = 1/|a_{n}(-1)|^{2}.
\end{eqnarray}
Projected to the moduli space $\mathcal{M}_{0,n}$ (\ref{moduli_p}), the T-Z metric of cusp $x_{i}$ is 

\begin{equation}
\omega_{TZ, i}  = - \frac{3}{8\pi} \sum_{j,k} \frac{\partial^{2} \log H_{i}}{\partial p_{i} \partial \bar{p}_{j}} \, \frac{i}{2} dp_{i} \wedge d\bar{p}_{j}= \frac{3}{8\pi} i \partial_{\bar{p}} \partial_{p} \log H_{i}.
\end{equation}
Summing over all cusps and writing $\omega_{TZ} = \sum_{i = 1}^{n} \omega_{TZ,i}$, and $H = \prod_{i = 1}^{n} H_{i}$, one gets

\begin{equation}
 \omega_{TZ} = \frac{3}{8\pi} i \partial_{\bar{p}} \partial_{p} \log H .
\end{equation}
For the case of $n = 4$, the volume of the T-Z metric is \cite{Wolpert2005} 

\begin{equation}\label{volumeTZ}
	\int_{\mathcal{M}_{0,4}}  \omega_{TZ} = 3.
\end{equation}
We give a physical derivation of this result in \ref{AppVolume}.

\subsubsection{Global K\"ahler potential}

In \cite{Park2015}, a global K\"ahler potential $\mathscr{S}_{L}$ was found for a special combination of the W-P and T-Z metrics
\begin{equation}\label{SLmod}
	\mathscr{S}_{L} = S_{L} - \pi  \log H,
\end{equation}
which may be thought of as a modification of the Liouville action for a metric with cusp singularities. The function $\mathscr{S}_{L}$ is the Quillen metric on the determinant line bundle on $\mathcal{M}_{0,n}$ \cite{Park2015}. The first variation produces
\begin{eqnarray}\label{SLmodvar}
	\partial_{p_i} \mathscr{S}_{L}
	 = - 2\pi \gamma_{i} - \pi \sum_{k} \partial_{p_i} \log H_{k},
\end{eqnarray}
and the K\"ahler form is

\begin{equation}
	i \partial_{\bar{p}} \partial_{p} \mathscr{S}_{L} = \omega_{WP} -\frac{4\pi^{2}}{3} \omega_{TZ}.
\end{equation}
For $n = 4$, the volume of this special combination follows \eq{volumeWP} and \eq{volumeTZ}

\begin{equation}
	\int_{\mathcal{M}_{0,4}} \left( \omega_{WP} - \frac{4\pi^{2}}{3} \omega_{TZ}\right) =-3\pi^{2}.
\end{equation}

\section{Holomorphic Quantum Hall States on Riemann Surfaces}\label{physics}

This section provides the physics background, where we discuss in more detail the physical motivation for studying the generating functional \eq{Z}. We define the quantum Hall model wave functions on genus-0 Riemann surfaces. With the exception of (\ref{holopsi}), this material is mostly review of Ref. \cite{Can2015}. A recent comprehensive review of QH states on Riemann surfaces can be found in the lecture notes \cite{Klevtsov2016a}. 

\subsection{Lowest Landau Level on Hyperbolic surfaces}
We consider model wave functions of $N$ coordinates $z_i$ which are zero modes of the anti-holomorphic kinetic momentum operator \cite{Iengo1994}

\begin{equation}\label{LLLconstraint}
	D_{\bar{z}_{i}} \Psi(z_1, ..., z_N) = 0, \quad {\rm where} \quad  D_{\bar{z}} = \partial_{\bar{z}} - i \frac{e}{\hbar} A_{\bar{z}} + i s \omega_{\bar{z}}.
\end{equation}
We refer to this as the lowest Landau level (LLL) constraint. Here, $A_{\bar{z}}$ and $\omega_{\bar{z}}$ is anti-holomorphic component of the gauge and spin connection, respectively. The curvature for these connections are the magnetic field $e^{-\phi} 2i(\partial_{\bar{z}} A_{z} - \partial_{z} A_{\bar{z}}) = B $ and the scalar curvature $2i e^{-\phi} (\partial_{\bar{z}} \omega_{z} - \partial_{z}\omega_{\bar{z}}) = \frac{1}{2} R$. In the language of differential forms, the magnetic field strength is $F = d A $ where $F = B dV$ and $dV = e^{\phi} i dz \wedge d\bar{z}/2$, while the Ricci tensor ${\rm Ric} = d\omega $,  where ${\rm Ric} = (R/2)\, dV$ and $R = - \Delta \phi$ is the scalar curvature. For reference, the Gauss-Bonnet theorem reads $\int {\rm Ric} = 2\pi \chi(\Sigma)$. The anti-holomorphic components are built from the Cartesian components by $A_{\bar{z}} = \frac{1}{2} (A_{1} + i A_{2})$, and similarly for the spin connection. In transverse gauge, $A_{z} = (i \hbar/2e) \partial_{z} Q$,  $\omega_{\bar{z}} = -(i/2)\partial_{\bar{z}} \phi$, $A_{z} = A_{\bar{z}}^{*}$ and $\omega_{z} = \omega_{\bar{z}}^{*}$. The deformation parameter $s$ appears as a coupling constant for the spin connection, and for this reason it is referred to as spin. Below we set $e = \hbar = 1$, so that the flux quantum $h/e = 2\pi$. 

Being on a closed hyperbolic surface imposes a constraint not just on the total flux, but on the magnetic field itself. Let us see how this works. Dirac quantization requires 

\begin{equation}
	N_{\phi} = \frac{1}{2\pi} \int_{\Sigma} F  = \frac{B V}{2\pi} \in \mathbb{N}
\end{equation}
whereas the Gauss-Bonnet theorem implies that the volume $VR_{0} =   -4\pi (n - 2 + 2g )$. Combining these gives the commensurability condition on the ratio $B/R_{0}$ \eq{defk1} which was found in \cite{Avron1992,Pnueli1994b}

\begin{equation}\label{defk}
	 k \equiv -\frac{4B}{R_{0}} = \frac{B\lambda}{2}  = \frac{2N_{\phi}}{(n - 2 + 2g)}
\end{equation}
For a scalar curvature $R_{0}  = - 8/\lambda$, this implies the magnetic field is a rational number in units of $1/\lambda$, 

\begin{equation}
	B =  \frac{4N_{\phi}}{( n - 2 +2g)} \frac{1}{\lambda}
\end{equation}


If there are fluxes $a_i$ threading the punctures, then the Dirac condition requires the integrality of $N_{\phi} = B V / 2\pi - \sum_{i = 1}^{n} a_{i}$ and $B = 4 (N_{\phi} + \sum_{i} a_i ) / (n - 2 + 2g) \lambda$.

On a hyperbolic surface, the energy spectrum of the Hamiltonian $H = D_{\bar{z}}^{\dagger} D_{\bar{z}}$	is rather complicated \cite{Avron1992}, and involves discrete as well as continuous bands. However, at the bottom of the spectrum there are still discrete, degenerate Landau levels. The lowest Landau level occurs at zero energy and is spanned by the zero modes of \eq{LLLconstraint}, consisting of holomorphic sections $s_{n}(z)$ normalizable under the inner product \cite{Klevtsov2013}

\begin{equation}
	\psi_{n}(z) = s_{n}(z) e^{\frac{1}{2} (Q - s \phi )}, \quad || s_{n}||^{2} = \int |s_{n}(z)|^{2} e^{ Q - s \phi} dV.
\end{equation}
The degeneracy is given by the Riemann-Roch formula \cite{Levay1999b,Avron1992}
\begin{eqnarray}\label{degLLL}
	N = N_{\phi} + (1 - 2s) (1 - g), \quad 0<s \le 1\label{deg1}\\
	N = N_{\phi} + (1-2s) (1 - g)  + n (\lceil s \rceil  - 1) , \quad s >  1\label{deg2}
\end{eqnarray}
where $\lceil s \rceil$ is the smallest integer larger that $s$. For genus $g = 0$, this can be seen as a consequence of normalizability. Close to the punctures 
\begin{eqnarray}\label{psipuncture}
	|\psi_{n}|^{2} e^{ \phi} d^{2} z  
	&=|s_n(z)|^{2}|z-p_i|^{2(s - 1)} \left| \log |z - p_i|^{2}\right|^{k + 2s - 2} d^{2} z.
\end{eqnarray}
Since the sections behave as $|s_{n}(z)|^{2} \sim |z - p_i|^{2l_{i}}$, normalizability requires $l_{i} + s > 0$. Therefore for $i \ne n$, we have holomorphic sections with poles of order $l_i  = 1 - \lceil s \rceil, -\lceil s\rceil , ..., -1$ that are linearly independent. For $0<s \le  1$, $\lceil s \rceil = 1$ and there are no additional states at the punctures. For $s >1$, the multiplicity of these sections is $(n-1) (\lceil s \rceil  - 1)$. 
At infinity, using the limits \eq{cusp} and \eq{kahler}
\begin{equation}
|\psi_{n}(z)|^{2} e^{\phi}d^{2}z \sim |z|^{2n} |z|^{-2N_{\phi}} |z|^{2(s-1)} \left( \log |z|^{2}\right)^{N_{\phi} + 2s - 2} d^{2}z.
\end{equation}
In local coordinates $\xi = 1/z$ around infinity, 
\begin{eqnarray}
|\psi_{n}(\xi)|^{2} e^{\phi}d^{2}\xi 
&   \sim |\xi|^{2(N_{\phi} - n - s)} |\xi|^{-2} \left( \log |\xi|^{2}\right)^{N_{\phi} + 2s - 2}.
\end{eqnarray}
Integrability at infinity then requires $N_{\phi} - s - n > 0$, which sets an upper bound on the degree of $s_{n}(z)$ to $n_{max} = N_{\phi} - s -1$. The total number of degenerate states, and hence the total number of particles in the many-body ground state $N$, is just $N = N_{\phi} - s + (n-1)(s-1) = N_{\phi}+1-2s + n(s-1)$, which agrees with \eq{deg2} at $g = 0$ for $s > 1$. For $0<s \le 1$, the condition is $N_{\phi}-2s  -  n_{max} =0$, and the degeneracy is \eq{deg1}. 

We will consider positive $s \le 1$ in what follows. In this case, the holomorphic sections will consist of polynomials in $z$ of degree at most $n_{max}$ and at least $0$, i.e. $s_{n}(z) = \sum_{0\le k \le n_{max}} c_{n}^{(k)} z^{k}$.




The many-electron wave function for a completely filled LLL is the Slater determinant \cite{Klevtsov2013}

\begin{equation}
	\Psi = \frac{1}{\sqrt{N!}}{\rm det}(s_{i}(z_{j})) \exp \left( \frac{1}{2}\sum_{i} (Q(z_i) - s \phi(z_i)\right). 
\end{equation}
If we do not include sections with poles at the punctures, then under a change of basis the wave function reads

\begin{equation}\label{psiIQH}
	\Psi = \frac{1}{\sqrt{\mathcal{Z}}} \prod_{i<j}(z_i - z_j) e^{ \frac{1}{2} \sum_{i} (Q(z_i) - s \phi(z_i))},
\end{equation}
where $\mathcal{Z}$ is the normalization integral. This is the free electron ground state which exhibits the integer quantum Hall (IQH) effect at filling fraction $\nu = 1$. 

\subsection{FQH model wave functions}
We consider model wave functions of the form

\begin{equation}\label{Psigen}
	\Psi = \frac{1}{\sqrt{\mathcal{Z}[\{p_k\}]}} F(z_{1}, ..., z_{N}) \prod_{i,k}(z_i - p_k)^{a_{k}}\exp \left[\frac{1}{2} \sum_{i = 1}^{N}  \left(Q(z_i, \bar{z}_{i})  - s \phi(z_{i}, \bar{z}_{i})\right) \right],
\end{equation}
where $Q$ is the magnetic potential satisfying $- \Delta Q= 2  B$. This wave function consists of three parts. First, the single-particle factor $e^{\frac{1}{2} (Q - s \phi)}$ which all multi-particle states have in common, is inherited from the LLL constraint \eq{LLLconstraint}.  

The symmetric polynomial $\prod_{i,k}(z_i - p_k)^{a_k}$ can be understood as merely adding zeros at $p_k$ to the wave function to ensure integrability around the punctures (see below). However, the physical interpretation of such a term is that it represents fluxes threading the punctures, and the wave function is written in a gauge (requiring a branch cut for non-integer $a_i$) for which the Aharonov-Bohm phase $\exp(i 2\pi \alpha_{k})$ around the fluxes is explicitly captured by monodromy. For integer $a_{k}$, the flux tube can be interpreted as a Laughlin quasi-hole.

The defining features of the FQH state are encoded in the holomorphic polynomial $F$. Under a M\"obius transformation $f(z) = \frac{a z + b}{c z + d}$ with $a d - bc = 1$, the holomorphic polynomial transforms as

\begin{equation}
	F(f(z_1), ...., f(z_{N})) = \prod_{i = 1}^{N} \left(f'(z_{i}) \right)^{\nu^{-1} N - \mathcal{S}} F(z_{1}, ..., z_{N}).
\end{equation}
In \cite{Can2015}, we referred to $-\frac{1}{2}( \nu^{-1} N + \mathcal{S})$ as the holomorphic dimension of the wave function. That the dimension scales at most linearly with $N$ is a necessary condition for the polynomial to be an admissible QH state. Furthermore, the dimension can be used as the definition of the filling factor $\nu$ and the shift $\mathcal{S}$ \cite{Wen1992b, Read2011}. To understand why, consider the round metric $\phi = - 2 \log \left( 1 + \pi |z|^{2}/V\right)$ on a sphere with a constant magnetic field. In this case, $Q = -N_{\phi} \log \left( 1 + \pi |z|^{2}/V\right)$, and the isometry group of the round metric $SU(2)$ is the subset of global conformal transformations which take $z \to (a z + b)(- \bar{b}z + \bar{a})$ with $|a|^{2}+ |b|^{2} = 1$. Thus, the wave function will transform under $SU(2)$ as

\begin{equation}\label{psi_trans}
	| \Psi ( f(z_{1}), ..., f(z_{N}) )|^{2} = \prod_{i = 1}^{N} | f'(z_{i})|^{\nu^{-1}N - \mathcal{S} - N_{\phi} + 2s} | \Psi (z_{1}, ..., z_{N})|^{2}
\end{equation}
Invariance under $SU(2)$ is equivalent to asking that the mean density be constant on a sphere, a necessary condition for incompressibility. According to (\ref{psi_trans}), this is achieved when the number of particles is chosen such that

\begin{equation}
	N =  \nu N_{\phi} + \nu (\mathcal{S} - 2s)
\end{equation}
Comparing with (\ref{Nmax}), we can easily identify the geometric response coefficient $\mu_{H}$ as

\begin{equation}\label{mushift}
\mu_{H} = \frac{1}{2} \nu (\mathcal{S} - 2s).
\end{equation}
We see from these considerations that $\mu_{H}$ is determined by a combination of the holomorphic dimension of the wave function and the spin $s$. 

We can apply this reasoning to the state with fluxes $a_{i}$ \eq{Psigen}. In this case, the coordinate transformation will also move the positions of the fluxes, giving us a covariant transformation law

\begin{equation}
	| \Psi ( f) |^{2} = \prod_{i = 1}^{n} |f'(p_i)|^{N a_{i}} \prod_{i = 1}^{N} | f'(z_i)|^{\nu^{-1} N - \mathcal{S} - N_{\phi} + 2s + \sum_{i} a_{i}} |\Psi(z)|^{2},
\end{equation}
from which we see that
\begin{equation}
	N = \nu N_{\phi} - \sum_{i} \nu a_{i} + 2 \mu_{H}
\end{equation}
and that the flux insertions transform with a conformal dimension $ =  -N a_{i}/2$. This dimension encodes the Berry phase of the adiabatic transport of quasi-holes, and we may unpackage it in the following way (see e.g. \cite{Can2015,Li1992, Gromov2016}) 
\begin{equation}
	\gamma_{a_i} = -2\pi a_{i} N =  -2\pi \nu a_{i} N_{\phi}    -4\pi \times \frac{a_{i}}{2} (2 \mu_{H}- \nu a_{i}) +2\pi \sum_{j \ne i} \nu a_{i} a_{j}.
\end{equation}

The first term $-2\pi \nu a_{i} N_{\phi} = -\nu a_{i} \int B dV$ is the Aharanov-Bohm phase and implies the charge $-\nu a_i$ \cite{Laughlin1983}. The second $h_{a_{i}} \frac{1}{2} \int R dV$ is the spin-curvature A-B phase and implies the spin $h_{a_i} = \frac{a_i}{2}(2\mu_{H} - \nu a_{i})$. And the last term is the statistics, and implies the mutual exchange statistics $\pi \nu a_{i} a_{j}$ between $i$ and $j$ \cite{Arovas1984}. 

In this paper, we focus on the Laughlin wave function for which

\begin{equation}
	F = \prod_{i<j} (z_{i} - z_{j})^{\beta}, \quad \nu = \beta^{-1}, \quad \mathcal{S} = \beta
\end{equation}
Note that $\beta = 1$ describes the completely filled LLL \eq{psiIQH}. The density matrix of this state is given by the Bergman kernel, and the mean density of electrons is given by the Bergman kernel on the diagonal \cite{Douglas2010, Klevtsov2013} (see Sec. (\ref{IQHexact})).

Since we consider constant magnetic field on a hyperbolic sphere, we will use for the metric

\begin{equation}
	\phi = \log |w'(z)|^{2} - 2 \log\Big( {\rm Im}[ w(z)] \Big),
\end{equation}
and for the magnetic potential $Q = - B K / 2$, or explicitly
\begin{equation}\label{Q}
Q =  k \log\Big( {\rm Im}[ w(z)] \Big),
\end{equation}
However, the wave function \eq{Psigen} with the potential \eq{Q} is not normalizable for all values of $s$ and $a_{i}$ under the standard inner product $||\Psi||^{2} = \int \prod_{i} dV_{i} |\Psi|^{2}$. In the neighborhood of a puncture, the integrand behaves as

\begin{equation} |z_1 - p_i|^{2(a_i+s-1)}\left(\log|z_{1}-p_i| \right)^{ (k + 2s - 2)}
\end{equation}
which constrains $a_{i} + s > 0$ for all $i$. This condition ensures that the wave function vanishes at the punctures, and is thus integrable in the neighborhood of $p_i$.
 Thus, the unnormalized Laughlin wave function on a punctured sphere has the form

\begin{equation}
	\Psi_{L} = \prod_{i = 1}^{N} \prod_{j = 1}^{n} (z_i - p_j)^{a_j}\prod_{i<j}(z_{i} - z_{j})^{\beta} \prod_{i = 1}^{N} |w'(z_{i})|^{-s}\left( {\rm Im}[ w(z_i) ]\right)^{\frac{1}{2}(k + 2s)}
\end{equation}
The explicit form of the metric is generally not available, but its asymptotic behavior near the cusp singularities is known. This turns out to be sufficient for our purposes later on.

\subsection{Wave function on the fundamental domain}\label{holopsi}
On smooth surfaces, a consequence of the holomorphicity of the wave function is that, aside from the normalization factor, it will have a holomorphic dependence on the complex structure. Since the puncture coordinates are essentially a parameterization of the complex structure, we expect there is a way to write the wave functions as holomorphic being $p_i$. Here, we argue that this is accomplished by going to the fundamental domain.

The Klein map $J(w)$ provides us with a way to parameterize the Laughlin wave function by coordinates on $\mathbb{H}/\Gamma$. The electron coordinates $z_i = J(w_i)$, where $w_i = w(z_i)$ is a point in the fundamental domain, and $x_i = w(p_i)\in \mathbb{R}$ are the cusps. The wave function should be properly expressed in terms of $s-$differentials

\begin{equation}
	\Psi_{L} = \prod_{i,j}(J(w_i) - x_j)^{2a_j}\prod_{i<j} \left(J(w_i) - J(w_j)\right)^{\beta} \prod_{i = 1}^{N} e^{\frac{1}{2} Q(J(w_i))} \left( \frac{dw_{i}}{{\rm Im}\, w_i} \right)^{- s}
\end{equation}
Under elements of $\gamma \in \Gamma$, the Klein map is strictly invariant $J(\gamma w) = J(w)$, and the wave function will transform by a phase \cite{Pnueli1994b}

\begin{equation}
	\Psi_{L}(\{\gamma w_i\}) = \prod_{i = 1}^{N}  \left( \frac{c w_i + d}{c \bar{w}_i + d} \right)^{\frac{1}{2}(k + 2s) } \Psi_{L}(\{w_{i}\}).
\end{equation}
Thus, the probability density is invariant under the action of $\Gamma$. Since they are related by a single-valued map $J(w)$, the normalization of this wave function on $\mathbb{H}/\Gamma$ is identical to the normalization on $\Sigma$. 

In this picture, the dependence on moduli is entirely contained in the dependence on the analytic map $J$, which is holomorphic in the cusp coordinates $p_i$. It is this property which is necessary for the Berry curvature to be K\"ahler (see Sec. (\ref{BerryCurvatureProof})). 

\section{Ward Identity}\label{WardIdentity}

The Ward identity (also called the loop equation) \cite{Zabrodin2006} provides a powerful approach to computing correlation functions and extracting sum rules for the Laughlin wave function. The identity itself is a trivial statement which asserts the vanishing of the integral of a total derivative

\begin{equation}
\int \prod_{k = 1}^{N} d^{2} z_{k} \, \sum_{i = 1}^{N} \frac{\partial}{\partial z_{i}} \left( \frac{1}{z - z_{i}} \prod_{i<j}|z_{i} - z_{j}|^{2\beta} \prod_{k} e^{ W(z_k , \bar{z}_k) + Q_{a}(z_k, \bar{z}_k)}  	\right) = 0.
\end{equation}
where $W = - B K / 2 + (1 - s) \phi $, and

$$Q_{a} = \sum_{i = 1}^{N} \sum_{k = 1}^{n} a_{k} \log |z_i - p_k|^{2}.$$
 Introducing the potential for the density

\begin{equation}\label{potrho}
	\varphi(z, \bar{z}) = -\beta \sum_{i} \log|z - z_{i}|^{2}, \quad - \Delta \varphi = 4\pi \beta \rho(z, \bar{z})
\end{equation}
We may write the Ward identity in integro-differential form

\begin{equation}\label{WI1}
	 \int \frac{\partial_{\xi} (W + Q_{a})}{z - \xi} \langle \rho(\xi)\rangle dV_{\xi} = - \frac{1}{2\beta}\langle (\partial_{z} \varphi(z))^{2}\rangle  + \frac{( \beta-2)}{2\beta} \langle \partial_{z}^{2} \varphi(z) \rangle
\end{equation}

This identity does not give us something for nothing. Notice the first term on the RHS is a two-point function evaluated at coincident points, whereas the rest of the identity involves one-point functions. This is the start of the famous BBGKY hierarchy \cite{Feynman}, and if we ever hope to extract something useful we must truncate the hierarchy and turn this into an equation that can be solved. It was argued in \cite{Zabrodin2006,Can2014, Can2015} that the connected two-point function of the potential field $\varphi$ is given by

\begin{equation}
	\langle \varphi(z) \varphi(z') \rangle_{c} = 4\pi \beta G(z, z')
\end{equation}
where $G(z, z')$ is the Green's function for the Laplacian $- \Delta G = \delta - 1/V$ which satisfies $\int G dV = 0$. From this, a crucial step, supported by all the available evidence but which remains a conjecture, is that this correlation function at coincident points must be regularized, and that the correct (and most natural) regularization is

\begin{equation}\label{UVreg}
\lim_{z \to z'}\langle \varphi(z) \varphi(z') \rangle_{c}	 = 4\pi \beta \lim_{z \to z'} \left( G(z, z') + 2\pi \log d(z, z')\right)
\end{equation}
where $d(z, z') = \int_{z}^{z'
} e^{\phi/2} |dz|$  is the geodesic distance between the two-points. According to this prescription,

\begin{equation}\label{schwarz}
	\langle (\partial_{z} \varphi(z))^{2} \rangle_{c} = 4\pi \beta \lim_{z \to z'}\partial_{z} \partial_{z'} \left( G(z, z') + 2\pi \log d(z, z')\right) = \frac{\beta}{6} \mathcal{S}[\phi].
\end{equation}
where $\mathcal{S}[\phi] = \{w(z), z\}$ is the Schwarzian of the metric (\ref{schwarz1}) discussed in Sec.(\ref{Geometry}). In this way, we are able to close the hierarchy and transform the Ward identity into a integro-differential equation for the one-point function. 

We wish to develop a large $k$ (equivalently large $N$) expansion for the density. This must be done with great care in the presence of cusps. Motivated by the analogy to the 2D Coulomb plasma, we hypothesize that the curvature singularities behave as a sort of impurity. In the screening phase, the effects of this impurity on local observables such as the density will not be long-ranged. Importantly, this does not preclude some long-range effects on the potential $\varphi$. Given this, we assume that the density has the following asymptotic expansion 

\begin{equation}\label{rhoasymp}
	\langle \rho \rangle = \bar{\rho} + \sum_{i=1}^{n} \rho_{i}  
\end{equation}
where $\bar\rho$ has a smooth expansion away from the cusps

\begin{equation}\label{smooth_exp}
	\bar{\rho} = k a_{1} + a_{0} + k^{-1} a_{-1} + ....
\end{equation}
and the density $\rho_{i}$ has weight concentrated near the cusp, and is exponentially suppressed away from $p_i$ on a scale set by the magnetic length, or the Debye length in the plasma picture. This means sufficiently far from the punctures, the density is $\langle \rho \rangle  =   \bar{\rho}$ to leading order in large $k$. Furthermore, the cusp density $\rho_{i}$ has support only in a vanishingly small neighborhood of $p_i$ in the large $k$ limit. 

This hypothesis is supported by rigorous results for the special case of $\beta = 1$, $s = 0$, and $a_k = 0$, for which $\langle \rho \rangle$ is the Bergman kernel on the diagonal \cite{Auvray2016, Sun2016}. Results on the density were also obtained for conical singularities on orbifolds (which constrains the possible opening angles) in \cite{Ross2009}, as well as conical singularities with arbitrary opening angle in \cite{Ameur2016,Ameur2016a}. In this context, the screening behavior is referred to as localization of the Bergman kernel, since only local information (about the holomorphic sections) is needed to describe the density in regions far from the cusps \cite{Auvray2016}. From the Coulomb plasma perspective, $\beta= 1$ is not a special value, and it is reasonable to assume that the qualitative features of the density such as its screening behavior, should be preserved for $\beta \ne 1$.

\subsection{Smooth expansion away from punctures}

To get a handle on the smooth part of the expansion away from the singularities, we consider $z \in \Sigma$ sufficiently far from all the punctures, and apply the covariant anti-holomorphic derivative twice. In this region, $\langle \rho \rangle$ is nowhere vanishing, and we can rewrite the Ward identity using \eq{schwarz} as a nonlinear differential equation


\begin{equation}\label{ward_diff}
	 \langle \rho \rangle \approx   \frac{ B}{2\pi \beta}   + \frac{(\beta - 2 s)}{8\pi \beta} R + \frac{(2 - \beta)}{8\pi \beta} \Delta \log \langle \rho \rangle  - \frac{1}{12\pi^{2} \beta}e^{ - \phi} \partial_{\bar{z}} \left(  \frac{\partial_{\bar{z}} \mathcal{S}[\phi]}{e^{\phi} \langle \rho \rangle} \right),
\end{equation}
which clearly reveals the structure of the expansion \eq{smooth_exp}. Since the magnetic field and curvature are constant away from the punctures, the asymptotic expansion truncates and the solution for the density is simply 

\begin{equation}\label{rhobarWI}
	\bar{\rho} = \frac{\nu B}{2\pi } + \frac{\mu_{H}}{4\pi} R_{0} = \frac{2}{\lambda \pi} \left(  \nu k- \mu_{H}\right)
\end{equation}
where we identify $\nu = \beta^{-1}$ and $\mu_{H} = \frac{1}{2}\nu (\beta  - 2  s)$ in \eq{ward_diff}.

\subsection{Charge Moment}
Integrating \eq{rhobarWI} over the entire surface gives

\begin{equation}
	\int_{\Sigma} \bar{\rho} dV = \frac{ \nu B V}{2\pi } + \frac{\mu_{H}}{4\pi} R V = \nu N_{\phi} + \mu_{H} (2-n ).
\end{equation}
Comparing this to the formula for $N$ \eq{Nmax}, implies 

\begin{equation}
	\int_{\Sigma} \left( \langle \rho \rangle - \bar{\rho}\right) = n \mu_{H} - \sum_{i} \nu a_{i}.
\end{equation}
Since the integrand has support only along the divisor $X$, we may rewrite the LHS as

\begin{equation}
	\sum_{i = 1}^{n} \lim_{\epsilon \to 0}\int_{D_{\epsilon}(p_{i})} \rho_{i}\, dV  = n \mu_{H} - \sum_{i} \nu a_{i}.
\end{equation}
If all the fluxes are equal, the manifest symmetry of the configuration of punctures would imply that any excess charge is shared equally among all the punctures. Otherwise, by merely relabeling punctures (i.e. doing nothing) we could transfer charge. If then one flux is slightly adjusted, the screening property would imply that this procedure is invisible to the other punctures, and only affects the local density. 
Therefore, we conclude that the local charge moment is
\begin{equation}\label{localcharge}
\lim_{\epsilon \to 0}\int_{D_{\epsilon}(p_{i})} \, \rho_{i} \, dV = \mu_{H} - \nu a_{i}
\end{equation}
The charge moment also follows from the leading asymptotic expansion of the potential \eq{potrho} 
\begin{equation}\label{phiasymp}
	\langle \varphi \rangle = -\frac{B}{2} K + Q_{a}  + \beta \mu_{H} \phi + \tilde{\varphi}
\end{equation}
which has the correct pole behavior near the punctures

\begin{equation*}
	\langle \varphi \rangle \sim - \beta \frac{( \mu_{H}- \nu a_{i})}{z- p_i},
\end{equation*}
to reproduce \eq{localcharge}. Here, the correction term $\tilde{\varphi} = O(k^{-1})$, and as we argue below, should consist of functions highly concentrated near the singularities.

\subsection{Screening of Potential} \label{sectionCWI}
The screening hypothesis for the Laughlin state, encapsulated in \eq{rhoasymp}, does not imply that $\langle \varphi \rangle$ is also exponentially suppressed away from ``impurities". In particular, the leading asymptotics \eq{phiasymp} imply a roughly logarithmic {\it growth} in the neighborhood of a singularity. What can we say, however, about $\tilde{\varphi}$ which enters at higher orders? This can be seen from another rewriting of the Ward identity. Applying just a single anti-holomorphic derivative to eliminate the integral in \eq{WI1}, and using \eq{phiasymp}, we find that away from a singularity where $\langle \rho \rangle >0$, the subleading correction $\tilde{\varphi}$ satisfies




\begin{equation}
	\partial_{z} \tilde{\varphi} = \frac{(2 - \beta)}{2} \partial_{z} \log \langle \rho \rangle + \frac{1}{2\pi \beta \langle \rho \rangle } e^{ - \phi} \partial_{\bar{z}} \langle (\partial \varphi)^{2} \rangle_{c}
\end{equation}
In this form, it is clear that $\partial_{z} \tilde{\varphi}$ is suppressed away from the punctures. This follows because $\log \langle \rho \rangle \sim \log \rho_{0} + \rho_{i} / \rho_{0}$ which is a constant plus a rapidly decaying term. Similarly, the second term involves to leading order the Schwarzian of the metric according to \eq{schwarz}. Acting on the Schwarzian by the antiholomorphic derivative will result in $\partial_{\bar{z}} \mathcal{S}[\phi]  = \pi \sum_{i}\left( -\partial_{z}  + \gamma_{i} \right)\delta(z - p_i)$, which is also clearly concentrated near the puncture.

\subsection{Sum Rules are Residues of the Ward Identity}

Before diving into the technical details of the derivation of the sum rules mentioned in the introduction, we will give a brief outline of our plan of attack. The first step is to rework the Ward identity to the form

\begin{eqnarray}\label{WI2}
\int \frac{\partial_{\xi} W}{z - \xi} \left( \langle \rho (\xi ) \rangle - \bar\rho \right) dV_{\xi}  = \mathcal{T}(z).
\end{eqnarray}
Here, $\mathcal{T}(z)$ acts as a generating function for the sum rules, as we will now show. Using \eq{rhoasymp}, we can rewrite the integral as

\begin{equation}
	\mathcal{T}(z) =  \sum_{i=1}^{n-1} \mathcal{T}_{i}(z), \quad \mathcal{T}_{i}(z) = \int \frac{\partial_{\xi} W}{z - \xi}\,  \rho_{i}(\xi) dV_{\xi}
\end{equation}
where the behavior at $p_n$ must be treated separately (see \ref{App_inf}). Under the hypothesis of the previous section, the integrand in $\mathcal{T}_{i}$ is highly concentrated near the puncture $p_{i}$, so that we may develop a series expansion of the integrand around $z = p_i$ 

\begin{eqnarray*}
\mathcal{T}_{i}(z) \approx  \frac{1}{(z - p_i)} \int_{|\xi - p_i| < |\zeta_i| }\partial_{\xi} W \, \rho_{i} dV_{\xi} + \frac{1}{(z - p_i)^{2}} \int_{|\xi - p_i| < |\zeta_i|} (\xi - p_i) \partial_{\xi} W \rho_{i} dV_{\xi} + ...
\end{eqnarray*}

Although this expansion will also include terms like $1/\zeta_i^{n} \log^{m}|\zeta_i |$ (as subleading corrections from the limits of integration), we are only interested in the meromorphic terms since these encode the sum rules. Namely, the translation sum rule is the residue of the simple pole 

\begin{eqnarray}\label{translateT}
	\mathcal{L}_{-1}(p_i) =  \lim_{\epsilon \to 0} \frac{1}{2\pi i}\oint_{C_{\epsilon}(p_i)} \mathcal{T}(z)\, dz  = \int \partial_{\xi} W \left( \langle \rho (\xi) \rangle - \bar\rho \right) dV_{\xi}
\end{eqnarray}
where $C_{\epsilon}(p_i)$ is the contour $p_i + \epsilon e^{i \theta}$, $\theta \in [0,2\pi)$. The dilatation sum rule is the residue of the second order pole
 
 \begin{equation}\label{dilateT}
 \mathcal{L}_{0}(p_i) = 	\lim_{\epsilon \to 0}\frac{1}{2\pi i}\oint_{C_{\epsilon}(p_i)} (z - p_i) \mathcal{T}(z) dz  = \int (\xi - p_i) \partial_{\xi} W \, \left( \langle \rho (\xi) \rangle - \bar\rho \right) dV_{\xi}.
\end{equation}
These results imply that, for these sum rules to be nonzero, the generating function $\mathcal{T}_{i}(z)$ must exhibit {\it algebraic} decay away from the puncture. This structure that emerges can be tied to a conformal symmetry arising due to the presence of singularities \cite{Laskin2016}. Indeed, without the punctures, $\mathcal{T}_{i}$ would not exhibit such behavior. The behavior at infinity is a bit subtle and is treated in \ref{App_inf}.  

\subsection{Computing Sum Rules from iterative solution of Ward identity}\label{sum_from_res2}
This section contains a detailed technical derivation of the leading contribution to $\mathcal{T}(z)$ in the limit of large $N$. The main result is \eq{T_poles}-\eq{T_poles2} which shows the first and second order poles of $\mathcal{T}(z)$ near the cusp singularities.  


In order to get \eq{WI1} into the form \eq{WI2}, we must subtract from both sides the expression

\begin{eqnarray*}
	\int \frac{\partial_{\xi}W}{z -\xi} \bar\rho dV_{\xi} &= \pi \bar\rho \left( - \frac{B}{4}  ( \partial_{z}K(z))^{2} + (1 - s) \partial_{z}^{2} K \right),\\
\end{eqnarray*}

and 
\begin{eqnarray*}
\int \frac{\partial_{z} Q_{a}}{z - \xi} \langle \rho(\xi)\rangle dV_{\xi}	 & = \frac{1}{\beta} \sum_{i} \frac{ a_{i}}{z - p_{i}} \Big\langle  (\partial_{z}\varphi) (p_{i}) -  \partial_{z}\varphi(z)\Big\rangle\\
& = - \frac{1}{\beta} \partial_{z} Q_{a} \langle \partial_{z} \varphi(z) \rangle + \frac{1}{\beta}\sum_{i} \frac{ a_{i}}{z - p_{i}} \langle  (\partial_{z}\varphi)(p_{i})\rangle ,
\end{eqnarray*}
where $\langle (\partial_{z}\varphi)(p_i) \rangle = \lim_{z \to p_i} \langle \partial_{z}\varphi(z) \rangle$. This limit is rather subtle, but we will ultimately not need it explicitly to prove our main result. 

Utilizing the UV regularization \eq{UVreg}, the Ward identity now is written up to order $O(1)$
\begin{eqnarray*}
\int \frac{\partial_{\xi} W}{z - \xi} \left( \langle \rho (\xi) \rangle - \bar{\rho} \right) dV_{\xi} &= 	-\pi \bar{\rho} \left( - \frac{B}{4}  ( \partial_{z}K(z))^{2} + (1 - s) \partial_{z}^{2} K \right)\\
&  - \frac{1}{12} \mathcal{S}[\phi]  - \frac{1}{2\beta} \left( \langle\partial_{z} \varphi \rangle \right)^{2} -\frac{(2 - \beta)}{2\beta} \langle \partial_{z}^{2} \varphi \rangle\\
&  +  \beta^{-1} \partial_{z} Q_{a} \langle \partial_{z} \varphi(z) \rangle - \frac{1}{\beta}\sum_{i} \frac{ a_{i}}{z - p_{i}} \langle  (\partial_{z}\varphi)(p_{i})\rangle \\
& + O(k^{-1})
\end{eqnarray*}

Next, we must substitute the asymptotic expansion of $\langle \varphi \rangle$, which we also know up to order $O(1)$. The relevant terms are

\begin{eqnarray}
	- \frac{1}{2\beta} \left( \langle \partial_{z} \varphi \rangle \right)^{2} 
	& = - \frac{B^{2}}{8\beta} (\partial_{z} K)^{2} +  \frac{B}{2\beta} \partial_{z} K \left( \partial_{z} Q_{a} + \beta \mu_{H} \partial_{z}\phi \right)\\
	&- \frac{1}{2\beta} (\partial_{z}Q_{a})^{2} - \mu_{H} \partial_{z}Q_{a} \partial_{z}\phi - \frac{\beta \mu_{H}^{2}}{2} (\partial_{z} \phi)^{2} \\
	& + \frac{B}{2\beta} \partial_{z} K \partial_{z} \tilde{\varphi}  + O(k^{-1})\\
	- \frac{(2 - \beta)}{2 \beta} \partial_{z}^{2} \langle \varphi \rangle &= - \frac{(2 - \beta)}{2 \beta}  \partial_{z}^{2} \left( - \frac{B}{2} K + Q_{a} + \beta \mu_{H} \phi\right) + O(k^{-1}) \\
	 \partial_{z} Q_{a} \langle \partial_{z} \varphi(z) \rangle & =   \partial_{z} Q_{a} \left( - \frac{B}{2} \partial_{z} K + \partial_{z} Q_{a} + \beta \mu_{H} \partial_{z} \phi + \partial_{z} \tilde{\varphi}\right)
\end{eqnarray}

Combining these we obtain the generating function \eq{WI2} up to order one

\begin{eqnarray}\label{T1}
\mathcal{T}(z) 
	&= \frac{\mu_{H} B}{2} \left(  \frac{R_{0}}{8} (\partial_{z}K)^{2}   -  \nabla_{z}\partial_{z} K \right)
	 + (s-1) \mu_{H} \left(   \frac{ R_{0}}{4 } \partial_{z}^{2} K +\partial_{z}^{2}\phi \right)\label{T1line1}\\
	&  - \frac{c_H}{12} \mathcal{S}[\phi]+\frac{1}{2\beta} (\partial_{z} Q_{a})^{2}   + \frac{(\beta - 2 )}{2\beta} \partial_{z}^{2} Q_{a}\label{T1line2} \\
	 &- \frac{1}{\beta}\sum_{i} \frac{ a_{i}}{z - p_i} \langle (\partial_{z}\varphi)(p_i)\rangle  
	  + \frac{B}{2 \beta} \partial_{z} K \partial_{z} \tilde{\varphi} + O(k^{-1})\label{T1line3}
\end{eqnarray}
where we have made use of the covariant holomorphic derivative $\nabla_{z}= (\partial_{z} - \partial_{z}\phi )$. As we argued before, the difference $\langle \rho \rangle - \bar\rho$ has finite support along the divisor $X$. This should be reflected in the structure of the RHS. In particular, we take $z \to p_i$, and study each term. 

First, we have the leading term which scales with magnetic field
\begin{eqnarray*}
\frac{R_{0}}{8} (\partial_{z}K)^{2} - \nabla_{z}\partial_{z} K  = \frac{ \lambda \gamma_{i}}{\zeta_{i} \log |\zeta_{i}/a_{i}(1)|^{2}} + O(\log^{-2}|\zeta_i|), 
\end{eqnarray*}
The second term in the first line \eq{T1line1} is
\begin{eqnarray*}
	\partial_{z}^{2} \left( \frac{R_{0}}{4} K + \phi \right) & = \frac{1}{(z - p_i)^{2}} + O(1),
\end{eqnarray*}
and the contribution from the fluxes in line \eq{T1line2} near the punctures, 
\begin{eqnarray*}
(\partial_{z}Q_{a})^{2} + (\beta - 2) \partial_{z}^{2}Q_{a} = 	2\left(\sum_{i\ne j}\frac{a_i a_j}{p_i - p_j} \right)\frac{1}{z - p_i} +\frac{(a_i^{2} - a_i(\beta - 2))}{(z - p_i)^{2}} , 
\end{eqnarray*}

Finally, it remains to argue away the final term $ \partial_{z} K\partial_{z} \tilde{\varphi}$. As we saw in section (\ref{sectionCWI}), $\partial_{z} \tilde{\varphi}$ is concentrated near the cusp singularity, and will not contribute to either single or double poles.

Putting all of this together, we find that near the puncture point, the generating function is
\begin{eqnarray}\label{T_poles}
	\mathcal{T}_{i}(z) = &\left[(s - 1) \mu_{H}  - \frac{c_{H}}{24} + \frac{1}{2\beta} (a_i^{2} - a_i (\beta - 2)) \right] \frac{1}{(z - p_i)^{2}} \\
	& + \left[ - \frac{c_{H}}{12} \gamma_{i} -  \frac{a_{i}}{\beta} \langle (\partial_{z} \varphi) (p_i) \rangle + \frac{1}{\beta} \sum_{j \ne i} \frac{a_i a_j}{p_i - p_j} \right]\frac{1}{z - p_i}\label{T_poles2}\\
	& + O\left(\frac{1}{(z - p_i) \log|z-p_i|} \right) + O(k^{-1}).
\end{eqnarray}
%
Then according to \eq{translateT} and \eq{dilateT}, the sum rules are
\begin{eqnarray}
	\mathcal{L}_{-1}(p_i) = - \frac{c_{H}}{12} \gamma_{i} - \nu a_{i} \langle (\partial_{z} \varphi) (p_i) \rangle +  \sum_{j \ne i} \frac{\nu a_i a_j}{p_i - p_j} \label{sum_rulesT}\\
	\mathcal{L}_{0}(p_i) 
	 = - \frac{c_{H}}{24}   - \frac{1}{2} a_{i}(1 - 2\nu s - \nu a_{i})+ (s-1) (\mu_{H} - \nu a_{i}) \label{sum_rulesD}
\end{eqnarray}
for $a_{i} = 0$ we recover Prop.(\ref{sumrulesprop}). We discuss the behavior at infinity in \ref{App_inf}. 
\subsection{Moments from the Sum rules}\label{mom_from_sum}

The translation and dilatation sum rules can be written in terms of the moments defined in \eq{moments}. This can be seen by direct evaluation of the integrals involved in the sum rules. We expand the integrand in a series around the puncture point. Using the asymptotics (\ref{dcusprefined}-\ref{dcusprefined2}) and \eq{cusprefined}, we get

\begin{eqnarray}
	\partial_{z} W = \left( k + 2s - 2\right) \frac{1}{\log |\zeta_i / a_i(1)|^{2}} \left( \frac{1}{\zeta_i} + \gamma_{i} - \frac{2 {\rm Re}(\gamma_{i}\zeta_{i}) }{\zeta_{i}\log |\zeta_{i}|^{2}} + O(\zeta_{i}^{2})\right) \\
	+ (1 - s) \left( - \frac{1}{\zeta_{i}} + \gamma_{i} + O(\zeta_{i}^{2}) \right)\\
e^{\phi} = \frac{\lambda }{|\zeta_i |^{2} \log^{2}|\zeta_{i}|^{2}} \left( 1 +2 {\rm Re}( \gamma_{i} \zeta_{i}) - \frac{2{\rm Re}(\gamma_{i} \zeta_{i})}{\log|\zeta_{i}|} + O(\zeta_{i}^{2})\right)
\end{eqnarray}

Inserting this into \eq{dilateT} and assuming $\rho_{i}(|\zeta_{i}|)$ is radially symmetric in the neighborhood of the puncture, the angular integration will kill most terms, leaving for the dilatation sum rule
\begin{eqnarray}\label{dilate_moments}
	\mathcal{L}_{0} &=  \int_{D(p_i)} \left[ \frac{k+2s - 2}{\log|\zeta_{i}/a_{i}(1)|^{2}} + (s-1)\right] \, \rho_{i} \frac{d^{2} z}{|\zeta_{i}|^{2} \log^{2}|\zeta_{i}/a_{i}(1)|^{2}}\nonumber\\
	 &= - M_{1} + (s-1) M_{0}.
\end{eqnarray}
Comparing \eq{dilate_moments} to \eq{sum_rulesD} gives the moments for charge and dimension stated in Prop. \eq{momentsprop}. 

Carrying out this procedure for the translation sum rule \eq{translateT} implies

\begin{eqnarray}
	\mathcal{L}_{-1} &= \int_{D(p_i)} \left[ \frac{(k+2s-2)}{\log|\zeta_{i}/a_{i}(1)|^{2}} ( 2 \gamma_{i}) + O\left(\frac{1}{\log^{2}|\zeta_{i}|^{2}} \right)\right] \, \rho_{i}(|\zeta_{i}|) \frac{d^{2}z}{|\zeta_{i}|^{2} \log^{2}|\zeta_{i}/a_{i}(1)|^{2}}\nonumber\\
	 &= - 2 \gamma_{i} M_{1} + O(k^{-1})\label{translate_moments}
\end{eqnarray}
Using the value of $M_{1}$ which follows from \eq{dilate_moments}, comparing \eq{translate_moments} to \eq{sum_rulesT} implies that the correlator $\langle (\partial \varphi)(p_i) \rangle$ has the form

\begin{eqnarray}
	 \langle (\partial \varphi) (p_i) \rangle  =   \beta ( 2  \mu_{H} -  \nu a_{i})\gamma_{i}  + \sum_{j\ne i} \frac{ a_{j}}{p_i - p_j} + O(k^{-1})
\end{eqnarray}
The behavior at infinity is similar, and we present it in \ref{App_inf}.

\section{Berry Curvature on Moduli space}\label{BerryCurvatureProof}

In this section, we prove Theorem (\ref{Result1}) by computing a variational formula for the generating functional and applying the sum rules \eq{sumrulesprop}. The goal is to find the Berry curvature of the Laughlin state under adiabatic transport in the moduli space of the punctured Riemann sphere $\mathcal{M}_{0,n}$. By definition, the Berry connection one-form is $\mathcal{A}  = \mathcal{A}_{p} + \mathcal{A}_{\bar{p}}$, where
\begin{equation}
\mathcal{A}_{p} = i \sum_{i = 1}^{n-3}\langle \Psi | \partial_{p_{i}} | \Psi \rangle d p_{i}	, \quad \mathcal{A}_{\bar{p}} = i \sum_{i = 1}^{n-3} \langle \Psi | \partial_{\bar{p}_{i}} | \Psi \rangle d\bar{p}_{i}
\end{equation}
The Berry curvature is the curvature two-form of this Abelian connection $\Omega = d \mathcal{A} = \partial_{\bar{p}} \mathcal{A}_{p} + \partial_{p} \mathcal{A}_{\bar{p}}$, where $\partial_{p} = \sum_{i = 1}^{n-3} dp_i \partial_{p_{i}}$ is the Dolbeault operator. A holomorphic state is defined by its mostly holomorphic dependence on the moduli parameters (see Section \ref{holopsi})

\begin{equation}
\Psi_{L} = \frac{1}{\sqrt{\mathcal{Z}[\{p_{i}, \bar{p}_{i}\}]}} F \left( \{z_{i}, \bar{z}_{i}\}| p_{1}, ..., p_{n}\right)	
\end{equation}
which implies the Berry connection and curvature are given by

\begin{equation}\label{berrykahler}
	\mathcal{A}_{p} =  \frac{i}{2} \partial_{p} \log \mathcal{Z} , \quad \Omega = i \partial_{\bar{p}} \partial_{p} \log \mathcal{Z} = -\sum_{i,j} \frac{\partial^{2} \log \mathcal{Z}}{\partial p_i \partial \bar{p}_{j}} \, i \, dp_{i} \wedge d\bar{p}_{j} ,
\end{equation}
with the Berry phase $\gamma_{B} = \int_{R}\Omega$, where $R$ is a surface enclosed by a path in moduli space. Eq.\eq{berrykahler} shows that the generating functional \eq{Z} is a K\"ahler potential for the Berry curvature. This was observed to occur in the IQH case on a torus \cite{Levay1995}, argued to hold for the FQH case on a torus \cite{Read2009a, Read2011} and cylinder \cite{Tokatly2009}, and recently argued to be a more general property of QH states on Riemann surfaces \cite{Klevtsov2015}. In the present case, we find that this fact is tied intimately to the uniformization theorem for Riemann surfaces, and relies on the existence of the conformal maps $w(z)$ \eq{developing_map} and its inverse $J(w)$ \eq{Klein_map} that is guaranteed by this famous theorem. 

\subsection{Derivation of the Berry Curvature}
In considering the derivative of the generating functional, we are free to work in either coordinates, i.e. on $\Sigma$ or its universal cover $\mathbb{H}/\Gamma $. In order to compute the Berry connection, we stay in $\Sigma$ and find

\begin{equation}
\mathcal{A}_{p_i} =  \frac{i}{2} \frac{\partial}{\partial p_{i}} \log \mathcal{Z}	 =  \frac{i}{2} \oint_{C_{\epsilon}(p_{i})} \langle \rho \rangle e^{\phi} d\bar{z} + \frac{i}{2} \int_{\Sigma}  \partial_{p_{i}} (W + Q_{a}) \langle \rho \rangle dV .
\end{equation}
The first term must vanish since the density in a small disk around a cusp is radially symmetric with corrections of order $O(z - p_i)$. We are left to evaluate the variational formula
\begin{eqnarray*}
	\partial_{p_i} \log \mathcal{Z} 
	& = \int_{\Sigma} \partial_{p_i} W\left( \langle \rho \rangle  - \bar{\rho}\right) dV - \frac{a_{i}}{\beta} \langle (\partial_{z} \varphi)(p_i)\rangle +  \bar{\rho} \int_{\Sigma}  \partial_{p_i}W  dV .
\end{eqnarray*}
The first integral picks up the contributions coming from the cusp singularities. The last integral does not depend on the density and is only multiplied by the constant factor $\bar\rho$. We evaluate this first, and then proceed with analyzing the remaining terms. To do this, we make use of the property that the K\"ahler potential of the form $K = - \lambda \log {\rm Im} w(z) $ satisfies

\begin{equation*}
(\partial_{z} K) \partial_{\bar{z}} (\partial_{p_i} K) =  \frac{\lambda \partial_{z} w}{{\rm Im} w} \left( \lambda \partial_{p_i} w \partial_{\bar{z}}   \frac{1}{({\rm Im} w)} \right)= \partial_{p_i} K \partial_{\bar{z}} \partial_{z} K = (\partial_{p_i} K ) e^{\phi},
\end{equation*}
where we used the fact that $w(z)$ is holomorphic in $p_i$, so that $\partial_{p_i} \bar{w} = 0$. From this we find the identity

\begin{equation}
	\partial_{\bar{z}} \left( \partial_{z} K \partial_{p_i} K \right) = 2 (\partial_{p_i} K )e^{\phi}
\end{equation}
Similarly, we use

\begin{equation}
	(\partial_{p_{i}}\phi )e^{\phi} = \partial_{p_i} \left( e^{\phi}\right) = \partial_{\bar{z}} \left( \partial_{p_i} \partial_{z} K \right)
\end{equation}
The last equality follows by interchanging the order of derivatives $\partial_{\bar{z}}$ and $\partial_{p_i}$. Using these, the integral of interest becomes
\begin{eqnarray*}
	&\int_{\Sigma} (\partial_{p_i} W)e^{\phi} d^{2} z = \int_{\Sigma} \partial_{\bar{z}} \left( - \frac{B}{4} \partial_{z} K \partial_{p_i} K + (1 - s) \partial_{p_i} \partial_{z} K \right) d^{2} z\\
	& = \frac{1}{2i} \left( \oint_{|z| \to \infty} dz - \sum_{i =1}^{n-1} \oint_{C_{\epsilon}(p_i)} dz \right)  \left( - \frac{B}{4} \partial_{z} K \partial_{p_i} K + (1 - s) \partial_{p_i} \partial_{z} K \right)
\end{eqnarray*}

Taking a look at the asymptotics (\ref{kahlermod}), (\ref{kahlermodinf}), (\ref{kahlerrefined}), and (\ref{kahlerrefinedinf}), it is clear that near the punctures (and at infinity), the functions $\partial_{z} K \partial_{p_i} K$ and $\partial_{p_i} \partial_{z} K$ will be suppressed by factors $\log^{-1}|z - p_i|$ (and $\log^{-1}|z|$ at infinty). Therefore, the final result of these contour integrals will be zero, and the calculation of the Berry connection reduces to 
\begin{eqnarray}\label{var1}
	\partial_{p_i} \log \mathcal{Z} & = \int_{\Sigma}\partial_{p_i} W\left( \langle \rho \rangle  - \bar{\rho}\right) dV - \frac{a_{i}}{\beta} \langle (\partial_{z} \varphi)(p_i)\rangle.
\end{eqnarray}
At this point we employ the asymptotic expansion (\ref{rhoasymp}) to write the first integral as a sum over the local cusp densities around each puncture, giving 
\begin{eqnarray}\label{var2}
	\partial_{p_i} \log \mathcal{Z} & =\sum_{j}  \int_{D_{\epsilon}(p_j)} \partial_{p_i} W  \rho_{j} dV - \frac{a_{i}}{\beta} \langle (\partial_{z} \varphi)(p_i)\rangle
\end{eqnarray}
Next, we replace the function $\partial_{p_i}W$ by its asymptotic form near the punctures by applying the Ahlfors lemma \eq{ahlfors1} and \eq{ahlfors2} to write
\begin{eqnarray}
\partial_{p_i} W &= - \partial_{z} W \dot{F}^{i} - (1 - s) \dot{F}_{z}^{i}
\end{eqnarray}
and using the expansion of the quasiconformal map
\begin{eqnarray}
	\dot{F}^{i} &= \delta_{ij} + (z - p_j) \dot{F}_{z}^{i}(p_j) + ... \quad z \to p_j.
\end{eqnarray}
The integrals in \eq{var2} now read for $j \ne n$
\begin{eqnarray*}
\int_{D_{\epsilon}(p_j)} \partial_{p_i} W \rho_{j} dV =  \int_{D_{\epsilon}(p_j)}  \left( - \partial_{z} W \delta_{ij} - (1 - s) \dot{F}_{z}^{i}(p_j)  - (z - p_j) \partial_{z} W \dot{F}_{z}^{i}(p_j) \right) \rho_{j}dV.
\end{eqnarray*}
To get the contribution from $p_n = \infty$, we use the asymptotic formula
\begin{eqnarray}
\partial_{p_i} W = - \frac{(k + 2s - 2)}{\log|z/a_{n}(-1)|^{2}} \partial_{p_i} \log |a_{n}(-1)|^{2} + \frac{f(|z|)}{z} + ... \quad z \to p_n,
\end{eqnarray}
and evaluate the integral in the neighborhood of $p_n$, by using local coordinates $\zeta = 1/z$.  The unspecified function $f(|z|)$ will not give any contribution at order one. Employing the local moment around $p_n$ (see \ref{App_inf})
\begin{equation}
\int_{D_{\epsilon}(p_n)}  \frac{k + 2s - 2}{\log |\zeta a_{n}(-1)|^{2}} \rho_{n}(\zeta) \,\frac{\lambda d^{2}\zeta}{|\zeta|^{2} \log^{2}|\zeta a_{n}(-1)|^{2}}  = - M_{1}(p_n) ,
\end{equation}
we find that the singularity at $p_{n}$ will contribute to \eq{var1} a term 

\begin{equation}
	\int_{|z| \to p_n} \partial_{p_i} W \left( \langle \rho \rangle - \bar{\rho} \right) dV = - M_{1}(p_n) \partial_{p_i} \log |a_{n}(-1)|^{2}
\end{equation}
with $O(k^{-1})$ corrections. The Berry connection can then be written in terms of sum rules and moments as
\begin{eqnarray*}
	\partial_{p_i} \log \mathcal{Z} = - \mathcal{L}_{-1}(p_i) + &\sum_{j = 1}^{n-1} \left[ (s-1) M_{0}(p_j)  - \mathcal{L}_{0}(p_j) \right]\dot{F}_{z}^{i}(p_j)\\
	 &+ M_{1}(p_n) \partial_{p_i} \log H_{n} 
	- \frac{a_{i}}{\beta} \langle (\partial_{z} \varphi)(p_{i}) \rangle
\end{eqnarray*}
where we have used the definitions of the translation sum rule $\mathcal{L}_{-1}$ \eq{translateT} and dilatation sum rule $\mathcal{L}_{0}$ \eq{dilateT}, as well as the charge $M_{0}$ \eq{charge}, and the T-Z K\"ahler potentials \eq{Hdef}. This equation is exact up to order one with corrections appearing at $O(k^{-1})$. Evaluating using the sum rules obtained in the previous section \eq{sum_rulesT} and \eq{sum_rulesD} gives



\begin{eqnarray}
	\partial_{p_i} \log \mathcal{Z} = \frac{c_{H}}{12} \gamma_{i} - \sum_{j \ne i} \frac{\nu a_{i} a_{j}}{p_i - p_j} &+ \sum_{j} \left[ \frac{c_{H}}{24}  + h_{a_j} \right]\dot{F}_{z}^{i}(p_j)\\
	&-\left[ \frac{c_{H}}{24}  + h_{a_j} \right] \partial_{p_i} \log H_{n}
\end{eqnarray}
where we recall $h_{a} = a (2\mu_{H} - \nu a)/2$ is the dimension or spin of the flux $a$. Using then the explicit formula \eq{Fasymp} with \eq{Hdef} $\dot{F}_{z}^{i}(p_j) = \partial_{p_i} \log H_j$, and comparing to \eq{SLmodvar} we find 

\begin{eqnarray*}
	\partial_{p_i} \log \mathcal{Z} 
	= -\frac{c_{H}}{24\pi } \partial_{p_i} \mathscr{S}_{L} - \sum_{j \ne i} \frac{\nu a_{i} a_{j}}{p_i - p_j} &+ \sum_{j = 1}^{n} h_{a_{j}} \partial_{p_i} \log H_{j}
\end{eqnarray*}
where $\mathscr{S}_{L} = S_{L} - \pi \log H$ is the global K\"ahler potential on moduli space \eq{SLmod}. 
The Berry curvature can be split up in the form

\begin{equation}
	\Omega = \Omega_{g} + \Omega_{a} + O(k^{-1})
\end{equation}
where the geometric piece is presented as theorem \eq{Result1} in the introduction and is thus independent of the fluxes $a_{i}$ 

\begin{equation}\label{Omega_g}
	\Omega_{g} = - \frac{c_{H}}{24\pi} i \partial_{\bar{p}} \partial_{p} \mathscr{S}_{L} = - \frac{c_{H}}{12\pi} \left( \omega_{WP} - \frac{4\pi^{2}}{3} \omega_{TZ}\right),
\end{equation}
whereas the electromagnetic part does not depend on the central charge and is given by
\begin{eqnarray}\label{Omega_a}
\Omega_{a} 
& =   \sum_{ i<j} 2\pi \nu a_{i} a_{j} \, \delta(p_i - p_j)\, \left( \frac{i}{2} dp_i \wedge d\bar{p}_{j}\right) + \frac{8\pi}{3} \sum_{i = 1}^{n} h_{a_{i}} \omega_{TZ,i}
\end{eqnarray}

The first term in $\Omega_{a}$ can be interpreted as the mutual statistics of the quasi-holes sitting on the punctures. The conformal dimension of the quasi-holes $h_{a_i}$ was shown in \cite{Can2015} to contribute to a gravitational analog of the Aharanov-Bohm effect, producing a Berry curvature $h_{a_i} {\rm Ric}$. If the path encircles a cusp singularity $p_j$, there will be a contribution $h_{a_i} 2\pi + h_{a_{j}} 2\pi $. This interpretation is difficult to adapt directly to the present context, since the flux tubes now sit directly on top of the curvature singularities. However, it would appear that this term is a descendant of the spin-curvature coupling. It is thus natural to conjecture on physical grounds that the T-Z metric $\omega_{TZ,i}$ encodes this structure and has singularities as $p_i \to p_j$. Using this reasoning, it is possible to reproduce the formula for the total volume of the T-Z metric for $n=4$. We flesh out this argument in \ref{AppVolume}. 








\subsection{Discussion of Main result}

Here we spend some time discussing the main result, looking closer at special cases and comparing with the literature. 

\subsubsection{Braiding Cusp Singularities}

Setting the fluxes to zero, we can look at the braiding statistics of the cusp singularities. This is defined simply as the phase accumulated by the wave function under the adiabatic process of braiding one cusp around another, according to the Berry curvature (\ref{Omega_g}). 
Using the asymptotics for the accessory parameters \eq{accessory_asymp} and the T-Z potential \ref{AppVolume}, we find the Berry connection as two punctures are brought together is given by

\begin{eqnarray}
	\mathcal{A}_{p_{i}} = \frac{i}{2} \frac{c_{H}}{24} \left( 2\gamma_{i} + \partial_{p_i} \log H \right) &\to \frac{i}{2} \frac{c_{H}}{24} \left( - \frac{1}{p_{i} - p_{j}} + \frac{2}{p_i - p_j}\right),\\
	& = \frac{i}{2} \frac{c_{H}}{24} \frac{1}{p_i - p_j} , \quad p_i \to p_j,
\end{eqnarray}
Therefore, the phase accumulated by braiding cusp $p_i$ around $p_j$ is then

\begin{equation}
	B_{ij} = 2\int_{C(p_j)} dp_{i} \mathcal{A}_{p_i} = - \frac{c_{H}}{12}\pi ,
\end{equation}
where $C(p_j)$ is a contour enclosing $p_j$. In Ref. \cite{Laskin2016}, which considered piecewise flat surfaces with conical singularities, the exchange statistics was defined as half of this braiding phase with the T-Z contribution removed (since this represented the spin-curvature A-B phase). The generalization of their result for the exchange statistics for general singular geometries would be

$$\Phi_{ij} = -\pi \frac{c_{H}}{12} \frac{1}{2\pi i}\int_{C(p_{j})} \gamma_{i} dp_{i} = \pi \frac{c_{H}}{24}$$ 
For comparison, the result for conical singularities with opening angle $2\pi \alpha_{i}$ on a piece-wise flat surface is $\Phi_{ij} = \pi c_{H} \alpha_{i} \alpha_{j} /12$. Note that the result for cusps is not simply the $\alpha = 1$ limit for conical singularities on a flat surface. 
\subsubsection{Generating functional, 2D gravity, and holography}
According to \cite{Can2014,Ferrari2014}, the generating functional will have an asymptotic expansion in $k$. Using the variational formula above, combined with some recent results for the generating functional on surfaces with conical singularities \cite{Laskin2016,Klevtsov2016}, we expect this to be of the form

\begin{eqnarray}\label{gen_fun_full}
	\log \mathcal{Z} = &k^{2} A^{(2)} + k A^{(1)} + B \log k - \frac{c_{H}}{24\pi}  \mathscr{S}_{L} \nonumber\\
	&- \sum_{i <j<n} \nu a_{i} a_{j} \log |p_i - p_j|^{2} + \sum_{i=1}^{n} h_{a_{i}} \log H_{i}  + A^{(0)}+ O(k^{-1})
\end{eqnarray}
where $A^{(2)}$, $A^{(1)}$, $A^{(0)}$, and $B$ are $k$-independent constants which do not vary with the moduli $p_i$. Comparing to the large $k$ expansions in \cite{Can2014,Ferrari2014}, it appears that the true novelty of cusp singularities comes from the addition of the T-Z K\"ahler potential $\log H$, which serves to: 1) shift the Liouville action $\mathscr{S}_{L} - S_{L} = -\pi \log H$, and 2) couple to the spin or dimension of the fluxes via $h_{a_{i}} \log H_{i}$. It would be interesting to understand this better in terms of a gravitational effective action of a 2D quantum field theory.

These considerations reveal another connection which holography enthusiasts might find interesting. The modified Liouville action $\mathscr{S}_{L}$ is the renormalized volume of a particular hyperbolic 3-manifold \cite{Park2015}, and can thus be realized as the classical limit of a three dimensional gravity theory. The implication for the quantum Hall effect is not completely clear, but some thoughts in this direction were laid out in \cite{Gromov2016}, to the effect that amongst the dynamical degrees of freedom of the QHE, there is an as yet unappreciated ``gravitational" sector. In \cite{Laskin2015}, the partition function was argued to be expressed as a theory of random surfaces using collective fields. Alternatively, the generating functional was shown to follow from 2D free bosons coupled to background fields $B$ and $R$, with a string of $N$ vertex operator insertions \cite{Ferrari2014}. The presence of the Liouville action and its holographic interpretation as a 3D volume, pushed to its limit, implies that the large $N$ 2D QFT of \cite{Ferrari2014}, or the collective field theory of\cite{Laskin2015}, is dual to a 3D classical gravity. It was also suggested recently by S. Klevtsov and P. Wiegmann (see \cite{Klevtsov2016}) that in the presence of curvature singularities, the generating functional must be connected to correlation functions in quantum Liouville field theory. This interpretation is correct at the semi-classical level, since $S_{L}$ does indeed appear in $\log \mathcal{Z}$, but if there is any hope to make the connection to quantum Liouville theory, a better understanding of the T-Z potential $\log H$ is required.

\subsubsection{Conformal Dimension and Non-abelian statistics}
When all the fluxes are set to zero, $\Omega_{a} = 0$ and the Berry curvature can be expressed in terms of the dimension of the singularities

\begin{equation}\label{Berry_general}
\Omega = - \frac{2 M_{1}	}{\pi} \left( \omega_{WP} - \frac{4\pi^{2}}{3} \omega_{TZ}\right).
\end{equation}
 The details of the particular FQH wave function used in the calculation will affect the coefficient $M_{1}$, but not the symplectic form which it multiplies. The dimension, while an integrated quantity, is essentially a property of a localized impurity, the cusp singularity. This most likely means that our result should only hold for physical states that exhibit the main features of the screening behavior of Laughlin states, such as the exponential suppression of density correlation functions in the bulk.

Indeed, Eq. \eq{Berry_general} is a possible way to generalize our main result for Laughlin states to general holomorphic LLL states. According to (\ref{cohom}), the dimension $M_{1}$ is a topological invariant. However, it is not clear that it is universal, i.e. invariant on a plateau. If the remainder $O(k^{-1})$ terms in the geometric curvature are indeed exact forms as argued in the case of smooth geometry in \cite{Klevtsov2015}, then the only way to change this topological invariant would be through the local moments of the density. 

Perhaps a hint at the structure of the Berry curvature comes when we include flux tubes. Then, the coefficient of $\omega_{TZ}$ is modified, but $\omega_{WP}$ is unaffected. Another way to think of flux tubes is that they make the punctures distinguishable. For instance, the state with $a_{1}$ at $p_{1}$ and $a_{2}$ and $p_{2}$ is distinct and orthogonal to $a_{1}$ at $p_{2}$ and $a_{2}$ and $p_{1}$. Consequently, for a given vector $(a_{1}, ..., a_{n})$, and $\sum_{i} a_{i}>0$, there will be a degenerate subspace of states. Adiabatic transport of punctures will then cycle between these states, giving rise to non-abelian statistics (c.f. the case of genons in \cite{Gromov2016}). The universal contribution to the abelian Berry phase which is independent of the fluxes is the integrated W-P metric.  

Let us rephrase this observation in somewhat more physical terms. If the magnetic field is tuned slightly away from commensurability, there will be a proliferation of quasi-holes. One may reasonably assume that the singularities act as a sort of impurity potential and trap the quasi-holes. This will then lead to a degenerate ground state subspace mentioned above, and opens up the possibility of non-abelian braiding of punctures.

\subsubsection{Large spin}\label{LargeSpin}

In Ref.\cite{Levay1999b}, the vector potential $A_{\bar{z}} = (i k /4) \partial_{\bar{z}} \phi$ with $s = 0$ was used in \eq{LLLconstraint}, so that the LLL wave functions were zero modes of the operator 

\begin{equation}
	D_{\bar{z}} = \partial_{\bar{z}}  - i \frac{e}{\hbar}  \left( \frac{i k}{2} \partial_{\bar{z}} \phi \right) = \partial_{\bar{z}} + \frac{k}{4} \partial_{\bar{z}} \phi
\end{equation}
According to our formalism, this would correspond to moving all the magnetic flux into the spin by setting $s = k/2$. However, the curvature singularities will lead to a very large flux concentrated at the punctures which we can offset by setting $a_{i} = - s$. The parameters in the problem become

\begin{eqnarray}
	c_{H} = 1 - 3 \nu^{-1} + 6 k - 3 \nu k^{2}, \quad \mu_{H}  = \frac{1}{2}  - \nu \frac{k}{2}, \quad h_{a} =  \frac{k}{4}\left( \nu \frac{k}{2} - 1\right)\\ 
	\frac{c_{H}}{24}+ h_{a} = \frac{1 - 3 \nu^{-1}}{24}
\end{eqnarray}

The quasi-hole statistics part of $\Omega_{a}$ would no longer show up in the adiabatic connection, and we would be left with the generating functional



\begin{equation}
	\log \mathcal{Z} = \frac{(3 \nu k^{2} - 6 k)}{24\pi} S_{L} - \frac{(1 - 3 \nu^{-1})}{24\pi} \mathscr{S}_{L},
\end{equation}
with the Berry curvature 

\begin{equation}
	\Omega = \frac{3 \nu k^{2} - 6 k}{12\pi} \omega_{WP} - \frac{(1 - 3 \nu^{-1})}{12\pi} \left( \omega_{WP} - \frac{4\pi^{2}}{3} \omega_{TZ}\right).
\end{equation}
Specifying for $\nu = 1$, $k = B \lambda/2$ and $\lambda = 4$ for curvature $R = -2$, we find

\begin{equation}
	\Omega = -2 \left[ -\frac{6 B^{2} - 6 B + 1}{12\pi} \omega_{WP} + \frac{\pi}{9} \omega_{TZ}\right].
\end{equation}

The expression in brackets was the result for the adiabatic curvature quoted in \cite{Levay1999b}. We suspect the factor of $-2$ reflects a slight difference in conventions. 

Of course, the problem we solve is slightly different from computing the variation of the determinant of $L^{2}$ norms of holomorphic sections, which is essentially what is computed in \cite{Levay1999b}. The main reason is that we have concentrated on small spin states, and thus have not allowed for sections that have poles at the punctures. The discrepancy can be seen simply by looking at the degeneracy of the LLL, which for large spin must read $N = (1 - 2s) +n (s - 1) $.
However, by merely substituting $s \to k/2$ and $a \to  -s$ and $k \to 0$, we have	$N = (1 - 2s) - n a = (1 - 2s) + n s$ which has $n$ too many states. Thus it appears that our problem can be mapped to a relative of Ref.\cite{Levay1999b} which has fluxes $a = 1$ piercing each puncture.

\subsubsection{Gravitational Anomaly}

The gravitational anomaly or chiral central charge, given by $c = c_{H} + 12 \nu^{-1} \mu_{H}^{2}$ can be isolated by adjusting the fluxes to cancel the total charge $M_{0} = \mu_{H} - \nu a = 0$ at each cusp. This requires setting $a = \nu^{-1} \mu_{H}$, and gives for the Berry curvature


\begin{equation}
	\Omega_{g} = - \frac{c_{H}}{12\pi} \omega_{WP}    +c\frac{ \pi }{9} \omega_{TZ} - \sum_{i<j} 2\pi \nu^{-1} \mu_{H}^{2} \delta(p_i - p_j) \left( \frac{i}{2} dp_i \wedge d\bar{p}_{j}\right)
\end{equation}
Thus, the chiral central charge remains the only coefficient controlling the appearance of the T-Z metric.

A more obvious possibility is to take all the fluxes to zero, and adjust $\mu_{H} =0$ by setting the spin to $s = \nu^{-1}/2$. This clearly leads to

\begin{equation}
\Omega = - \frac{c}{12\pi} \left( \omega_{WP} - \frac{4\pi^{2}}{3} \omega_{TZ}\right),
\end{equation}
in which case the chiral central charge is the whole effect. 

\section{Exact Results for Free Fermions}\label{IQHexact}

Here we compare some of our results against exact results obtained from the IQH states on a punctured disk. In this setting, the mean density is proportional to the Bergman kernel on the diagonal, which was recently studied on a punctured Riemann surface in the recent papers \cite{Auvray2016, Sun2016}. We rephrase some of their results in the language of physics, including spin $s$ and flux $a$, and with an emphasis on the moments. 

On a punctured unit disk, $\Sigma  = \mathbb{D}/\{ 0\}$, the constant negative curvature metric is the Poincar\'e metric

\begin{equation}
	dV = \frac{\lambda |dz|^{2}}{|z|^{2} \log^{2}|z|^{2}} 
\end{equation}
with scalar curvature $R = - 8/\lambda$ on $\Sigma$ and infinite volume. Recall the notation $k = -4B/R =  B\lambda / 2$. The normalized single-electron eigenstates are

\begin{equation}
|\psi_{n}(z)|^{2} = \frac{(n+a + j)^{k+2s - 1}}{\pi \lambda  \Gamma(k + 2s - 1)}  |z|^{2(n + s + a)} \left|\log |z|^{2}\right|^{k + 2s} 
\end{equation}
with $n + s + a >0$, and $n \in \mathbb{Z}$. There is a bit of a subtle point concerning the range of $n$. If we let $a = s = 0$, it turns out that $\psi_{0}$ is not normalizable. The problem here is that $\psi_{0}(0) = \infty$, so the wave function is not regular at the singularity. We are left with two options: regulate this divergence or throw out the $n = 0$ state. We choose to keep $n = 0$ but ensure the vanishing of the wave function at the origin by including either positive spin $s$, positive flux $a$, or positive combination $a+s$. When $a+s$ is large enough to allow for $n<0$, we can simply redefine $n' = n + \left\lfloor{a+s}\right\rfloor$ where $\lfloor{a+s}\rfloor$ is the integer part of $a+s$. This means that the mean density

\begin{equation}
	\langle \rho(z) \rangle = \sum_{n = n_{min}}^{\infty} |\psi_{n}(z)|^{2}
\end{equation}
is periodic in $a$, in the sense that $\rho_{a+ m} = \rho_{a}$ for integer $m$. We will therefore simply use $n_{min} = 0$ with positive $a+s$. The mean density can then be written succinctly as

\begin{equation}
	\langle \rho(x) \rangle = \frac{x^{a+s} | \log x|^{k+2s}}{\pi \lambda \Gamma(k+2s-1)} \Phi(x, 1 - k - 2s, a+s)
\end{equation}
where $x = |z|^{2}$ and $$\Phi(x,s,p) = \sum_{n= 0}^{\infty} \frac{x^{n}}{ (n +p )^{s}}$$ is the Lerch transcendent. The moments now involve integrals over this special function.

The density achieves the constant value away from the cusp

\begin{equation}
\rho \to \bar\rho =  \frac{k - 1 + 2s}{ \pi \lambda }  = \frac{B}{2\pi}   + \frac{(1 - 2s)}{8\pi } R	
\end{equation}

The charge moment can be computed without the $\epsilon \to 0$ limit in the definition \eq{charge}.

\begin{prop}
	
The charge at the cusp singularity in the IQH state is

\begin{equation}\label{Qexact}
	M_{0} = \int_{\mathbb{D}} \left( \langle \rho(z)\rangle  - \bar{\rho}\right) dV = - a + \frac{1}{2}(  1- 2s)
\end{equation}

\end{prop}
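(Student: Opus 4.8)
The plan is to exploit the exactly solvable structure: $\langle\rho\rangle=\sum_{n\ge0}|\psi_n|^2$ is a sum of $L^2$-normalized single-particle densities, so I would regularize the individually divergent integrals of $\langle\rho\rangle$ and $\bar\rho$ by a common cutoff near the hyperbolic boundary $|z|=1$ and compute their (finite) difference. Writing $\alpha=k+2s-1$ and $c=a+s$ (so that $j=s$ and $\bar\rho=\alpha/(\pi\lambda)$), I note that the cusp sits at $z=0$ while the volume diverges only as $|z|\to1$. I would therefore integrate over the truncated disk $\{|z|^2<x_0\}$ and send $x_0\to1^-$ at the end; in the radial variable $u=-\log|z|^2$ this is the region $u>u_0$ with $u_0=-\log x_0\to0^+$.

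First I would record the cut-off single-state integral. Passing to polar coordinates collapses the angular integration, and substituting $u=-\log|z|^2$ (together with $t=(n+c)u$) turns the powers of $|z|$ and $\log|z|^2$ into a tail of the Gamma integral,
\[ \int_{|z|^2<x_0}|\psi_n|^2\,dV=\frac{(n+c)^\alpha}{\Gamma(\alpha)}\int_{u_0}^\infty e^{-(n+c)u}u^{\alpha-1}\,du=\frac{\Gamma(\alpha,(n+c)u_0)}{\Gamma(\alpha)}, \]
where $\Gamma(\alpha,y)$ is the upper incomplete Gamma function; at $u_0=0$ this equals $1$, recovering the normalization. The background contributes $\int_{|z|^2<x_0}\bar\rho\,dV=\alpha/u_0$, so that
\[ M_0=\lim_{u_0\to0^+}\left[\frac{1}{\Gamma(\alpha)}\sum_{n=0}^\infty\Gamma(\alpha,(n+c)u_0)-\frac{\alpha}{u_0}\right]. \]

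The crux is the small-$u_0$ behavior of $S(u_0)=\sum_{n\ge0}\Gamma(\alpha,(n+c)u_0)$, which I would extract by Euler--Maclaurin applied to $f(t)=\Gamma(\alpha,(t+c)u_0)$. After the substitution $v=(t+c)u_0$ and one integration by parts (using $\frac{d}{dv}\Gamma(\alpha,v)=-v^{\alpha-1}e^{-v}$), the integral term evaluates to $u_0^{-1}\bigl[\Gamma(\alpha+1,cu_0)-cu_0\,\Gamma(\alpha,cu_0)\bigr]$, which expands as $\alpha\Gamma(\alpha)/u_0-c\,\Gamma(\alpha)+o(1)$ because $\gamma(\alpha+1,cu_0)=O(u_0^{\alpha+1})$. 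The boundary term $\frac{1}{2}f(0)=\frac{1}{2}\Gamma(\alpha,cu_0)\to\frac{1}{2}\Gamma(\alpha)$, while $f'(0)=-c^{\alpha-1}u_0^{\alpha}e^{-cu_0}\to0$ kills the first Euler--Maclaurin correction and likewise all higher ones. Hence $S(u_0)/\Gamma(\alpha)=\alpha/u_0-c+\frac{1}{2}+o(1)$; the divergence cancels the background and
\[ M_0=-c+\frac{1}{2}=-a+\frac{1}{2}(1-2s), \]
as claimed.

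The main obstacle is precisely this last asymptotic analysis: justifying that the Euler--Maclaurin remainder is genuinely $o(1)$ as $u_0\to0$ and that exactly the integral and the single $\frac{1}{2}f(0)$ term survive. The delicacy is that each summand tends to $1$ as $u_0\to0$, so the divergence is a purely cumulative effect rather than a local one; I would control the remainder by the bound $\int_0^\infty|f''|\,dt=O(u_0)$, or, more transparently, invoke the Abel--Plana formula, whose correction integral is manifestly $O(u_0^{\alpha})$. A secondary check worth making is that the unregularized integral $\int_{\mathbb{D}}(\langle\rho\rangle-\bar\rho)\,dV$ converges, which follows from the exponential approach of $\langle\rho\rangle$ to $\bar\rho$ away from the cusp and legitimizes identifying the cutoff limit with the integral written in the statement.
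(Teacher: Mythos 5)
Your argument is correct, but it takes a genuinely different route from the paper. The paper's proof fixes the radial variable $y=-\log|z|^2$ once and for all and then derives a \emph{recursion in} $p=k+2s$: using the identity $\partial_y\bigl(e^{-qy}\Phi(e^{-y},1-p,q)\bigr)=-e^{-qy}\Phi(e^{-y},-p,q)$ and one integration by parts, it shows $M_0(p)=M_0(p+1)$, and then evaluates the base case $p=2$ where the Lerch sum collapses to a derivative of a geometric series and the cutoff limit is elementary. You instead compute directly at fixed $k$: each single-particle norm over the truncated disk becomes an upper incomplete Gamma function $\Gamma(\alpha,(n+c)u_0)/\Gamma(\alpha)$, and the cumulative $1/u_0$ divergence plus the finite part $-c+\tfrac12$ are extracted from $\sum_n\Gamma(\alpha,(n+c)u_0)$ by Euler--Maclaurin. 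Your identification of where the delicacy lies is exactly right --- the divergence is cumulative, not local, so one must check that only the integral and the single $\tfrac12 f(0)$ term survive --- and your remainder bound $\int_0^\infty|f''|\,dt=O(u_0)$ (valid since $\alpha=k+2s-1\gg1$, so $v^{\alpha-1}e^{-v}$ has bounded variation on $[0,\infty)$) closes that gap. What each approach buys: the paper's recursion is computationally lighter and sidesteps all asymptotic analysis, but it steps through integer values of $p$ and leans on an explicitly summable base case; your method is uniform in $k$ (indeed it works for any real $\alpha>1$) and makes the mechanism behind the answer transparent --- the $-c$ comes from the shift of the lattice $\{n+c\}$ relative to the integral, and the $+\tfrac12$ is the universal Euler--Maclaurin endpoint term. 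One small imprecision: the approach of $\langle\rho\rangle$ to $\bar\rho$ near the hyperbolic boundary $u\to0$ is polynomial, $O(u^{\alpha+1})$, not exponential (the exponential decay of correlations is a statement about distance from the \emph{cusp}, which here sits at $u=\infty$); since $\alpha>1$ this still gives absolute convergence of $\int(\langle\rho\rangle-\bar\rho)\,dV$ against $du/u^2$, so your conclusion stands.
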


\begin{proof}
 To prove the charge moment requires a recursion relation relating $M_{0}$ at different $k$. For shorthand, we write $p = k+2s$, and $q = a+s$. Under a change of variable $y = - \log x$ (which maps the disk to the upper half plane), the integral \eq{Qexact} becomes 
\begin{equation}
	M_{0}(p) = (p - 1) \int_{0}^{\infty} \frac{dy}{y^{2}} \left( \frac{y^{p}}{\Gamma(p)} e^{ - q y} \Phi(e^{ - y}, 1 - p, q) - 1\right).
\end{equation}
Next we utilize the identity

\begin{equation}\label{lerchidentity}
	\partial_{y} \left(e^{ - q y} \Phi(e^{ - y}, 1 - p, q)\right) = - e^{ - q y} \Phi(e^{ - y}, 1 - (p+1), q),
\end{equation}
to relate the density, and consequently the moment $M_{0}(p)$, at $p$ to that at $p+1$. An integration by parts will allow us to hit this special function with a derivative, resulting in the simple relation

\begin{equation}
	M_{0}(p) =  M_{0}(p+1)
\end{equation}
Remarkably, the sum rule is valid for all (integer) $p$, so we may solve it for the simplest case at $p= 2$ for which the density $\rho_{p}$ is

\begin{equation}
	\rho_{2}(y) = \frac{ y^{2}}{\pi \lambda } \sum_{n} (n+q)e^{-(n+q)y} = -\frac{y^{2}}{\pi \lambda} \partial_{y}\left( \frac{e^{ - p y}}{1 - e^{-y}}\right).
\end{equation}


Computing the charge for $p = 2$ then gives
\begin{eqnarray}
	M_{0}(2) 
	& = \lim_{\epsilon \to 0}\int_{\epsilon}^{\infty} \frac{dy}{y^{2}} \left( -y^{2} \partial_{y} \left( \frac{e^{ - q y}}{1 - e^{-y}}\right) - 1\right)\\
	& =\lim_{\epsilon \to 0} \left[ \frac{e^{ - q \epsilon}}{1 - e^{ - \epsilon}} - \frac{1}{\epsilon}\right] = \frac{1}{2} - q.
\end{eqnarray}

Therefore $M_{0}(p) = M_{0}(2) = \frac{1}{2} - q$.

\end{proof}
 \begin{figure}[htbp]
   \includegraphics[scale=1.]{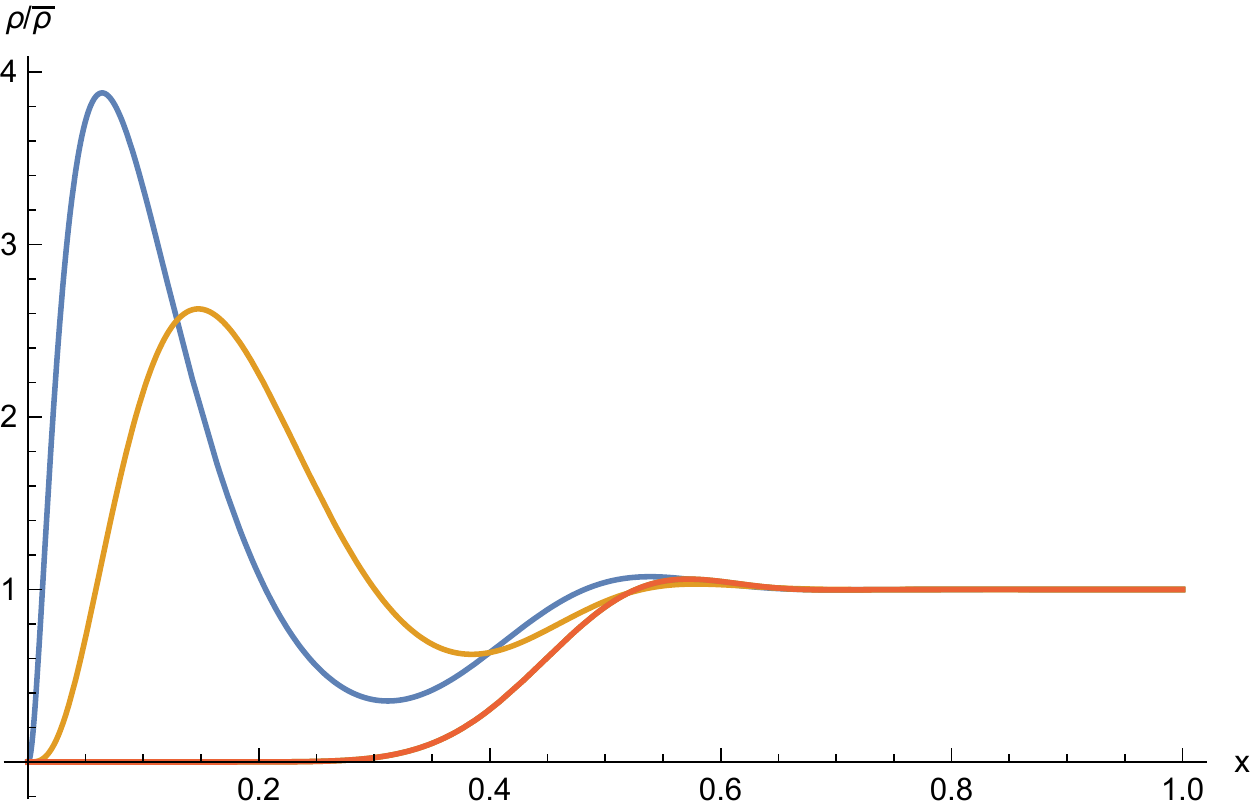}
 %
   \caption{  Normalized density in rescaled coordinates $\rho(x^{k+2s+1})$ \cite{Auvray2016} for $s = 1/3$, $M_{0} = +1/6$ (blue), $s = 1/2$, $M_{0} = 0$ (yellow) and $s = 3/2$, $M_{0} = -1$ (red). }
      \label{Fig2}
   \end{figure}
The dilatation moment for the dimension can be computed as well. We present it as another proposition. 

\begin{prop}
	
The dimension of the cusp singularity in the IQH state is

\begin{equation}\label{Lp}
	M_{1} =  -\int_{\mathbb{D}} \frac{(k+2s-2)}{\log |z|^{2}}  \left( \langle \rho(z)\rangle  - \bar{\rho}\right) dV_{z} = \frac{c_{H}}{24} + h_{a}
\end{equation}
where $c_{H} = - 2 + 6 s - 6 s^{2}$ and $h_{a} = (a/2)(1 - 2s - a)$. 
\end{prop}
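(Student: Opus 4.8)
The plan is to follow the proof of the charge moment \eqref{Qexact} line for line: reduce the two–dimensional integral to a one–dimensional one, use the Lerch identity \eqref{lerchidentity} to generate a recursion in $p=k+2s$, and pin the value down at an elementary base case. Write $q=a+s$ and $x=|z|^2$. Radial symmetry of the integrand lets me do the angular integral, and the substitution $y=-\log x$ (under which $\log|z|^2=-y$ and $dV\to\pi\lambda\,y^{-2}\,dy$) turns \eqref{Lp} into
\[
M_1(p)=(p-1)(p-2)\int_0^\infty\frac{dy}{y^3}\,\bigl(g_p(y)-1\bigr),\qquad g_p(y)=\frac{y^p}{\Gamma(p)}\,e^{-qy}\,\Phi(e^{-y},1-p,q),
\]
where I used $\pi\lambda\,(\langle\rho\rangle-\bar\rho)=(p-1)(g_p-1)$. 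The only structural difference from the charge case is the weight $y^{-3}$ in place of $y^{-2}$, and this is precisely what makes the endpoint analysis more delicate.

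The cleanest route to the general-$p$ answer is to recognize the reduced integral as a Mellin transform. Expanding $\Phi$ and integrating term by term gives, for $\operatorname{Re}\sigma>0$, $\int_0^\infty y^{\sigma-1}g_p(y)\,dy=\tfrac{\Gamma(\sigma+p)}{\Gamma(p)}\,\zeta(\sigma+1,q)$, with $\zeta(\cdot,q)$ the Hurwitz zeta function. Since $g_p-1=O(y^p)$ as $y\to0$ and $g_p-1\to-1$ as $y\to\infty$, the transform of $g_p-1$ converges in the strip $-p<\operatorname{Re}\sigma<0$ and, by the standard analytic continuation (the subtracted constant only shifts the pole at $\sigma=0$), it equals the meromorphic continuation of $\tfrac{\Gamma(\sigma+p)}{\Gamma(p)}\zeta(\sigma+1,q)$ into that strip. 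Evaluating at $\sigma=-2$ (a regular point once $p>2$) yields $\int_0^\infty y^{-3}(g_p-1)\,dy=\tfrac{\Gamma(p-2)}{\Gamma(p)}\zeta(-1,q)=\tfrac{1}{(p-1)(p-2)}\,\zeta(-1,q)$, hence $M_1=\zeta(-1,q)$, manifestly independent of $k$ as the claimed formula demands.

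To keep the argument parallel to the charge proof I would also present the elementary recursion. Differentiating $g_p$ and using \eqref{lerchidentity} gives the first–order relation $g_{p+1}=g_p-\tfrac{y}{p}\,g_p'$; inserting this into $M_1(p+1)$ and integrating the $g_p'$ term by parts converts it into $2\int_0^\infty y^{-3}(g_p-1)\,dy$, the boundary terms $[y^{-2}(g_p-1)]_0^\infty$ vanishing for $p>2$ because $g_p-1=O(y^p)$. The prefactors combine as $p(p-1)-2(p-1)=(p-1)(p-2)$, so $M_1(p+1)=M_1(p)$. For the base case I cannot use $p=2$ (the $y^{-3}$ weight makes it logarithmically divergent there), so I take $p=3$, where $g_3=\tfrac{y^3}{2}\,\partial_y^2 G$ with $G(y)=e^{-qy}/(1-e^{-y})$ is elementary; writing $M_1(3)=\int_0^\infty\partial_y^2G\,dy-2\int_0^\infty y^{-3}\,dy$ with a cutoff at $y=\epsilon$, the $1/\epsilon^2$ from the boundary evaluation of $\partial_yG$ cancels the one from $-2\int_\epsilon^\infty y^{-3}dy$ (exactly as the $1/\epsilon$ cancelled in the $M_0(2)$ computation), and the Laurent expansion $G=\tfrac1y+(\tfrac12-q)+(\tfrac1{12}-\tfrac q2+\tfrac{q^2}{2})\,y+\cdots$ leaves $M_1(3)=\tfrac{q(1-q)}{2}-\tfrac1{12}=\zeta(-1,q)$. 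Finally I record the purely algebraic identity $\zeta(-1,a+s)=\tfrac{(a+s)(1-(a+s))}{2}-\tfrac1{12}=\tfrac{c_H}{24}+h_a$, using $\mu_H=(1-2s)/2$, $c_H=1-12\mu_H^2$, and $h_a=\tfrac a2(1-2s-a)$, which finishes the proof.

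The main obstacle is the endpoint analysis at $y\to0$, i.e.\ $|z|\to1$: the extra factor of $y^{-1}$ relative to the charge case makes the integral genuinely more singular, so both the vanishing of the recursion's boundary terms and the cancellation of the $1/\epsilon^2$ divergences in the base case hinge on the precise subleading small-$y$ behavior of $g_p$. The Hurwitz–zeta (equivalently Euler–Maclaurin) expansion $g_p-1=\tfrac{1}{\Gamma(p)}\sum_{m\ge0}\tfrac{(-1)^m}{m!}\,\zeta(1-p-m,q)\,y^{p+m}$ is what supplies this control and, incidentally, is also what makes the Mellin route self-contained—so I would use it to justify the continuation in the clean argument and to verify the vanishing boundary contributions in the recursive one.
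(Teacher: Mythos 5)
Your proposal is correct, and its second route (the first--order relation $g_{p+1}=g_p-\tfrac{y}{p}g_p'$ from the Lerch identity, integration by parts to get $M_1(p+1)=M_1(p)$, and the base case $M_1(3)$ with the $1/\epsilon^2$ cancellation and the Laurent expansion of $e^{-qy}/(1-e^{-y})$) is essentially the paper's own proof, which also works at $p=3$ and finds $M_1(3)=-\tfrac{1}{12}(1-6q+6q^2)$. What you add beyond the paper is genuinely useful on two counts. First, the Mellin-transform route: recognizing $\int_0^\infty y^{\sigma-1}g_p\,dy=\tfrac{\Gamma(\sigma+p)}{\Gamma(p)}\zeta(\sigma+1,q)$ and evaluating the subtracted transform at $\sigma=-2$ gives $M_1=\zeta(-1,q)$ for all $p>2$ in one stroke, making the $k$-independence manifest rather than an output of a recursion, and it generalizes immediately to all higher moments $M_n=\tfrac{\Gamma(p-1)}{\Gamma(p-1-n)}\!\int y^{-n-1}(g_p-1)\,dy$ at $\sigma=-n$ (the paper instead packages these into the generating function $F_q$). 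Second, you actually justify the step the paper leaves implicit: the vanishing of the boundary terms $[y^{-2}(g_p-1)]_0^\infty$ and the convergence of the unregularized integral both require $g_p-1=O(y^p)$ as $y\to0$, which your Hurwitz-zeta expansion supplies. One remark on the target identity: your closing algebra uses $c_H=1-12\mu_H^2=-2+12s-12s^2$, which is what $\zeta(-1,a+s)=\tfrac{c_H}{24}+h_a$ actually requires and is consistent with the paper's general definition of $c_H$ and with the computed $M_1(3)$; the expression $c_H=-2+6s-6s^2$ printed in the proposition statement is inconsistent with both and appears to be a typo, so your version is the one to trust.
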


\begin{proof}

We proceed as with the charge moment to obtain a recursion relation for $M_{1}(p)$. Using the same coordinate transformation as above, the integral for the dimension is written

\begin{equation}
	M_{1}(p) =  (p - 2) (p-1)  \int_{0}^{\infty} \frac{dy}{y^{3}} \left( \frac{y^{p}}{\Gamma(p)} e^{ - q y} \Phi(e^{ - y}, 1 - p, q) - 1\right)
\end{equation}
An integration by parts utilizing \eq{lerchidentity} gives the relation

\begin{equation}\label{Lrec}
	M_{1}(p) =  M_{1}(p+1) 
\end{equation}
So that once again $M_{1}(p)$ as defined is independent of $p$. We choose for the base case $M_{1}(3)$, and use

\begin{equation}
	e^{ - q y} \Phi(e^{ - y}, -2, q) = \sum_{n = 0}^{\infty} (n + q)^{2} e^{ - (q + n) y}   = \partial_{y}^{2} \left( \frac{e^{ - q y}}{1 - e^{ - y}}\right)
\end{equation}
Then the integral for $M_{1}(3)$ is easily solved to give

\begin{equation}
M_{1}(3) = -\lim_{\epsilon \to 0} \left[ \partial_{y} \left( \frac{e^{-q y}}{1 - e^{ - y}}\right) + \frac{1}{ \epsilon^{2}} \right]	 = -\frac{1}{12}(1 - 6 q + 6 q^{2})
\end{equation}
This can be unpackaged using $q = a + s$ to give the stated proposition \eq{Lp}.

\end{proof}

In this same manner, we can find a generating function for the $n^{th}$ moment, defined to be

\begin{equation}
	M_{n} =  \frac{\Gamma(k+2s - 1)}{\Gamma(k+2s - n - 1)}  \int_{\mathbb{D}} \frac{1}{( - \log |z|^{2})^{n}} \left( \langle \rho(z)\rangle - \bar{\rho}\right) dV_{z}
\end{equation}
This normalization ensures that $M_{n}$ is independent of $k$. The generating function for the moments is given by 

\begin{equation}
F_{q}(y) = \frac{e^ { - qy }}{1 - e^{ - y}} - \frac{1}{y}	, 
\end{equation}
and let us denote its derivatives $\partial_{y}^{n}F_{q}(y) = F_{q}^{(n)}(y)$. The expression for the moments comes from the coefficients of the series expansion of $F_{q}$ around $y = 0$. Specifically, we find the simple formula 
\begin{equation}
	M_{n} = (-1)^{n} F_{q}^{(n)}(0). 
\end{equation}



\section{Conclusion}

Our main result is the Berry curvature on the moduli space of punctured Riemann spheres, which is finite in the limit of large $N$ and is controlled by the gravitational anomaly. Our results are strictly proved for Laughlin wave functions, but the question of how they can be generalized is equivalent to the question of what values the geometric response coefficients $\mu_{H}$ and $c_{H}$ take for different FQH states.  
It remains to understand the universality of the topological invariant $c_{H}$ constructed from this Berry curvature. We have related this coefficient to an integrated moment of mean density, which would provide another route to probe the central charge and ascertain universality. 


Our results give a novel physical interpretation of the Takhtajan-Zograf metric, and call for a more detailed understanding of the T-Z K\"ahler form. In particular, a detailed understanding of these potentials would shed light on the connection of the FQH state to Liouville quantum gravity. 



\appendix

\section{}\label{App_asymp}
\section*{Refined Asymptotics Near Punctures}

In this appendix, we list some refined asymptotics which are necessary for some of the arguments made in Secs. (\ref{BerryCurvatureProof}) and (\ref{WardIdentity}). All of the asymptotic formula for the metric and local K\"ahler potential follow from the plugging the expansion (\ref{dev_map}) into the expressions \eq{uniform_metric} and \eq{uniform_kahler}. We have near the punctures, 

\begin{eqnarray}
	\phi =& - \log |\zeta_i|^{2} - 2 \log \left | \log | \zeta_i / a_{i}(1)|^{2} \right| \nonumber \\
	 &+ 2 {\rm Re}(\gamma_{i} \zeta_{i}) \left( 1 - \frac{1}{\log |\zeta_{i}/a_{i}|}\right)  + O(\zeta_{i}^{2}), \quad z \to p_i\label{cusprefined}\\
	K =& - \lambda \log \left | \log | \zeta_i / a_{i}(1)|^{2} \right|  - \frac{ \lambda  {\rm Re}(\gamma_{i} \zeta_{i})}{\log |\zeta_{i}/a_{i}|}  + O(\zeta_{i}^{2} \log^{2}|\zeta_i|), \quad z \to p_i, \label{kahlerrefined}
\end{eqnarray}
and at infinity, 
\begin{eqnarray}\label{cusprefinedinf}
	\phi   =&- \log |z|^{2} - 2 \log \log |z/a_{n}(-1)|^{2} \nonumber\\
	&+ 2 {\rm Re}(\gamma_{n}/z) \left( 1 +  \frac{1}{\log |z/a_{n}(-1)|}\right) + O(z^{-2}), \quad z \to p_n,\\
	K   =&- \frac{V}{\pi}\log |z|^{2} - \lambda  \log \log |z/a_{n}(-1)|^{2} \nonumber\\
	&+  \frac{\lambda  {\rm Re}(\gamma_{n}/z)}{\log |z/a_{n}(-1)|}
	 + O(z^{-2}), \quad z \to p_n.\label{kahlerrefinedinf}
\end{eqnarray}

For the first derivative near the punctures, we have
\begin{eqnarray}\label{dcusprefined}
	\partial_{z} \phi &= - \frac{1}{\zeta_{i}} + \gamma_{i} - \frac{1}{\zeta_{i} \log |\zeta_{i}/a_{i}(1)|} \\
	&- \frac{1}{\zeta_{i} \log |\zeta_{i}/a_{i}(1)|}\left( \gamma_{i} \zeta_{i}  - \frac{{\rm Re}(\gamma_{i} \zeta_i )}{\log |\zeta_i / a_i (1)|} \right) + O(\zeta_{i}) , \, \, z \to p_i  \\
	\partial_{z} K & = - \frac{\lambda }{ 2\zeta_{i} \log |\zeta/a_{i}(1)|} \left(1 + \gamma_{i} \zeta_{i}  - \frac{{\rm Re}(\gamma_{i} \zeta_i )}{\log |\zeta_i / a_i (1)|} + O(\zeta_{i}^{2}) \right) \,\, z \to p_i ,\label{dcusprefined2}
\end{eqnarray}
while the asymptote at infinity is
\begin{eqnarray}\label{dcuspinf}
	\partial_{z} \phi &= - \frac{1}{z} - \frac{1}{z \log |z/a_{n}(-1)|} - \frac{\gamma_{n}}{z^{2}}  + O(z^{-3}), \quad z \to p_n\\
	\partial_{z} K & =  \frac{\lambda (n-2)}{2 z} - \frac{\lambda }{2z \log |z/a_{n}(-1)|} + O(z^{-2} \log^{-2}|z|)\label{dkahlerinf}
\end{eqnarray}
Since these asymptotics appear inside integrals, it is important not only to keep track of the order of vanishing (given by the magnitude), but also the order of the harmonic (given by the argument of the complex number). 

\section{}\label{App_inf}
\section*{Puncture at infinity}

The particular normalization we choose for the punctures $X = \{p_1, ..., p_{n-3},0,1,\infty\}$ produces some technical complications around the point at infinity $p_n$. The idea here is that $p_n$ is not in any way different from the other points, but our choice of coordinates and divisor $X$ make it distinguished. This requires treating infinity in many of the formulas separately. Much of the technical details in this appendix do not pertain directly to our main result, but we feel it would be negligent to leave them out of the paper entirely.

This appendix is dedicated to verifying the fact that the puncture at infinity is not an exceptional point. To accomplish this, we will compute the local moments \eq{moments} around $p_n$ and show that they conform to the formula given by Prop. \ref{momentsprop}.

\paragraph{Generating function at infinity}

We begin by listing the asymptotic behavior of the generating function as $z \to p_n$. As one might guess, this will be used later on to prove the local moments around $p_n$. This requires using the asymptotes \eq{dcuspinf} and \eq{dkahlerinf} in the formula for the generating function (\ref{T1line1})-(\ref{T1line3}), to find up to order $O(1)$

\begin{eqnarray}\label{T1inf}
\mathcal{T}(z) = \frac{\mu_{H} B}{4} \left( - \frac{\lambda (n-2)^{2}}{2} \frac{1}{z^{2}} -\frac{\lambda (n-2) \gamma_{n}}{z^{3}} + ...\right) \\
 + (s - 1) \mu_{H} \left( \frac{n-1}{z^{2}} + \frac{2\gamma_{n}}{z^{3}} \right)  - \frac{c_{H}}{24} \left( \frac{1}{ z^{2}} + \frac{2\gamma_{n}}{z^{3}} + ... \right)	\\
+\frac{1}{2\beta} \frac{|{\bf a}|( |{\bf a}| + 2 - \beta)}{z^{2}} + \frac{2 ({\bf p} \cdot {\bf a}) \left( |{\bf a}| + 2 - \beta\right)}{2 \beta z^{3}}\\
   - \sum_{i}\nu a_{i}\left( \frac{1}{z} + \frac{p_i}{z^{2}} + \frac{p_{i}^{2}}{z^{3}} + ...\right) \langle \partial \varphi(p_i) \rangle + O(z^{-4}) \label{T1inf2}
\end{eqnarray}

\paragraph{Moments around Infinity}
Taking $z \to \infty$, the generating function behaves

\begin{eqnarray*}
	\mathcal{T}(z) &= \frac{1}{z} \int_{|\xi|<|z|}\left( 1+ \frac{\xi}{z} + \frac{\xi^{2}}{z^{2}} + ... \right) \partial W \left( \langle \rho \rangle - \bar{\rho}\right) dV \\
	& = \frac{1}{z} \int_{\Sigma }\left( 1+ \frac{\xi}{z} + \frac{\xi^{2}}{z^{2}} + ... \right) \partial W \left( \langle \rho \rangle - \bar{\rho}\right) dV   \\
	& -\frac{1}{z} \int_{|\xi|>|z|}\left( 1+ \frac{\xi}{z} + \frac{\xi^{2}}{z^{2}} + ... \right) \partial W \left( \langle \rho \rangle - \bar{\rho}\right) dV 
\end{eqnarray*}
In the second line, the domain of integration is the entire punctured sphere. These will involve global sum rules (as opposed to the local sum rules (\ref{sumrules}). Here we focus on the global dilatation sum rule, since we have at hand an exact formula. The equation which we wish to solve for the local moments around infinity will then

\begin{equation}\label{sumruleinf}
	\frac{1}{2\pi i} \oint\,  z \mathcal{T}(z) dz = \int_{\Sigma} \xi \partial_{\xi} W \left( \langle \rho \rangle - \bar{\rho}\right) dV - \int_{|\xi|> |z|} \xi \partial_{\xi} W \left( \langle \rho \rangle  - \bar{\rho}\right) dV
\end{equation}

The last integral contains the essential information about the local moments. To see this, we must utilize the large $z$ asymptotics in the integrand

\begin{eqnarray*}
	\partial_{z} W &= - \frac{B}{2} \partial_{z} K + (1 - s) \partial_{z} \phi\\
	& = \left(k + 2s - 2\right) \frac{1}{z \log |z/a_{n}(-1)|^{2}}\left( 1 + \frac{\gamma_{n}}{z} + \frac{{\rm Re}(\gamma_{n}/z)}{\log |z/a_{n}(-1)|} + ... \right)\\
	& - \frac{N_{\phi}}{ z} + (1 - s) \left( - \frac{1}{z} - \frac{\gamma_{n}}{z^{2}} + ...\right), \quad z \to p_n
\end{eqnarray*}

As well as for the metric

\begin{eqnarray*}
	e^{\phi} &= \frac{\lambda }{|z|^{2} \log^{2}|z/a_{n}(-1)|^{2}} \left( 1 + 2 {\rm Re}(\gamma_{n}/z) \left( 1 + \frac{1}{\log|z/a_{n}(-1)|} \right)+...\right]\\
	\partial_{z}\phi &= - \frac{1}{z} - \frac{\gamma_{n}}{z^{2}} - \frac{1}{z \log |z/a_{n}(-1)|}\left( 1 + \frac{\gamma_{n}}{z} + \frac{{\rm Re}(\gamma_{n}/z)}{\log |z/a_{n}(-1)|} + ... \right)\\
	\partial_{z} K & = \frac{\lambda (n-2)}{2 z} -\frac{\lambda}{2} \frac{1}{z \log |z/a_{n}(-1)|}\left( 1 + \frac{\gamma_{n}}{z} + \frac{{\rm Re}(\gamma_{n}/z)}{\log |z/a_{n}(-1)|} + ... \right)
\end{eqnarray*}

Then in the integrand, we transform to local coordinates $\zeta = 1/\xi$, in which the density is expressed using the asymptotic cusp density $\rho_{n}(\zeta)  = \langle \rho(1/\zeta) \rangle - \bar{\rho} $, and we find the sum rule related to the moments via
\begin{eqnarray}
	-\int_{|\xi|> z} \xi \partial_{\xi} W \left( \langle \rho \rangle - \bar{\rho} \right) dV_{\xi} &= \lim_{z \to \infty} \int_{|\zeta| <1/|z|}  \left(  \frac{(k + 2s - 2)}{\log|\zeta a_{n}(-1)|^{2}}  + N_{\phi} +1- s \right) \rho_{n} dV_{\zeta}\nonumber\\ 
& = - M_{1}(p_n) +  \left( N_{\phi} + 1 - s\right)M_{0}(p_n) + O(k^{-1}). \label{sumrule_momentsinf}
\end{eqnarray}

As in Sec. (\ref{mom_from_sum}), the angular integration will kill many terms in the integral. This means Eq. (\ref{sumruleinf}) implies

\begin{equation}
M_{1}(p_n) - \left( N_{\phi} + 1 - s\right)M_{0}(p_n) = \frac{1}{2\pi i} \oint\,  z \mathcal{T}(z) dz - \int_{\Sigma} \xi \partial_{\xi} W \left( \langle \rho \rangle - \bar{\rho}\right) dV
\end{equation}

Now we proceed to compute the RHS. The first term is the generating function and comes from the Ward identity. From \eq{T1inf}-\eq{T1inf2}, it will be
\begin{eqnarray}
	\frac{1}{2\pi i} \oint z \mathcal{T} dz =& - \frac{\mu_{H} B \lambda (n-2)^{2}}{8}- \frac{c_{H}}{24}    + \frac{\nu}{2}|{\bf a}| ( |{\bf a}| - \beta  + 2s )\nonumber \\
	&+ (s - 1) \left( \mu_{H} (n-1) - \nu | {\bf a}|\right)- \sum_{i} \nu a_i p_i \partial \varphi(p_i)\label{zTinf}
\end{eqnarray}
where $| {\bf a}| \equiv \sum_{i = 1}^{n-1} a_{i}$. The second term follows from an exact global dilatation sum rule which reads

\begin{equation}
	0  = \frac{N}{2} \left( 2 + \beta N - \beta + 2 |{\bf a}|\right) + \int z \partial_{z} W \langle \rho \rangle dV_{z} + \sum_{i = 1}^{n-1} \nu a_i p_i \langle \partial \varphi(p_i) \rangle
\end{equation}
Subtracting the contribution from the mean bulk density $\bar{\rho}$

\begin{eqnarray}
	\bar{\rho} \int z \partial_{z} W dV &= - \bar{\rho} V \left( \frac{1}{2} N_{\phi} + 1 - s\right) \\
	&= - \left( \nu N_{\phi} + \mu_{H} (2-n )\right) \left( \frac{1}{2} N_{\phi} + 1 - s\right)
\end{eqnarray}
We get
\begin{eqnarray}
		\int_{\Sigma} z \partial_{z} W \left( \langle \rho \rangle - \bar{\rho}\right) dV_{z}  & = { - \frac{\mu_{H} N_{\phi}(n-2)}{2}} -  N_{\phi} \left(  \mu_{H} -  \nu a_{n}\right)\\
		& { + (s-1) \left( (n-1) \mu_{H}  -\nu |{\bf a}|  \right)} \\
	& +(s-1) \left(  \mu_{H}  - \nu a_{n} \right) + \frac{\nu}{2} ({ |{\bf a}|^{2}} - a_{n}^{2})\\
	& - \mu_{H}  ({ |{\bf a}|} - a_{n})\\
		& - \sum_{i = 1}^{n-1} \nu a_i p_i \langle \partial \varphi(p_i) \rangle
\end{eqnarray}
Combining this with \eq{zTinf} and using \eq{sumruleinf}  and \eq{sumrule_momentsinf} implies

\begin{equation}
M_{0}(p_n) = \mu_{H} - \nu a_{n}, \quad M_{1}(p_n) = \frac{c_{H}}{24} + \frac{a_{n}}{2} \left( 2\mu_{H} - \nu a_{n}\right),
\end{equation}
which is what we wanted to show.

\section{}\label{AppVolume}
\section*{Volumes of T-Z and W-P metrics}

In \cite{Can2015}, we found that the Berry phase of a flux with dimension $h_{a}$ moving in the presence of curvature will receive a gravitational A-B phase contribution 

\begin{equation}
	\gamma_{g-AB} = -h_{a} \frac{1}{2}\int_{C} R dV
\end{equation}

Encircling a cusp singularity would then imply a phase $\gamma_{g-AB} = -2\pi  h_{a}$. On the other hand, if we are braiding flux-curvature composites, then braiding $p_{i}$ around $p_{j}$ will pick up contributions from both ``spins" interacting with the other cusp, therefore $\gamma_{g-AB} = - 2\pi h_{a_{i}}  - 2\pi h_{a_{j}}$. 

Comparing this to the Berry curvature \eq{Omega_a}, we identify {\it by analogy}

\begin{equation}
	\frac{8\pi}{3} h_{a_{i}} \omega_{TZ, i}  \leftrightarrow - h_{a_{i}} \frac{1}{2} R dV
\end{equation}
This suggests $\int \omega_{TZ,i}  =-3/4 $ if the path of $p_i$ on $\mathcal{M}_{0,n}$ encloses $p_j$. In this appendix, we take this suggestion seriously and show that it implies the volume formula for $n=4$ \eq{volumeTZ} for the T-Z metric.

Let $p_{1} = p$, and take the spin $h_{a_1}$ around another puncture $p_{j}$. Our hypothesis is that the integrated T-Z metric over a region $R_{j}$ in the moduli space $\mathcal{M}_{0,4}$ enclosing $p_{j}$ is
\begin{eqnarray*}
	\int_{R_{j}} \omega_{TZ, 1} = -\frac{3}{4\pi} \int_{R_{j}} d^{2}p  \frac{\partial^{2} \log H_{1}}{\partial p \partial \bar{p}} =  -\frac{3}{4\pi} \frac{1}{2i} \oint_{\partial R_{j}} dp \partial_{p} \log H_{1} = -\frac{3}{4}\\
	\int_{R_{n}} \omega_{TZ, 1} = -\frac{3}{4\pi} \frac{1}{2i} \oint_{|p| \to \infty}  dp \partial_{p} \log H_{1}  =- \frac{3}{4}
\end{eqnarray*}
where $R_{n}$ is a region concentrated around the point at infinity. These integrals imply that the leading asymptotic behavior of the T-Z potential is
\begin{eqnarray}
	\partial_{p} \log H_{1}(p, p_{2}, p_{3}) &=  \frac{1}{p - p_j}, \quad p \to p_{j}\\
	& =  \frac{1}{p} , \quad p \to p_n
\end{eqnarray}
The Berry phase also picks up a contribution from spin $h_{a_j}$ moving around the curvature singularity at  $p_{1}$. This implies $\int_{C_{j}} \omega_{TZ, j} = -3/4$, and by similar reasoning above, the leading asymptotes 

\begin{eqnarray}
 \partial_{p} \log H_{j}(p, p_2, p_3) =  \frac{1}{p - p_{j}}, \quad p \to p_j , \quad j \ne 1, n\\
 \partial_{p} \log H_{n} =  -\frac{1}{p}, \quad p \to p_n.
\end{eqnarray}
Furthermore, these should be the only singularities of the T-Z potentials. Using these asymptotes, we may then compute the total volume of each cusp metric
\begin{eqnarray*}
	\int_{\mathcal{M}_{0,4}} \omega_{TZ, 1} =- \frac{3}{4\pi} \frac{1}{2i} \left[  \oint_{|p| \to \infty} dp \partial_{p} \log H_{1} - \sum_{j = 1}^{n-2} \oint_{\partial R_{j}} dp \partial_{p} \log H_{1} \right] = \frac{3}{4} (n-3 )\\
	\int_{\mathcal{M}_{0,4}} \omega_{TZ, j} = -\frac{3}{4\pi} ( - \frac{1}{2i}) \oint_{R_{j}}  dp \partial_{p} \log H_{j} = \frac{3}{4}, \quad j \ne 1\\
	\int_{M_{0,4}} \omega_{TZ, n} = - \frac{3}{4\pi} \frac{1}{2i} \oint_{R_{n}} dp \, \partial_{p} \log H_{n} = \frac{3}{4}
\end{eqnarray*}
where the last contour integral is taken clockwise over the $C_{n}$ whose radius is sent to infinity. Putting this together, we get

\begin{equation}
	\int_{\mathcal{M}_{0,4}}  \omega_{TZ} = \int_{\mathcal{M}_{0,4}} \sum_{j = 1}^{4} \omega_{TZ,j} = \frac{3}{4} ( 2n-4 ) = 3,
\end{equation}
which is in fact the expected result \eq{volumeTZ}. Remarkably, using the analogy to the gravitational Aharanov-Bohm effect, we were able to derive the total volume of the T-Z metric.

For completeness, we also derive the volume of the Weil-Petersson metric from the asymptotics of the accessory parameters, repeating the arguments of \cite{Zograf1990a} for the case of $n = 4$. The accessory parameters have the leading behavior as punctures merge 
\begin{eqnarray*}
	\gamma_{1}(p,p_{2}, p_{3}) &= - \frac{1}{2(p - p_j)}, \quad p \to p_j\\
	& = - \frac{1}{2p}, \quad p \to p_n
\end{eqnarray*}
The volume of the W-P metric is
\begin{eqnarray*}
	\int \omega_{WP} &= \int 2\pi  \frac{\partial \gamma_{1}}{\partial \bar{p}}  d^{2} p =  2\pi \frac{1}{2i} \left[ \oint_{C_{n}} \gamma_{1} dp - \oint_{j} \gamma_{1} dp \right]\\
	& =  2\pi^{2} \left( - \frac{1}{2} + \frac{1}{2} (n-2)\right) =   \pi^{2} (n-3)
\end{eqnarray*}
For general $n$, this would be related to the total phase under adiabatic transport of a single cusp while keeping all others fixed. 

\bigskip
{\it Acknowledgements} We are pleased to thank A. Abanov,  Y.-H. Chiu, S. Ganeshan, A. Gromov, S. Grushevsky, S. Klevtsov, M. Laskin, S. Sun, L. Takhtajan, A. Waldron, P. Wiegmann, for discussions. Special thanks to A. Abanov, S. Ganeshan, S. Klevtsov, and P. Wiegmann for very helpful comments on an early draft.

\section*{References}
\bibliographystyle{unsrt}

\bibliography{FQHcusp.bib}



\end{document}